\let\originalleft\left
\let\originalright\right
\renewcommand{\left}{\mathopen{}\mathclose\bgroup\originalleft}
\renewcommand{\right}{\aftergroup\egroup\originalright}
\let\amsnewtheorem\newtheorem %
\DeclarePairedDelimiter\abs{\lvert}{\rvert}%
\DeclarePairedDelimiter\norm{\lVert}{\rVert}%
\let\oldabs\abs
\def\abs{\@ifstar{\oldabs}{\oldabs*}}
\let\oldnorm\norm
\def\norm{\@ifstar{\oldnorm}{\oldnorm*}}
\renewcommand{\epsilon}{\varepsilon}
\newcommand{\X}{\mathcal{X}}
\newcommand{\T}{\mathcal T}
\newcommand{\U}{\mathcal{U}}
\newcommand{\sfP}{\mathsf P}
\newcommand{\sfPx}{\mathsf P\hspace{-0.07em}_{\xd}}
\newcommand{\sfPxdash}{\mathsf P\hspace{-0.07em}_{\xd'}}
\newcommand{\sfQ}{\mathsf Q}
\newcommand{\Xcef}{\X_{\mathrm{CEF}}}
\newcommand{\Xprod}{\X_{\times}}
\newcommand{\floor}[1]{\left\lfloor #1 \right\rfloor}
\newcommand{\xd}{\bm{x}}
\newcommand{\Mult}{D_{{\normalfont \textsc{Mult}}}}
\newcommand{\Dnor}{D_{{\normalfont \textsc{NoR}}}}
\newcommand{\dX}{d_{\mathcal X}}
\newcommand{\dT}{\dPr}
\newcommand{\dPr}{D_{\Pr}}
\newcommand{\dHamS}{d_{\mathrm{HamS}}}
\newcommand{\dHamSh}{\dHamS^{hh}}
\newcommand{\dHamSr}{\dHamS^{r}}
\newcommand{\scD}{\mathscr{D}}
\newcommand{\D}{\mathcal{D}}
\newcommand{\DPSpec}{$\epsilon_{\mathcal D}$-DP\DPFlav}
\newcommand{\DPSpecInMath}{\epsilon_{\mathcal D}\text{-DP}\DPFlavInMath}
\newcommand{\DPFlavInMath}{(\mathcal X,\allowbreak \mathscr D,\allowbreak \dX,\allowbreak \dPr)}
\newcommand{\DPFlav}{$\DPFlavInMath$}
\newcommand{\FlavSwap}{$(\Xcef,\allowbreak \scD_{\cswap},\allowbreak \dHamSh,\allowbreak \Mult)$}
\newcommand{\SpecSwap}{$\epsilon_{\D}$-DP\FlavSwap}
\newcommand{\FlavSwapProd}{$(\Xprod,\allowbreak \DSwap,\allowbreak \dHamSh,\allowbreak \Mult)$}
\newcommand{\SpecSwapProd}{$\epsilon_{\D}$-DP\FlavSwapProd}
\newcommand{\FlavSwapGen}{$(\X,\allowbreak \scD_{\cswap},\allowbreak \dHamSr,\allowbreak \Mult)$}
\newcommand{\SpecSwapGen}{$\epsilon_{\D}$-DP\FlavSwapGen}
\newcommand{\FlavSwapProdGen}{$(\Xprod,\allowbreak \DSwap,\allowbreak \dHamSr,\allowbreak \Mult)$}
\newcommand{\SpecSwapProdGen}{$\epsilon_{\D}$-DP\FlavSwapProdGen}
\newcommand{\FlavTDA}{$(\Xcef,\allowbreak \scD_{\cTDA},\allowbreak \dHamS^p,\allowbreak \Dnor)$}
\newcommand{\FlavhTDA}{$(\Xcef,\allowbreak \scD_{\cTDA},\allowbreak \dHamSh,\allowbreak \Dnor)$}
\newcommand{\SpecTDA}{$\rho$-DP\FlavTDA}
\newcommand{\SpecTDAPrime}{$\rho$-DP$(\Xcef,\allowbreak \scD_{\bm c'},\allowbreak \dHamS^p,\allowbreak \Dnor)$}
\newcommand{\PLD}{L_{\D}}
\newcommand{\PLDdash}{L'_{\D'}}
\newcommand{\PLDz}{L_{\D_0}}
\newcommand{\PLDdashz}{L'_{\D_0'}}
\newcommand{\DPFlavDiffMultiverseInMath}{(\mathcal X,\allowbreak \mathscr D',\allowbreak \dX,\allowbreak \dPr)}
\newcommand{\DPFlavDiffMultiverse}{$\DPFlavDiffMultiverseInMath$}
\newcommand{\DPSpecDiffMultiverseInMath}{\epsilon'_{\mathcal D'}\text{-DP}\DPFlavDiffMultiverseInMath}
\newcommand{\M}{\mathcal{M}}
\newcommand{\epsinf}{\epsilon^{(\inf)}}
\newcommand{\Vars}{\bm V}
\newcommand{\Vmat}{\Vars_{\mathrm{Match}}}
\newcommand{\Vswap}{\Vars_{\mathrm{Swap}}}
\newcommand{\Vhold}{\Vars_{\mathrm{Hold}}}
\newcommand{\xdHold}{\xd_{\mathrm{Hold}}}
\newcommand{\xdSwap}{\xd_{\mathrm{Swap}}}
\newcommand{\xdMat}{\xd_{\mathrm{Match}}}
\newcommand{\cswap}{\bm c_{\mathrm{Swap}}}
\newcommand{\cTDA}{\bm c_{\mathrm{TDA}}}
\newcommand{\Df}{\mathcal D}%
\newcommand{\DSwap}{\scD_{\cswap}}
\newcommand{\DTDA}{\scD_{\cTDA}}
\newcommand{\Xh}{\xdh}
\newcommand{\xdh}{\xd_{hh}}
\newcommand{\Xp}{\xd_{p}}
\newcommand{\Dh}{\Sigma_{hh}}
\newcommand{\Dp}{\Sigma_{p}}
\newcommand{\Qh}{\bm Q_{hh}}
\newcommand{\Qp}{\bm Q_{p}}
\newcommand{\Th}{\bm T_{hh}}
\newcommand{\Tp}{\bm T_{p}}
\newcommand{\Zh}{\bm Z_{hh}}
\newcommand{\Zp}{\bm Z_{p}}
\newcommand{\Wh}{\bm W_{hh}}
\newcommand{\Wp}{\bm W_{p}}
\newcommand{\id}{\mathsf{id}}
\newtheorem{theorem}{Theorem}
\newtheorem{proposition}{Proposition}
\newtheorem{lemma}{Lemma}
\theoremstyle{definition}
\newtheorem{definition}{Definition}
\theoremstyle{remark}
\newtheorem{remark}{Remark}
\newtheorem{example}{Example}
\newcommand{\continuation}{??}
\newenvironment{continueexample}[1]
{\renewcommand{\continuation}{\ref{#1}}\excont[continued]}
{\endexcont}
\renewcommand{\algorithmicrequire}{\textbf{Input: }}
\newcommand{\IF}}
	\newcommand\SCOPE{\begin{ALC@g}}%
		\newcommand\ENDSCOPE{\end{ALC@g}}%
	\newcommand{\IF}%
\begin{document}

\newgeometry{bottom=1.5in}

\volumeheader{0}{0}{00.000}

\begin{center}

  \title[Invariant-Preserving Deployments of Differential Privacy]{A Refreshment Stirred, Not Shaken: Invariant-Preserving Deployments of Differential Privacy for the U.S. Decennial Census}
  \maketitle

  \thispagestyle{empty}
  
  \vspace*{.2in}

  \begin{tabular}{cc}
    James Bailie\upstairs{\affilone,*}, Ruobin Gong\upstairs{\affiltwo}, Xiao-Li Meng\upstairs{\affilone}
   \\[0.25ex]
   {\small \upstairs{\affilone} Department of Statistics, Harvard University} \\
   {\small \upstairs{\affiltwo} Department of Statistics, Rutgers University} \\
  \end{tabular}
  
  \emails{
    \upstairs{*}jamesbailie@g.harvard.edu 
    }

\begin{abstract} 
    Protecting an individual's privacy when releasing their data is inherently an exercise in relativity, regardless of how privacy is qualified or quantified. This is because we can only limit the gain in information about an individual relative to what could be derived from other sources. This framing is the essence of differential privacy (DP), through which this article examines two statistical disclosure control (SDC) methods for the United States Decennial Census: the Permutation Swapping Algorithm (PSA), which resembles the 2010 Census's disclosure avoidance system (DAS), and the TopDown Algorithm (TDA), which was used in the 2020 DAS. To varying degrees, both methods leave unaltered certain statistics of the confidential data---their invariants---and hence neither can be readily reconciled with DP, at least as originally conceived. Nevertheless, we show how invariants can naturally be integrated into DP and use this to establish that the PSA satisfies pure DP subject to the invariants it necessarily induces, thereby proving that this traditional SDC method can, in fact, be understood from the perspective of DP. By a similar modification to zero-concentrated DP, we also provide a DP specification for the TDA. Finally, as a point of comparison, we consider a counterfactual scenario in which the PSA was adopted for the 2020 Census, resulting in a reduction in the nominal protection loss budget but at the cost of releasing many more invariants. This highlights the pervasive danger of comparing budgets without accounting for the other dimensions on which DP formulations vary (such as the invariants they permit). Therefore, while our results articulate the mathematical guarantees of SDC provided by the PSA, the TDA, and the 2020 DAS in general, care must be taken in translating these guarantees into actual privacy protection---just as is the case for any DP deployment.
\end{abstract}
\end{center}

\hspace{10pt}
  \small	
  \textbf{\textit{Keywords: }} {confidentiality, data swapping, TopDown Algorithm, invariant statistics, statistical disclosure control}.

\copyrightnotice

\vspace{-2pt}
\section*{Media Summary}

Preserving data privacy when publishing statistics is a permanent challenge for every organization that provides public use data files. This article tells two stories of how the U.S. Census Bureau dealt with this challenge in its 2010 and 2020 Decennial Censuses using two different systems. By incorporating both theses approaches into a broad mathematical framework called \emph{differential privacy}, the authors show how they can be formally understood. This study clarifies what promises these systems truly make, and what they do not, reminding us that mathematical assurances do not automatically ensure real-world privacy.

\section{Motivations and Contributions}
\label{sectionIntro}

\subsection{Privacy Protection Under the Constraint of Invariants}\label{sec:invariantRequirment}

In 2018, the United States Census Bureau (USCB) announced an overhaul of its disclosure avoidance system (DAS), retiring the data swapping methods that had been central to protecting the U.S. Decennial Census for the previous 30 years \citep{abowd2018us, mckennaDisclosureAvoidanceTechniques2018}. %
While the privacy provided by these methods had been justified with intuitive arguments, %
the DAS for the 2020 Census would, in contrast, 
be redesigned from the ground up, with the primary goal of supplying a mathematical guarantee of protection. Moreover, this guarantee, the USCB decided, must be some type of \emph{differential privacy} (DP) \citep{dwork2006calibrating}---a large family of technical standards \citep{desfontainesSoKDifferentialPrivacies2022} that conceptualize the `privacy' of a statistical disclosure control (SDC) method in terms of its sensitivity to counterfactual changes in its input data \citep{bailieRefreshmentStirredNot2026}.%

The USCB's adoption of DP was driven in part by the desire for a formal, quantitative, and measurable characterization of privacy protection. The bureau was concerned that the existing, swapping-based DAS lacked the rigorous basis supplied by such a characterization. Indeed, data swapping had not been analyzed by a formal system like DP, and thus a theoretical account of its SDC protection was limited. At the same time, the bureau's empirical evaluation---that is, their reconstruction and reidentification attack (see Section~\ref{secImpactInvariants} and \cite{abowd2010CensusConfidentiality2023})---concluded that the existing DAS was vulnerable, while in comparison, a DP-compliant DAS was expected to better protect against this and other emerging types of privacy attacks \citep{dworkExposedSurveyAttacks2017}.

Nevertheless, there were other priorities for the 2020 Census, many of which complicated a straightforward adoption of DP. In particular, state population counts are %
constitutionally mandated to be published exactly as counted%
, whereas DP---at least as originally defined in \citet{dwork2006calibrating}---requires that such counts be infused with random noise. %

Thus, even as DP offered the kind of formal guarantees the bureau sought, its implementation had to contend with the legal, operational, and statistical requirements of the Decennial Census---requirements that impose strict conditions on how SDC is applied. Indeed, the U.S. Decennial Census is a massive exercise in population enumeration, which is subject to numerous laws and regulations, as well as various pragmatic constraints and utility considerations. Taken together, these external criteria greatly complicated the design of the new 2020 DAS. A key challenge was the requirement that census publications must respect \emph{invariants}---exact summaries of the confidential data that must be released without modification. The most notable invariants in 2020 are the state population totals mentioned previously, but for operational and data-quality reasons the USCB also incorporated additional invariants into the 2020 Census, including counts of housing units at the lowest level of census geography (blocks) and various other statistics (summarized in Table~\ref{tab:compare_2020}).

While we have so far focused exclusively on the 2020 Census, invariants were not a novel problem in 2020. To the contrary, although their exact composition and method of imposition has changed from decade to decade, invariants have been a mainstay of every Decennial Census due to their constitutional and regulatory significance.
Moreover, many other types of data dissemination frequently feature key statistics that are not altered before publication---for example, the row and column margins of contingency tables are often published as is (see Example~\ref{exContingencyTable} in Section~\ref{secMultiverseAccommodatesInvariants}). Thus, invariants are inherent not only to the dissemination of U.S. Census data, but are also a common property of other statistical data products.

Yet the presence of invariants complicates SDC. By definition, the values of any invariant statistics cannot be modified by a DAS and as such are exempt from any SDC protection. Naturally, this creates potential statistical disclosure risks because exact knowledge of certain features of the confidential data, as is provided by invariants, may aid an attacker in reconstructing and reidentifying these data. %

Furthermore, standard formulations of DP---including those that the USCB has invoked (zero-concentrated DP), referenced (approximate DP), or at some point considered (pure DP)---do not allow for the specification of invariants. The issue is that DP, at least as it is typically understood, cannot measure the protection provided to the confidential data after taking into account the release of invariants. Even though DP is fundamentally an assessment of relative privacy---the privacy of an individual's unique information relative to knowing the rest of the population---it must be recast in a more general light in order to assess the privacy relative to knowing the invariants. %
Yet once one recognizes DP's relative nature, it becomes apparent that invariants do not contradict the fundamental essence of DP, but rather that their incorporation into DP greatly expand its applicability. 

This is because no method can protect absolute privacy \citep{kiferNoFreeLunch2011}. Hence any data protection standard, including DP, must take into account information that cannot be protected, whether that be invariants or other public information. As has largely gone unnoted, to spell this out is not only natural, but key to articulating the actual DP guarantee of any invariant-preserving SDC method, including those used for the 2020 Census. One thing that is well understood in the literature, however, is that invariants nullify the DP protection guarantees along the dimensions of the confidential data that the invariants implicate (see, e.g., \cite{gong2020congenial} and references discussed in Appendix~\ref{secRelatedWork}). 

Nevertheless, the compromising effect of invariants on the DP guarantees of the Decennial Census is a central motivation of this article. Of particular importance in this regard is a DP analysis of the TopDown Algorithm (TDA), which the USCB created for disseminating key 2020 Census data products. This algorithm runs in two steps: It first adds DP-calibrated noise to all of the 2020 Census data, then it removes this
noise from the invariants via a complex optimization procedure \citep{abowd2022topdown}. While a DP analysis of the first step is easy due to its use of an established procedure \citep{canonneDiscreteGaussianDifferential2022}, the second step, as the bureau's own assessment makes clear \citep{ashmead2019effective}, is particularly challenging to analyze from the perspective of DP because of the invariants it enforces. Yet a complete and rigorous assessment of the TDA's mathematical guarantee of protection must address the entire procedure---both the noise infusion in the first step and the noise removal in the second step. This assessment is, to the best of the authors' knowledge, missing from the literature thus far.

As we have mentioned, invariants were not just a requirement in the 2020 DAS but were also present in all other Decennial Censuses. Most relevant to a discussion of the USCB's overhaul of their DAS is the 2010 Census. Since it also respected a set of invariants, the SDC protection provided by the 2010 DAS inherits all the same complications outlined in the previous section. This makes it even more important to handle invariants in a unified way, especially if one wants to compare different invariant-preserving SDC methods from within the same theoretical system---as is one of the main goals of this article.

Like the 1990 and 2000 Censuses, the 2010 DAS primarily consisted of a \emph{data swapping} method \citep{daleniusDataswappingTechniqueDisclosure1982,fienbergDataSwappingVariations2004}.
While data swapping refers to a large class of methods and is widely employed by statistical offices around the world, in the context of the U.S. Decennial Census, swapping works by permuting the geographical data of a randomly selected subset of households \citep{mckennaDisclosureAvoidanceTechniques2018}. 
By definition, swapping keeps invariant all counts that are unaffected by this permutation operation. More specifically, as Section~\ref{sectionSwapping} spells out in detail, the invariants induced by swapping are population totals over various geographic and demographic stratification, whose exact composition are determined by the particular swapping parameters used. %
These invariants 
are inevitably much more numerous than the TDA's---an important observation when comparing data swapping with the TDA %
because, as the number of invariants increases, their impact ranges from negligible to completely nullifying any supposed guarantee of protection. 

Be that as it may, swapping does still provide some protection since it can alter any record's geographic information. According to the standard argument of how swapping provides SDC, it gives plausible deniability as to the reported location of any household, because those households that do in fact have their location changed are randomly sampled and kept secret. In this way, so the argument goes, swapping obfuscates the location, and hence the identity, of all households, thereby making it difficult for an attacker to learn any personal information.
(Appendix~\ref{appendixBackgroundSwapping} will critically examine this argument in detail.)
While swapping has this ad hoc and intuitive justification, %
it has not been given a rigorous theoretical foundation---that is, it has lacked the kind of formal, mathematical guarantees of protection provided by DP 
\citep{christDifferentialPrivacySwapping2022, abowdHowWillStatistical2017, slavkovicStatisticalDataPrivacy2023}.

\subsection{Our Main Contributions}

We have thus come to a puzzling revelation. On the one hand, the new DAS in 2020 was designed according to the rigorous mathematical principle of DP; but it also accommodates invariants, compromising its DP guarantee in ways that are poorly understood and have not been properly mathematically characterized. On the other hand, data swapping---upon which the old DAS was based---was abrogated 
in part because it had not been studied from the perspective of DP and accordingly lacked a DP guarantee. It would be prudent, therefore, to rectify the deficiency in our understanding of the actual DP guarantees for both the 2020 DAS as well as for data swapping.

The current article does exactly this. Casting both the new SDC methods employed in the 2020 DAS and the traditional SDC technique of data swapping within what we call our system of \emph{differential privacy specifications}, we show how each can be formally understood through the lens of DP. In this regard, this article makes four main contributions.%

First, this work presents a formal treatment of invariants in DP (Section~\ref{secInvariants}).  
Acknowledging invariants as essential to the validity, viability, and utility of the Decennial Census and other statistical disseminations, we describe why a data protection standard based only on the realized values of the invariants is not feasible, and why, in fact, a recognition of all potential and counterfactual values is a necessary part of any invariant-accommodating standard, including DP (Section~\ref{secMultiverseNecessary}). %
From this basis, we demonstrate how invariants can naturally be integrated into our system of DP specifications (Sections~\ref{secSystemDPSpecifications} and~\ref{secMultiverseAccommodatesInvariants}) and provide a preliminary, theoretical investigation into the impact of invariants on DP (Section~\ref{secUnderstandingImpactInvariants}). 

Second, this article shows that, contrary to previous views, it is possible to supply data swapping with rigorous guarantees of protection based on DP (Section~\ref{sectionSwapping}). 
To do this, we prove a formal description of the SDC protection---that is, a DP specification---for the Permutation Swapping Algorithm (PSA), which is a data swapping method with similarities to the 2010 DAS (Section~\ref{secPSAIsDP}). Intuitively, this specification can be understood as stating that the PSA satisfies pure DP \citep{dwork2006calibrating} subject to the invariants it induces. While this means the PSA's specification differs from conventional formulations of DP, as we have already argued such a departure is necessary to analyze any invariant-preserving DP method and does not contradict the essentially relative nature of DP. %

We note at the outset that it is impossible to properly assess the actual swapping algorithm used in the 2010 Census from the perspective of DP because the details of that algorithm are not entirely public. Yet we do know that the actual 2010 swapping procedure differs from the PSA in a number of ways (see Appendix~\ref{appendixCompareDAS2010Swapping})---ways that make it difficult to characterize that procedure using DP. Nevertheless, the PSA captures the 2010 DAS in its essence by mimicking its principal design features, and therefore, its DP specification is still informative for understanding the disclosure risk of the 2010 Census.
Indeed, by making the assumption that the PSA---with parameter settings chosen to resemble the 2010 data swapping algorithm---was used as the 2010 DAS, we can (and do) provide a reasonable estimate for the DP specification associated with the 2010 Census (Section~\ref{sec:2010PrivacyLoss}).

Third, this article conducts a comparative analysis of the protection afforded to the 2020 Census with the counterfactual scenario in which the PSA was used to publish the 2020 data (Section~\ref{sectionCensus}). To support this endeavor, we formulate and prove the first DP characterization of the TDA (Section~\ref{sectionCensus-TDA}). Because, much like data swapping, the TDA can only satisfy DP subject to its invariants, this result requires the use of our system of DP specifications to formally incorporate the TDA's invariants into its DP guarantee. In addition to the TDA's specification, we also compile DP specifications for all the primary 2020 Census data products (Section~\ref{sectionSpecifications2020}). Doing so requires identifying the \emph{protection} (or `\emph{privacy}') \emph{units} for the 2020 Census, which we determine to be `post-imputation persons'---a result that is important because, like invariants, post-imputation units negatively impact the actual SDC protection afforded by a DP specification.

We aggregate the DP specifications for the various 2020 publications into a single specification describing the SDC protection afforded to the 2020 Census across all its major publications. This DP specification is then compared to that of a hypothetical application of the PSA to the 2020 Census with various parameter settings (Sections~\ref{sec:whatif} and~\ref{sec:compare-das-psa}). This comparison is informative because it places mathematical summaries of the PSA's and the 2020 Census's SDC protection side-by-side. However, as these two specifications differ on several dimensions, drawing an overall conclusion about the relative strengths of their SDC protection remains difficult. Even so, just articulating the DP specifications for both systems is an important step forward because it clarifies and organizes the specific ways in which their SDC protection diverge.

Fourth, as part of this article's discussion (Section~\ref{sec:discussion}), we put forth a set of open issues with respect to the development of DP and the work needed going forward. These issues center on the difficulty just mentioned: While our system of DP specifications is a useful toolkit for describing the SDC protection of DP implementations, there are currently few tools to effectively compare two DP specifications that differ on multiple dimensions. %
To address this, future research is needed on exploring the trade-off between different dimensions of a DP specification---for example, trading off the presence of additional invariants with a reduced protection loss budget. We propose that such trade-offs might be assessed via their impact on disclosure risk, or via their efficiency against a privacy attack (Section~\ref{sec:compare-das-psa}). Additionally, as we do not attempt in this article to evaluate what substantive SDC protection an invariant-induced DP specification actually provides---this being a difficult question deserving its own dedicated study---we also outline in Section~\ref{sec:discussion} some future directions for understanding and mitigating the impact of invariants on disclosure risk.

Background on data swapping and other related work are provided in Appendices~\ref{appendixBackgroundSwapping} and~\ref{secRelatedWork} respectively. Appendix~\ref{appendix2010Swap} provides the most comprehensive (to the authors' knowledge) publicly available description of the 2010 DAS, along with a comparison between the 2010 DAS and the PSA (Appendix~\ref{appendixCompareDAS2010Swapping}) and a discussion of ways the PSA could be modified to further align with the 2010 DAS while still preserving its DP flavor (Appendix~\ref{appendixModifyPSA2010}).

\section{Invariants and Differential Privacy}\label{secInvariants}

\subsection{Integrating Invariants Into Differential Privacy---Intuitions and Subtleties}\label{secMultiverseNecessary}
Intuitively speaking, the impact of invariants on SDC is similar to conditioning in statistical inference, that is, constraining the possible states of the (confidential) data by known or assumed information. In other words, any invariant-respecting SDC criterion should only consider the counterfactual data sets that share the same values for the invariants as the confidential data set---just as any conditional probability only considers outcomes that agree with the conditioning information. In that sense, the procedure for infusing invariants into a DP formulation parallels the process of %
disintegrating a probability into a collection of conditional distributions.
The overall mathematical notion of a probability---or of DP---remains the same; the difference lies in the state spaces (i.e., the set of possible outcomes, or the set of possible data sets) to which it is applied.
	
At the same time, just as defining conditional distributions brings complications and subtleties (such as conditioning on probability-zero events), defining an invariant-accommodating formulation of DP requires addressing a variety of nuances, many of which are implicit in conventional DP definitions. Without making them explicit, comparing protection loss budgets across different forms of DP---especially those with different invariants---would be as meaningless as comparing the face values of different currencies without considering their conversion rates. Indeed, the problem is worse: the issue is not merely the conversion rates but, more importantly, the realizable purchasing power---that is, how much data privacy each method actually affords in terms of reducing statistical disclosure risk. 

As an example of the subtleties in incorporating invariants into a DP formulation, consider the scenario in which the U.S. population size $N$ is enumerated by the Decennial Census to be exactly 330 million. Once this value is made public, any counterfactual data set with a different value for $N$ becomes immediately distinguishable from the actual confidential census data. Hence, such data sets must be excluded from consideration under the DP paradigm, which relies on the notion of distinguishability, as we shall review later. 

However, it would be rather unwise---regardless of which DP specification we decide to adopt---to implement an SDC method that will satisfy the specification only when $N=330,000,000$. This is because the particular value of the enumerated national total is \textit{accidental}, in the sense that this specific total population count is, but need not be, a property of the Decennial Census \citep[see e.g.][]{sep-essential-accidental}. Even if it were completely accurate (which it never is), it reflects only the count at the time of the census. A DP method needs to work irrespective of the enumerated value of the population total. Likewise, a DP specification should guarantee protection irrespective of this value, or the realized value of any other invariant.

On the other hand, the fact that the national population count and other invariants were published exactly as enumerated is \emph{essential} to the Decennial Census because this occurs by design. This distinction between the essential invariant statistics and their accidental values leads respectively to the notions of the \textit{multiverse} and the \textit{universe}, which will be defined mathematically in Section~\ref{secSystemDPSpecifications}. For the current example, a universe is the collection of all plausible census data sets that share the same specific value for $N$. The multiverse is then the collection of these universes as $N$ varies within a specified range. (Not incidentally, specifying this range for $N$---or more generally, the admissible values of any invariant statistic---is yet another component that a rigorous theoretical formulation of DP must make explicit.)

Invoking the analogy to conditional distributions, one may view the distinction between the multiverse and a universe as analogous to the difference between the conditional probability $\sfP(\, \cdot \mid Y)$, conditioning on a random variable $Y$, and the conditional probability $\sfP(\,\cdot \mid Y=y)$, conditioning on the event $\{Y=y\}$. The former is a collection of probability distributions as $Y$ takes on different values, and the latter is a single probability distribution determined by the particular value $y$. Here the random variable $Y$ is the essential quality, and the event $\{Y=y\}$ is an accidental realization. Similarly, a DP specification should concern the essential nature of a data release mechanism, rather than its properties within some particular, but accidental, universe. 

Yet even setting aside the subtleties of properly accounting for invariants, the plethora of existing DP formulations differ along several other dimensions \citep{desfontainesSoKDifferentialPrivacies2022}. As such, DP can vary widely in form and spirit \citep{dworkDifferentialPrivacyPractice2019}, making it difficult to (1) understand what it means for an SDC method to be DP and (2) objectively compare different DP deployments in a systematic way—two tasks central to the goals of this article.

Therefore, to properly integrate invariants into DP, we must first be transparent about which formulation of DP the invariants will be incorporated into. This need ultimately led us to articulate a system of DP specifications, which is presented in the next section. The phrase ``a refreshment stirred, not shaken'' in our article title is intended to emphasize that this system is not new, but simply a synthesis of the existing literature on the many variations of DP. Indeed, this system is in essence the formalization of three principles that we believe are widely accepted in the DP community, as discussed below.

\subsection{Invariants in Our System of Differential Privacy Specifications}\label{secSystemDPSpecifications}

The first principle states that a DP formulation is a technical standard that requires the rate of change---or `derivative'---of an SDC method to be controlled (hence the epithet `differential'). The second principle asserts that the rate of change of an SDC method is defined as the change in the \emph{probability distribution} of the method's output per unit change in its input data. And the third principle observes that different DP formulations correspond to different choices for how and where to measure these changes, as well as how much to control the associated rate of change. 
        
We call these choices the \emph{building blocks} of DP because each of them formalizes a different dimension of DP and all of them are required to fully define a DP formulation. Since they are essential for establishing our theoretical results concerning the PSA's and the TDA's DP guarantees, we will describe the five building blocks in two ways---mathematically and intuitively:
\begin{itemize}
    \item The \emph{domain} $\X$: a set of data sets $\xd$
    \\ --- \textit{Who} is eligible for protection?
    \item The \emph{multiverse} $\scD \subset 2^{\X}$: a set of universes $\D \subset \X$
    \\ --- \textit{Where} does the protection extend to?  
    \item The \emph{input premetric} $\dX$: a `distance’ between any two data sets $\xd, \xd' \in \X$
    \\ --- \textit{What} is the granularity of protection? 
    \item The \emph{output premetric} $\dPr$: a `distance’ between probability distributions
    \\ --- \textit{How} are changes in output variations measured?
    \item The \emph{protection loss budget} (PLB) $\varepsilon_{\D}$: a function $\scD \to [0,\infty]$
    \\ --- \textit{How much} protection is afforded? 
\end{itemize}

We term a collection of choices for all five of the building blocks a \emph{DP specification}, while we call a collection of choices for the first four a \emph{DP flavor}. The last building block, the PLB, is more commonly known as the `privacy' loss budget, although we eschew this term to avoid implying that the PLB fully captures the complex concept of privacy \citep{nissenbaumPrivacyContextTechnology2010, seemanPrivacyUtilityDifferential2023, benthallIntegratingDifferentialPrivacy2024}. 
There is a subscript $\D$ in $\varepsilon_\D$ because adopting the same DP flavor for different universes does not mean that we must maintain the same PLB across these universes (for example, the USCB could decide that more protection is required for small values of $N$ than large values). Consequently, we allow the value the PLB takes to vary across universes.
The subscript $_\D$ therefore indicates the value of the PLB for the particular universe $\D$. (However, we may drop this subscript in some places for reasons explained later in Remarks~\ref{remarkbFunctionD} and~\ref{remarkRhoSquared}.) An extensive treatment of these notions and their nuances, implications, and applications is presented in \citet{bailieRefreshmentStirredNot2026}. Here, we only focus on aspects essential for stating and proving our main results regarding the PSA and the TDA. 

Through our system of DP specifications, we can unify many (though not all) DP formulations in the literature via the following formalism. 
    
\begin{definition}
A \emph{data release mechanism} (or, synonymously, an SDC method) $T$ is a function that takes as input a (confidential) data set $\xd \in \X$ and a random seed $U \in \U$, and outputs some noisy statistics $T(\xd, U)$ based on $\xd$. 

A mechanism $T$ \emph{satisfies the DP specification} \DPSpec\ if, for all universes $\D \in \scD$ and all pairs of data sets $\xd, \xd' \in \D$,
\begin{equation}\label{eqTSatisfiesDPDefn}
    \dPr \Big[ \sfP \big(T(\xd, U) \in \cdot\ \big), \sfP \big(T(\xd', U) \in \cdot\ \big) \Big] \le \epsilon_{\mathcal D} d_{\mathcal X} \big(\xd, \xd'\big),
\end{equation}
where $\sfP$ is the probability distribution induced by the random seed $U \in \U$, taking the input data set ($\xd$ or $\xd'$) as fixed.
\end{definition}

Under this system, there are two ways that invariants can naturally be integrated with DP. First, one can set $\dX(\xd, \xd')=\infty$ whenever $\xd$ and $\xd'$ disagree on the invariants. Second, invariants may be encoded through the multiverse $\scD$. We prefer the second approach for its better interpretability. The input premetric $\dX$ is already used to specify the \emph{units} (for example, people, households, or businesses) to which a DP specification provides protection. The units of a DP specification are those entities whose records differ between data sets $\xd$ and $\xd'$ with $\dX(\xd, \xd') = 1$ \citep[see][]{bailieRefreshmentStirredNot2026}. Overloading $\dX$ to also encode the invariants breaks the connection between a DP specification’s units and its input premetric and leads to confusion when both the units and the invariants are nontrivial---for example, as may occur in survey statistics \citep{drechslerWhoseDataIt2024}.

Nevertheless, these two approaches---incorporating invariants via $\dX$ or via $\scD$---have the same result: Both ensure that the DP condition (Equation~\ref{eqTSatisfiesDPDefn}) only bounds the `distance’ between the distributions $\sfP \big(T(\xd, U) \in \cdot \big)$ and $\sfP \big(T(\xd', U) \in \cdot \big)$ when the two counterfactual data sets $\xd$ and $\xd'$ agree on the invariants (see Section~\ref{secMultiverseAccommodatesInvariants}). Intuitively, this corresponds to conditioning on the invariants, except that a priori we do not know the realized value of the invariants. As discussed in Section~\ref{secMultiverseNecessary}, the DP specification must therefore account for all possible values through the multiverse $\scD$, rather than conditioning on any single particular value of the invariants (which would correspond to using a single universe $\D$). 

We show in Section~\ref{secUnderstandingImpactInvariants} that weakening the DP condition via a nonvacuous multiverse $\scD$—as happens whenever there are invariants—leads to a reduction in the actual protection guaranteed by a DP specification. (By `nonvacuous multiverse,' we mean one that is not equivalent to the multiverse $\scD = \{\X\}$.) However, this complication is necessary in many real-world applications of DP. In addition to examples from the literature outlined in \citet{bailieRefreshmentStirredNot2026}, we prove in Section~\ref{sectionCensus-TDA} that a nonvacuous multiverse is required to describe the DP protection provided to the 2020 U.S. Census. Furthermore, the practice of empirically restricting the data universe is typical in statistical disclosure control and in data analysis more broadly. Top-coding—setting a maximum limit on a continuous variable, usually after looking at the raw data—is one common example.

Before we proceed, we pause to address the astute reader who may wonder why we do not adopt a third, seemingly simpler approach to incorporating invariants into DP: Take an existing DP formulation and restrict its neighboring data sets to those that agree on the invariants. (For readers who are unfamiliar with the formulation of DP in terms of neighboring data sets, $\xd$ and $\xd'$ are called neighbors if $\dX(\xd, \xd') = 1$; DP is then defined as the requirement that $\dPr [ \sfP (T(\xd, U) \in \cdot\ ), \sfP (T(\xd', U) \in \cdot\ ) ] \le \epsilon_{\D}$ holds for all neighbors $\xd, \xd' \in \D$.) The problem with this approach is that, after excluding neighboring data sets that have different values for the invariants, there may be no neighbors left, resulting in a completely vacuous formulation of DP. Indeed, we will see that this is what happens with data swapping. This is one reason to replace the concept of neighboring data sets with the more general notion of an input premetric, in addition to being the reason not to encode invariants through neighboring data sets.

\subsection{How the Multiverse Accommodates Invariants}\label{secMultiverseAccommodatesInvariants}

As we illustrated in Section~\ref{secMultiverseNecessary}, the multiverse $\scD$ of a DP flavor \DPFlav\ is a collection of subsets of the domain $\mathcal X$, and these subsets are called the universes of the DP flavor. %
The idea behind the concept of a universe $\D$ is that any two data sets $\xd, \xd' \in \D$ should be `mutually plausible,' in the sense that an attacker, upon observing the released statistics, should not be able to determine (with certainty and correctly) whether the true confidential data set is $\xd$ or whether it is $\xd'$. %
(Here, and in the next paragraph, we assume for simplicity that $\dX(\xd, \xd')$ and $\dX(\xd', \xd)$ are both finite for every $\xd, \xd' \in \D$.)

In practice, it is often the case that the multiverse $\mathscr D$ is defined via a set-valued function that we call the \emph{universe function} $\Df(\cdot) : \mathcal X \to 2^{\mathcal X}$, that associates every potential data set $\xd \in \mathcal X$ with a universe $\Df(\xd) \subset \mathcal X$. %
In this case, the multiverse $\scD = \{\Df(\xd)\}_{\xd \in \mathcal X}$ is the image of this universe function $\Df$. 
The resulting DP specification would then ensure that every $\xd' \in \D (\xd)$ is plausible if $\xd$ is also plausible---that is, an attacker would not be able to distinguish with certainty the true confidential data set to be $\xd'$ and not $\xd$, for any $\xd' \in \D(\xd)$.%

A set of invariants can be encoded by a universe function of the form: %
\begin{equation}\label{eq:data-universe-invariants}
    \mathcal D_{\bm c}(\xd) = \Big\{ \xd' \in \mathcal X: \bm c(\xd') = \bm c(\xd) \Big\},
\end{equation}
for a given deterministic function $\bm c: \mathcal X \to \mathbb R^k$. Here the function $\bm c$ describes the features $\bm c(\xd)$ of the data set $\xd$, which are taken to be invariant. For example, $\bm c(\xd)$ could be the state population totals calculated from the data set $\xd$ of census responses. We call $\mathcal D_{\bm c}(\cdot)$ the \emph{invariant-induced universe function} and its image $\{\mathcal D_{\bm c}(\xd)\}_{\xd \in \mathcal X}$ the \emph{invariant-induced multiverse} $\mathscr D_{\bm c}$.

By construction, the invariants take the same values across every data set in a universe $\D_{\bm c}(\xd)$. As such, invariants give rise to an equivalence relation $\sim$ over the domain $\mathcal X$, which is defined by $\xd \sim \xd'$ if $\bm c(\xd) = \bm c(\xd')$. The data universe function  therefore induces a \emph{partition} of $\mathcal X$ indexed by the image of the invariant function $\bm c$, splitting $\X$ into universes of mutually plausible data sets that share the same values for the invariants.

\begin{example}\label{exContingencyTable}
Let the data set be an contingency table of $m\times n$ records taking non-negative integer values:  $\mathcal X = \mathbb N^{m\times n}$. Suppose the function 
$\bm c: \mathbb N^{m\times n} \to \mathbb N^{m + n}$ tabulates the  row- and column-margins:
\begin{equation}\label{eqMarginInvariants}
    \bm c (\xd) = \bigg(\sum_{i=1}^{m}x_{i1},\ldots,\sum_{i=1}^{m}x_{in}, \sum_{j=1}^{n}x_{1j},\ldots, \sum_{j=1}^{n}x_{mj}\bigg).
\end{equation}
The data curator may treat the row and column margins of the confidential data set as invariant \citep[see][and references therein]{dobraBoundsCellEntries2000}. This would be equivalent to employing the universe function $\mathcal D_{\bm c}(\cdot)$ as defined in Equation~\ref{eq:data-universe-invariants} using the function $\bm c$ from Equation~\ref{eqMarginInvariants}, since $\mathcal D_{\bm c}(\cdot)$ ensures that only pairs of data sets $\xd, \xd'$ with the same row and column margins are subject to the DP condition (Equation~\ref{eqTSatisfiesDPDefn}).
\end{example}

More generally, the DP condition is required only for data sets that belong to the same universe $\D \in \scD$. In the case of an invariant-induced multiverse $\scD_{\bm c}$, this means 
the DP condition will only be applied to data sets $\xd$ and $\xd'$, which share the same values for the invariants. 
As we will prove in Section~\ref{secUnderstandingImpactInvariants}, this allows the invariants to be released as is, without contributing to the PLB.

In some applications, there are also inequality invariants \citep{abowd2022topdown}. As an example of such an invariant, the 2020 U.S. Decennial Census requires that the reported number of group quarters in any geographical unit is no larger than the number of persons in that unit. More generally, an inequality invariant is of the form $f(\xd) \le 0$ for some function $f : \mathcal X \to \mathbb R$. Such an invariant can be incorporated in the above framework by defining 
\begin{equation}\label{eqInequalityInvariant}
    c(\xd) = \begin{cases}
    1 & \mathrm{if\ } f(\xd) \le 0, \\
    0 & \mathrm{if\ } f(\xd) >0 .
    \end{cases}
\end{equation}

\subsection{Understanding the Impact of Invariants on Differential Privacy}\label{secUnderstandingImpactInvariants}

As we have repeatedly emphasized, invariants reduce the SDC protection provided by a DP specification because they restrict where the DP condition (Equation~\ref{eqTSatisfiesDPDefn}) must hold. This does not mean that invariants are antithetical to the fundamental idea of DP. %
To assert otherwise would impose an unduly restrictive interpretation of DP, one that would not only greatly reduce its applicability, but would also categorize the original formulation of DP as not satisfying DP. Indeed, this formulation \citep[given in][]{dwork2006calibrating} uses the Hamming distance as its input premetric, as does any formulation of \emph{bounded} DP more generally. Because the Hamming distance between two data sets is infinite whenever they have a different number of records, these formulations only require the DP condition to hold for data sets of the same size. Thus, having the data set size as an invariant has been a part of DP from its conception. 

It should be clear then that the mere presence of invariants is not necessarily an issue; what matters is the extent that they impact actual SDC protection. 
To this end, this section provides some preliminary, theoretical results on invariants' effect. (A broader discussion on their impact is provided in Section~\ref{secImpactInvariants}.) Since releasing the invariants without modification was the aim of restricting the DP condition in the first place,
we begin by demonstrating the necessity and sufficiency of the invariant-induced multiverse to achieving this aim. %
(All results in this section are proved in Appendix~\ref{secInvariantsProofs}.)

\begin{proposition}\label{propReleaseF}
Fix a domain $\X$ and some invariants $\bm c : \X \to \mathbb R^k$. For any $\dX$ and $\dPr$, the mechanism $T(\xd) = \bm c(\xd)$ that releases the invariants exactly satisfies $\epsilon_{\D}$-DP$(\X, \scD_{\bm c}, \dX, \dPr)$ with PLB $\epsilon_{\D} = 0$ for all $\D \in \scD_{\bm c}$.

Now suppose $\dPr(\sfP, \sfQ) = \infty$ whenever the total variation distance between $\sfP$ and $\sfQ$ is one. %
(This assumption is satisfied by most common choices of $\dPr$.) 
Let $\scD$ be a multiverse such that there exists some universe $\D_0 \in \scD$ and some $\xd, \xd' \in \D_0$ with $\dX(\xd, \xd') < \infty$ and $\bm c(\xd) \ne \bm c(\xd')$. Then $T$ does not satisfy $\epsilon_{\D}$-DP$(\X, \scD, \dX, \dPr)$ whenever $\epsilon_{\D_0} < \infty$.
\end{proposition}

The first part of Proposition~\ref{propReleaseF} shows that the invariants can be released without modification `for free'---that is, at no cost to the PLB. Hence, the invariant-induced multiverse is sufficient for achieving the aim of accommodating invariants. This result also holds if $\mathscr D$ is any multiverse with $\bm c$ constant within every universe $\D \in \scD$ (i.e., if $\bm c(\xd) = \bm c(\xd')$ for all $\xd, \xd' \in \D$ and all $\D \in \scD$). 

The second part of Proposition~\ref{propReleaseF} demonstrates the necessity of the invariant-induced multiverse. It shows that a set of invariants cannot be published as is by a DP mechanism whenever $\scD$ does not conform to these invariants.
More precisely, if there are data sets $\xd, \xd' \in \mathcal D_0$, which take different values on the invariants (and satisfy $\dX(\xd, \xd') < \infty$), then releasing the invariants exactly would require $\epsilon_{\mathcal D_0} = \infty$.

The following result demonstrates that the invariant-induced multiverse does not allow for the release of additional information, above and beyond the information contained in the invariants, without incurring some loss of protection. This is important because while we want to allow the invariants to be released as is, we do not want this to imply that other information can also be released for free.

\begin{proposition}\label{propReleaseFConverse}
    Suppose that a data release mechanism $T$ varies within some universe $\D_0 \in \scD$ in the sense that there exists $\xd, \xd' \in \D_0$ with $\dX(\xd, \xd') < \infty$ but {$\sfP (T(\xd, U) \in \cdot\ ) \ne \sfP (T(\xd', U) \in \cdot\ )$}. When $\dPr$ is a metric, $T$ satisfies $\epsilon_{\D}$-DP$(\X, \scD, \dX, \dPr)$ only if $\epsilon_{\D_0} > 0$.
\end{proposition}

A mechanism $T$ varies within a universe $\D_0 \in \scD_{\bm c}$ whenever it includes information that is not logically equivalent to the invariants $\bm c$. In this case, Proposition~\ref{propReleaseFConverse} shows that releasing this information will require a nonzero PLB, even while the invariants can be released for free.

Beyond describing exactly what information can be released under an invariant-induced multiverse, the above two propositions more generally illustrate that the interpretation of the PLB cannot be isolated from the multiverse, and indeed this complicates the comparison of budgets across different DP flavors. %

As a concrete example of how the meaning of the PLB $\epsilon_{\mathcal D}$ changes with $\scD$, consider evaluating the same mechanism $T$ against two DP flavors $(\X, \scD_{\bm c}, \dX, \dPr)$ and $(\X, \scD_{\bm c'}, \dX, \dPr)$, which differ only on their invariants. Suppose the second set of invariants are nested within the first; that is, $\bm c$ is strictly more constraining than $\bm c'$. (For example, $\bm c$ are population counts at the block level and $\bm c'$ are counts at the county level.) Then Proposition~\ref{propNestedscD2} below proves that $T$'s protection loss under $\scD_{\bm c'}$ cannot be smaller and may be strictly larger than its loss under $\scD_{\bm c}$.%

By comparing $T$'s two PLBs on their own, a reader might arrive at the seemingly paradoxical conclusion: we can increase the protection provided by $T$ simply by increasing the number of invariants allowed by our DP flavor. Yet this ignores the critical fact that the PLB is only a `within-system' evaluation of SDC. It is not an absolute, unitless measure of protection, but only a relative measure whose units are determined by the DP flavor. (See \cite{bailieRefreshmentStirredNot2024c}, for an explanation of how the units of the PLB depend on the other three components of the DP flavor, not just $\scD$.) It is dangerous therefore to think that the ${\bm c}$-release is actually afforded with less SDC protection than the ${\bm c'}$-release because there is privacy leakage due to specifying additional invariants, which is not captured by the within-system evaluation $\epsilon_{\D}$. Indeed in the extreme example where $\bm c$ is an injective function so that the universes $\mathcal D$ are singletons, %
the entire data set can be published, and hence there is no actual protection afforded by the DP specification $\epsilon_{\D}$-DP$(\X, \scD_{\bm c}, \dX, \dPr)$ even though in such cases $\epsilon_{\D}$ can be set to zero. This point is crucial to understanding the comparative analysis between the PSA and the 2020 Census data releases as presented in Section~\ref{sectionCensus}.

\begin{proposition}\label{propNestedscD2}
    Suppose that $\scD$ is refinement of $\scD'$---that is, for all $\D \in \scD$, there exists some $\D' \in \scD'$ such that $\D \subset \D'$. Define the protection loss of a data release mechanism $T$ under the flavor \DPFlav\ as $\PLD = \inf \{ \epsilon_{\D} : T \mathrm{\ satisfies\ } \DPSpecInMath \}$. Similarly define $\PLDdash = \inf \{ \epsilon'_{\D'} : T \mathrm{\ satisfies\ } \DPSpecDiffMultiverseInMath \}$. Then, the protection loss $\PLDdash$ under \DPFlavDiffMultiverse\ is no smaller than the loss $\PLD$ under \DPFlav:
    \[\PLDdash \ge \PLD,\]
    for any $\D \in \scD$ and $\D' \in \scD'$ with $\D \subset \D'$.
\end{proposition}

This proposition is an initial tool for comparing DP specifications, which vary on their invariants. The two sets of invariants in this result have to take a very particular form---one has to be nested inside the other. In this case, we can order the two sets of invariants by their impact on SDC protection. %
Adding invariants reduces the quality of a DP flavor, because it refines the multiverse from $\scD'$ to $\scD$, and in so doing shrinks the scope of protection and `waters down' the DP flavor. Moreover, the new, lower quality DP flavor brings about an apparent `saving' of protection loss---a change from  $L'_{\D'}$ to $ L_{\D}$. Section~\ref{sec:whatif} dissects this phenomenon using a hypothetical swapping analysis of the 2020 Census. %
As a general matter, comparing the relative impact of non-nested sets of invariants is a much more difficult task because it requires weighting the disclosiveness of different invariants against each other. We suggest ways to approach this more difficult task in Section~\ref{secImpactInvariants}.

\section{A Differentially Private Analysis of Data Swapping}\label{sectionSwapping}

\subsection{Data Swapping}

Given a data set $\xd$, partition its set of variables $\Vars$ into two nonempty subsets: the \emph{swapping variables} $\Vswap$ and the \emph{holding variables} $\Vhold$. A data swapping method randomly selects some records of $\xd$ and interchanges the values of their swapping variables $\Vswap$, while leaving their $\Vhold$ unchanged. %
This creates a new data set consisting of individual records whose $\Vhold$  values are as originally observed and whose $\Vswap$ values are possibly different. %
The exact procedure for selecting records and interchanging their $\Vswap$ values varies between different data swapping methods. %
(References to literature on various swapping methods can be found in Appendix~\ref{appendixBackgroundSwapping}, which also summarizes the use of data swapping by national statistical offices across the world.)

Sometimes, swapping is restricted to records that share the same values on a subset of the holding variables $\Vhold$, called the \emph{matching variables} $\Vmat$. More exactly, whenever $\Vmat$ is nonempty, records are partitioned into strata according to their $\Vmat$ values and data swapping is repeated independently within each stratum. Also referred to as the \emph{swap key} \citep{mckennaDisclosureAvoidanceTechniques2018, abowd2023confidentiality}, the matching variables often capture important demographic information that the data custodian would like to preserve. (Because swapping records with different $\Vmat$ values is prohibited, this information is indeed preserved by data swapping.) %

\begin{example}\label{exampleUSCensus2010DAS}
	This example is a simplification of the DAS for the 2010 U.S. Decennial Census. Represent the 2010 Census data as a list of household records, whose variables include all the household's characteristics, as well as the questionnaire responses from each individual associated with that household. The matching variables $\Vmat$ (i.e., swap key) include both the number of voting age persons and the total number of persons in the household. $\Vmat$ also includes a geographic variable %
	$V_g$ (see \cite{uscensusbureauGuidanceGeographyUsers2021}), either the census tract, county, or state of the household. (To the best of our knowledge, the exact choice of $V_g$ has never been made public by the USCB.) $\Vswap$ are the geographic variables nested underneath $V_g$.
	For example if $V_g$ is the county, then $\Vswap$ is the block and tract of the household. All other variables belong to $\Vhold$---in particular, the household and person characteristics. One can imagine the 2010 DAS as digging up pairs of houses of the same size in the same geographic area and swapping their locations but not changing the houses and their occupants.
	In the 2010 DAS, each household is assigned a risk score based on the USCB's assessment of how unique the household is within its neighborhood. These risk scores are used to compute each household's probability of being swapped. Every (non-imputed) household has a nonzero swap probability. Selected households are then swapped with one of their neighbors. (See Appendix~\ref{appendix2010Swap} for a detailed description of the 2010 DAS and references for this information.)
\end{example}

\subsection{What Invariants Does Swapping Preserve?}\label{secSwapInvariants}

Swapping is, very loosely, a synthetic data generation mechanism. Given a data set $\xd$ as input, 
swapping produces a `privacy enhanced' version $\bm Z$ of $\xd$. Both $\xd$ and $\bm Z$ contain the same variables 
as well as the same number of records. Hence, the invariants of swapping are determined by examining what swapping does, and does not, change in the data.

Consider the data set $\xd$ as a matrix whose rows correspond to the records of $\xd$ and whose columns correspond to the variables $\Vars$ of $\xd$. %
Without loss of generality, the holding variables are ordered before the swapping variables so that $\xd$ can be partitioned as $[\xdHold, \xdSwap]$. 
A swapping algorithm randomly selects a permutation $\sigma$ of the rows of $\xd$ and interchanges the rows of the matrix $\xdSwap$ according to $\sigma$. This operation yields $\xdSwap^\sigma$, whose $i$th row is given by the $\sigma(i)$-th row of $\xdSwap$. This defines the swapped data set $\bm Z$ as the matrix $[\xdHold, \xdSwap^\sigma]$, and
the swapping mechanism releases as its output the fully saturated contingency table generated by $\bm Z$. %

One can see that after swapping, 
any statistic generated by only the matrix $\xdHold$ is invariant.
Moreover, since $\Vmat$ is identical among swapped records, any statistic generated by only $\xdMat$ and  $\xdSwap$ is also preserved by swapping. Only statistics that depend nontrivially on both variables $\Vswap$ and $\Vhold \setminus \Vmat$ can be altered by swapping.

\begin{proposition}\label{propInvariantsOfSwap}
	Suppose that $\Vhold \setminus \Vmat$ and $\Vswap$ are nonempty. Then, 
	without loss of generality, we may assume that each of $\Vmat, \Vswap$, and $\Vhold \setminus \Vmat$ are univariate. %
	Denote a value of the matching variable $\Vmat$ by $m$. Similarly, let $h$ and $s$ be values of $\Vhold \setminus \Vmat$ and $\Vswap$ respectively.
	
	Disregarding the ordering of records, the data set $\xd$ can be represented as a 3-dimensional contingency table $H(\xd) = [n_{mhs}^{\xd}]$ of counts in each combination of possible values for $m$, $h$, and $s$. (We will omit the superscript $^{\xd}$ when it is clear from the context.) In general, interior cell counts $n_{mhs}$ are not preserved under swapping and neither are the margins $n_{\cdot hs} = \sum_{m} n_{mhs}$. But swapping does keep $n_{m\cdot s} = \sum_h n_{mhs}$ and $n_{mh\cdot} = \sum_s n_{mhs}$ invariant.
\end{proposition}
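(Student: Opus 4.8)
The plan is to work directly at the level of individual records and track how the within-stratum swapping permutation $\sigma$ acts on each of the three coordinate types. I index the records by $i = 1, \dots, n$ and write the $i$th record as a triple $(m_i, h_i, s_i)$ recording its values for $\Vmat$, $\Vhold \setminus \Vmat$ and $\Vswap$ respectively; the univariate reduction asserted in the statement is obtained by encoding each group of variables as a single variable valued in the relevant product space, so that no generality is lost. With this notation $n^{\xd}_{mhs} = \#\{ i : (m_i, h_i, s_i) = (m,h,s)\}$, and the three margins in question are the corresponding counts after summing out one index. Since $\xd$ and $\bm Z$ differ only through the permuted block $\xdSwap^\sigma$, the $i$th record of $\bm Z$ is $(m_i, h_i, s_{\sigma(i)})$. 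The crucial structural fact, coming from the restriction that only records agreeing on $\Vmat$ may be interchanged, is that $\sigma$ fixes every matching stratum setwise: $m_{\sigma(i)} = m_i$ for all $i$.

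Given this, each of the two invariance claims reduces to a one-line bookkeeping argument. For $n_{mh\cdot}$, I observe that swapping alters only the swapping coordinate of each record, so the pair $(m_i, h_i)$ is identical in $\xd$ and $\bm Z$ for every $i$; hence for each fixed $(m,h)$ the number of records carrying that pair is unchanged, which is exactly $n^{\bm Z}_{mh\cdot} = n^{\xd}_{mh\cdot}$. For $n_{m\cdot s}$, I fix a stratum value $m$ and note that $\sigma$ restricts to a permutation of the index set $\{ i : m_i = m\}$; a permutation leaves a multiset invariant, so the multiset $\{ s_{\sigma(i)} : m_i = m\}$ equals $\{ s_i : m_i = m\}$, and counting how often each value $s$ occurs within the stratum gives $n^{\bm Z}_{m\cdot s} = n^{\xd}_{m\cdot s}$. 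These two facts isolate precisely the roles of the two constraints: holding variables never move (giving the first margin), and swapping is confined within $\Vmat$-strata (giving the second).

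To show the remaining quantities are not in general preserved, I would exhibit a minimal counterexample inside a single stratum. Take two records in stratum $m$ with distinct holding and swapping values, say $(m, h_1, s_1)$ and $(m, h_2, s_2)$ with $h_1 \ne h_2$ and $s_1 \ne s_2$, and let $\sigma$ transpose them. Then the interior cell count $n_{m h_1 s_1}$ drops from $1$ to $0$, so $n_{mhs}$ is not invariant; and since these records also have distinct $h$-values, the margin $n_{\cdot h_1 s_1}$ likewise changes, so $n_{\cdot hs}$ is not invariant either. This simultaneously settles both non-invariance assertions.

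I do not anticipate a genuine obstacle here: the content lies entirely in setting up the correspondence between the matrix/permutation description and the contingency-table counts, and in making explicit that the matching restriction forces $\sigma$ to preserve strata. The only point requiring care is the univariate reduction, where I should check that collapsing the multivariate $\Vmat$, $\Vhold \setminus \Vmat$ and $\Vswap$ into single product-valued variables does not affect which summaries are preserved — which holds because the invariance arguments above depend only on the partition of the columns into the three groups and not on their internal structure.
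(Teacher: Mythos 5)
Your proposal is correct and follows essentially the same record-level bookkeeping as the paper: the holding coordinates $(m_i,h_i)$ never move (giving invariance of $n_{mh\cdot}$), the swapping coordinates are merely permuted within each matching stratum (giving invariance of $n_{m\cdot s}$), and a single transposition of two records with distinct $h$- and $s$-values breaks $n_{mhs}$ and $n_{\cdot hs}$. The only cosmetic difference is that you argue directly about the permutation via multiset invariance, whereas the paper first decomposes $\sigma$ into 2-cycles and checks a single swap; both are sound.
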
	

In other words, there are two contingency tables that remain unchanged by swapping: 1) $\Vmat \times \Vswap$: the cross-classification of the matching variables by the swapping variables; and 2) $\Vhold$: the cross-classification of all the holding variables; while the interior of the contingency table, $(\Vhold\setminus\Vmat) \times \Vswap$ is perturbed by swapping. (Here, as elsewhere in this work, ``$\times$'' and ``$\setminus$'' are respectively the Cartesian product and set difference operators.)

\begin{proof}
	First we justify why we can assume that $\Vmat$, $\Vswap$ and $\Vhold \setminus \Vmat$ are univariate (i.e., that these variable sets are singletons). If $\Vmat$ is empty, replace it with a set consisting of a new variable taking the same value on every record. And if either of $\Vmat$, $\Vswap$ or $\Vhold \setminus \Vmat$ has more than one variable, then cross-classify these variables into a single variable. Neither of these two operations will change the behavior of a swapping method, %
	so we may use them to ensure $\Vmat, \Vswap$ and $\Vhold \setminus \Vmat$ are univariate.
	
	Since every permutation $\sigma$ can be written as the composition of swaps (i.e., two-cycles), it suffices to show that all possible swaps preserve $n_{m\cdot s}$ and $n_{mh\cdot}$ but not necessarily $n_{\cdot hs}$. A swap pairs a record $a$ in categories ${mhs}$ with a record $b$ in ${mh's'}$. It moves $a$ to $mh's$ and $b$ to $mhs'$. The matching category $m$ is the same in $a$ and $b$ by construction. %
	Unless $m = m'$ or $s = s'$, after the swap $n_{mhs}$ and $n_{mh's'}$ decrease by one, and $n_{mh's}$ and $n_{mhs'}$ increase by one. Hence, $n_{m\cdot s}$ and $n_{mh\cdot}$ remain unchanged but $n_{\cdot hs}$ changes whenever $h \ne h'$ and $s \ne s'$.
\end{proof}

\begin{continueexample}{exampleUSCensus2010DAS}
	In the 2010 U.S. Census DAS, the number of adults, children, and households in each block are invariant. (This is the $n_{m\cdot s}$ margin.)  The counts of all the person and household characteristics inside each $V_g$ are also invariant. (This is the $n_{mh \cdot}$ margin.) For example, if $V_g$ is the county, then the aggregate characteristics at the county level remain unchanged by swapping, but these aggregates at the block and tract level are perturbed.
\end{continueexample}

\begin{definition}%
	Under the setup of Proposition \ref{propInvariantsOfSwap}, define the \emph{swapping invariants} $\cswap(\xd)$ for a given choice of $\Vmat, \Vswap$, and $\Vhold$ as the vector of all margins $n_{mh\cdot}$ and $n_{m\cdot s}$, for all possible values of $m,h$ and $s$. For example, if $\Vmat$, $\Vhold \setminus \Vmat$, and $\Vswap$ take values in $\{1,\ldots, \mathcal M\}$, $\{1, \ldots, \mathcal H\}$, and $\{1,\ldots,\mathcal S\}$ respectively, %
	then
	\[\cswap (\xd) = \begin{bmatrix}
		n_{11\cdot} & n_{12\cdot} & \cdots & n_{\mathcal M\mathcal H\cdot} & n_{1\cdot1} & n_{1\cdot2} & \cdots & n_{\mathcal M\cdot \mathcal S}
	\end{bmatrix}^{\mathsf{T}}.\]
\end{definition}

As the following example illustrates, we do not have complete flexibility in choosing the invariants of swapping.

\begin{example}\label{exTDAInvariantsNotSwapInvariants}
	In the 2020 TDA, there are three invariants: 1) the number of people in each state; 2) the number of housing units in each block; and 3) the count of each type of occupied group quarters (e.g., residence halls, nursing facilities, prisons) in each block \citep{u.s.censusbureauDisclosureAvoidance20202021}. We cannot design a swapping algorithm that preserves these---and only these---invariants. In other words, the 2020 U.S. Census invariants do not correspond to any swapping invariants $\cswap$, regardless of the choice of $\Vmat, \Vswap$ and $\Vhold$. Why? Swapping always preserves the one-dimensional marginals: $n_{m \cdot \cdot}, n_{\cdot h \cdot}$ and $n_{\cdot \cdot s}$; but the 2020 DAS does not. For example, the number of 25- to 34-year-old people in the United States is not invariant under the 2020 TDA, but it must necessarily be invariant under any swapping algorithm. %
\end{example}

\subsection{Permutation Swapping Satisfies Pure Differential Privacy Subject to Its Invariants}\label{secPSAIsDP}

In this section, we design a specific data swapping algorithm---called the \emph{Permutation Swapping Algorithm} (PSA) to distinguish it from other data swapping methods---which satisfies the DP flavor \FlavSwapGen. Here $\X$ denotes any set of data sets that all have the same common set of variables, and $\dHamS^r$ denotes the Hamming distance at the resolution $r$ of the PSA's swapping procedure. That is, $\dHamS^r(\xd, \xd')$ denotes the number of records that differ between data sets $\xd$ and $\xd'$ (ignoring the ordering of the records in $\xd$ and $\xd'$):
\begin{equation}\label{eqHamSDistDefn}
   \dHamS^r (\xd, \xd') = \begin{cases}
      \frac{1}{2} \abs{ \xd \ominus \xd'} & \text{if } \abs{\xd} = \abs{\xd'}, \\
     \infty & \text{otherwise},
    \end{cases}
\end{equation} 
where $\ominus$ is the symmetric multiset difference and the resolution $r$ is the type of record in consideration. For example, if the PSA swapped records that correspond to individual persons, then the input premetric of the PSA's DP flavor would be the Hamming distance $\dHamS^p$ on person-records. Alternatively, the PSA could swap household-records, in which case its input premetric would be the Hamming distance $\dHamSh$ at the resolution of households. (This distinction will become important when we compare the PSA with the TDA.) The output premetric of the PSA's DP flavor is the multiplicative distance, which corresponds to the notion of %
pure DP \citep{dwork2006calibrating} and is defined as:
\begin{equation}\label{eqMultDistDefnPart2}
	\Mult(\sfP, \sfQ) = \sup_{E \in \mathscr F} \abs{ \ln \frac{\sfP(E)}{\sfQ(E)}},
\end{equation}
for two probabilities $\sfP$ and $\sfQ$ on the measurable space $\T$ with $\sigma$-algebra $\mathscr F$.

While the PSA was not used in 2010, %
a specific instantiation of it does reflect the essential features of the 2010 DAS's data swapping algorithm (Section~\ref{sec:2010PrivacyLoss}). However, certain aspects of the PSA were made with the specific goal of satisfying DP. For example, a swapping method cannot satisfy \FlavSwapGen\ if the number of swaps it makes is fixed. (To be clear, based on the available public information, we do not believe the 2010 DAS fixes the number of swaps, although it does appear to control this number to some degree.) To see this, suppose that a possible output data set $\bm z$ differs from $\xd \in \D_0$ by $m$ swaps and from $\xd' \in \D_0$ by $m+1$ swaps. If the swapping methods allows a maximum of $m$ swaps, %
then $\bm z$ has nonzero probability given $\xd$ as input but zero probability given $\xd'$, thereby violating \SpecSwapGen\ for any finite $\epsilon_{\mathcal D_0}$. More generally, a necessary condition for a swapping method to satisfy \SpecSwapGen\ for finite $\epsilon_{\D_0}$ is that, given input $\xd \in \D_0$, any data set $\xd' \in \D_0$ has a nonzero probability of being outputted (up to reordering of the rows of $\xd'$).

To ensure this condition, rather than swapping rows of $\xdSwap$ in the same matching category $m$, the PSA instead randomly permutes these rows, a type of data swapping method introduced in \citet{depersioNcycleSwappingAmerican2012}. %
Since we do not want to permute every row of $\xdSwap$, rows are randomly selected, independently with probability $p$, and only these selected rows are shuffled. Or, more accurately, after selecting rows of $\xdSwap$ with matching value $m$, the PSA samples uniformly at random a permutation $\sigma_m : \{1, \ldots, n_{\cdot\cdot\cdot}\} \to \{1, \ldots, n_{\cdot\cdot\cdot}\}$, which fixes nonselected rows (i.e., $\sigma_m(i) = i$ for all nonselected $i$), and deranges selected rows (i.e., $\sigma_m(i) \ne i$ for all selected $i$).
This process is repeated for all values of $m$ so that the final data set, after all permutations have been applied, is given by $\bm Z = [\xdHold, \xdSwap^\sigma]$, where $\sigma$ is defined by $\sigma(i) = \sigma_m(i)$ for record $i$ with matching category $m$. In the case that only one record was selected, there are no possible $\sigma_m$ and so records are reselected. Hence, the probability that a record with matching category $m$ is swapped is $$p \sum_{j=1}^{n_{m \cdot \cdot}-1} \binom{n_{m \cdot \cdot} - 1}{j} p^j (1-p)^{n_{m \cdot \cdot}-1-j}.$$ When $n_{m \cdot \cdot} \gg 1$, the expected fraction of records that will have their swapping variables interchanged is approximately $p$. For this reason, we call $p$ the swap rate.%

Pseudocode for the PSA is provided in Algorithm \ref{algoPermute}. The output is a fully saturated contingency table $C(\bm Z) = [n_{mhs}^{\bm Z}]$ (i.e., a 3-way tensor) computed on the swapped data set $\bm Z$. When $\Vmat$, $\Vhold\setminus\Vmat$, and $\Vswap$ all take a finite number of values, $C(\bm Z) = [n_{mhs}^{\bm Z}]$ is a collection of $\mathcal M$ matrices $C_m(\bm Z)=[n_{mhs}^{\bm Z}]$, for $m=1, \ldots, \mathcal M$, each of which has dimension $\mathcal H \times \mathcal S$. This contingency table $C(\bm Z)$ fully determines $\bm Z$ up to reordering of the rows of $\bm Z$.

\begin{theorem}\label{thmSwapDP}
	Suppose the domain $\X$ is such that every data set $\xd \in \X$ shares the same common set $\Vars$ of %
	variables. Partition $\Vars$ into %
	swapping variables $\Vswap$ and holding variables $\Vhold$, and let $\Vmat \subset \Vhold$ be the (possibly empty) set of matching variables.
    Let $\mathcal B$ be the set of matching strata $m$ for which there exist at least two distinct records in stratum $m$. If $\mathcal B$ is not empty, define $b = \max_{m \in \mathcal B} n_{m \cdot \cdot}$; otherwise define $b = 0$.
    
	Suppose that the PSA (Algorithm \ref{algoPermute}) permutes records at resolution $r$. Then it satisfies \SpecSwapGen\ with
	\begin{equation}\label{eqMainTheorem}
		\epsilon_{\mathcal D} = \begin{cases}
			0 & \mathrm{if\ } b = 0, \\
			\ln (b + 1) - \ln o & \mathrm{if\ } 0 < p \le \frac{\sqrt{b+1}}{\sqrt{b+1}+1} \mathrm{\ and\ } b > 0, \\
			\ln o & \mathrm{if\ } \frac{\sqrt{b+1}}{\sqrt{b+1}+1} \le p < 1 \mathrm{\ and\ } b > 0, \\
			\infty & \mathrm{if\ } p \in \{0,1\} \mathrm{\ and\ } b > 0,
		\end{cases}
	\end{equation}
	where $o = p/(1-p)$.
\end{theorem}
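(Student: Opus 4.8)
The plan is to work directly with the multiplicative distance $\Mult$. Because the PSA's output is the discrete contingency table $C(\bm Z)$, the supremum over events in~\eqref{eqMultDistDefnPart2} is attained on atoms, so it suffices to bound $\abs{\ln\bigl(\Pr[C(\bm Z)=\bm z\mid\xd]/\Pr[C(\bm Z)=\bm z\mid\xd']\bigr)}$ over all output tables $\bm z$ and all $\xd,\xd'$ lying in a common universe $\D\in\DSwap$, and to show this is at most $\epsilon_{\mathcal D}\,\dHamSr(\xd,\xd')$. Two reductions make this tractable. First, the PSA draws its permutation independently in each matching stratum $m$ and the table $C(\bm Z)$ factorises across strata; hence the log-ratio is a sum of per-stratum contributions and it suffices to treat a single stratum. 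Second, because every $\xd,\xd'\in\D$ share the invariants $\cswap$, they have identical margins and are therefore joined by a chain of elementary record-transpositions that stay inside $\D$ (each transposition of swap values preserves $\cswap$); since $\Mult$ obeys the triangle inequality and $\dHamSr$ counts the resolution-$r$ records in which two datasets differ, it is enough to bound the log-ratio for a single transposition, divide by its distance, and sum along the chain.

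The combinatorial heart is the single-stratum, single-transposition computation. In a stratum of size $N\le b$ the holding values are fixed and only the assignment of swap values is randomised, so $C(\bm Z)$ is determined by which holding-group each swap value lands in. I would write $\Pr[C(\bm Z)=\bm z\mid\xd]$ as a sum, over permutations $\sigma_m$ consistent with $\bm z$, of the selection weight $p^{\lvert S\rvert}(1-p)^{N-\lvert S\rvert}$ (renormalised to exclude the invalid singleton-selection event) times the uniform derangement weight $1/D_{\lvert S\rvert}$ on the selected set $S$. Comparing this with the analogous expression for a neighbour $\xd'$ obtained by transposing the swap values of two records, the worst-case output tables are those that are maximally diagnostic of whether a given record was left fixed or moved.

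These extremal tables produce the two branches of~\eqref{eqMainTheorem}. One direction is governed by the odds of selection: an output that forces a record to have been permuted under $\xd$ while it could have stayed fixed under $\xd'$ contributes a factor $o=p/(1-p)$, giving the $\ln o$ term. The opposite direction is governed by the derangement multiplicity: a selected record may be sent to any of up to $b$ other positions, diluting the probability of any single landing spot by a factor of order $b$ relative to the not-selected event, which yields the $\ln(b+1)-\ln o$ term. Taking the supremum of $\abs{\ln(\cdot)}$ over $\bm z$ selects the larger of the two one-sided magnitudes, so $\epsilon_{\mathcal D}=\max\{\ln o,\ \ln(b+1)-\ln o\}$; the two expressions coincide at $o=\sqrt{b+1}$, i.e.\ $p=\sqrt{b+1}/(\sqrt{b+1}+1)$, reproducing the stated case split. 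The degenerate cases are then quick: when $b=0$ no stratum contains two distinct records, the output is deterministic, and the loss is $0$; when $p\in\{0,1\}$ the selection is deterministic, so some neighbour's output has zero probability under the other input, forcing $\epsilon_{\mathcal D}=\infty$.

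The main obstacle I anticipate is the exact evaluation and extremisation in the middle step. Because the uniform-derangement-over-the-selected-set coupling ties the selection and permutation randomness together, each $\Pr[C(\bm Z)=\bm z\mid\xd]$ is a derangement-weighted sum rather than a product, and identifying the output table that maximises the likelihood ratio — and proving the precise constant is $b+1$ rather than a looser function of the stratum size — requires careful control of the derangement numbers $D_{\lvert S\rvert}$ together with a verification that the worst loss-to-distance ratio is realised in the largest stratum, so that summing along the transposition chain never exceeds the Lipschitz budget.
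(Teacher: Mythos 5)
Your overall architecture---bounding the likelihood ratio on atoms, factorising across matching strata by the independence of the per-stratum permutations, and extracting the two branches of~\eqref{eqMainTheorem} from the selection odds $o$ versus the derangement multiplicity---matches the paper's proof in Appendix~\ref{sectionAppendixProofMainTheorem}. However, your second reduction, namely joining $\xd$ and $\xd'$ by a chain of elementary transpositions and summing the per-transposition bounds via the triangle inequality for $\Mult$, has a genuine gap. The triangle inequality only gives $\Mult(\sfPx, \sfPxdash) \le \epsilon_{\D} \sum_i \dHamSr(\xd_{i-1}, \xd_i)$ along the chain, and this sum can strictly exceed $\dHamSr(\xd, \xd')$: each non-vacuous transposition moves exactly two records, so realising a permutation that contains a cycle of length $\ell \ge 3$ requires $\ell - 1$ transpositions with total Hamming length $2(\ell - 1) > \ell$. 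Concretely, take $\xd = [(m,h_1,s_1),(m,h_2,s_2),(m,h_3,s_3)]$ and $\xd' = [(m,h_1,s_2),(m,h_2,s_3),(m,h_3,s_1)]$ with distinct $h_i$ and distinct $s_i$: both datasets share the margins $n_{mh\cdot}$ and $n_{m\cdot s}$, so they lie in a common universe with $\dHamSr(\xd,\xd') = 3$, yet any transposition chain between them has total Hamming length at least $4$, so your argument delivers only $\Mult(\sfPx,\sfPxdash) \le 4\epsilon_{\D}$ rather than the required $3\epsilon_{\D}$.

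The paper closes exactly this hole with Lemma~\ref{lemmaDSteps}, which constructs a \emph{single} permutation $\rho$ deranging exactly $\Delta = \dHamSr(\xd,\xd')$ records with $C(\rho(\xd)) = C(\xd')$. The bijection $g \mapsto g \circ \rho$ between the permutations consistent with a given output table then yields $\abs{k_g - k_{g\circ\rho}} \le \Delta$, and the matched-term ratio $o^{\delta}\, d(k_g - \delta)/d(k_g)$ with $\abs{\delta} \le \Delta$ is bounded by $\exp(\Delta\epsilon_{\D})$ using the estimate $d(k)/d(k-a) \le (k+1)^a$ (Lemma~\ref{lemmaBoundDisarrangement}). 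Note that this term-by-term bijective comparison also disposes of the ``derangement-weighted sum'' obstacle you flag at the end: one never extremises the ratio of sums directly, only the ratio of paired summands, each of which depends on $g$ only through $k_g$. If you replace the transposition chain with this global comparison, the remainder of your outline---including the case split at $p = \sqrt{b+1}/(\sqrt{b+1}+1)$ and the degenerate cases $b=0$ and $p \in \{0,1\}$---goes through as in the paper.
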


\begin{algorithm}[ht]
	\caption{The Permutation Swapping Algorithm (PSA)}%
\label{algoPermute}
\algorithmicrequire A data set $\xd \in \X$ whose set of variables $\Vars$ is partitioned into holding variables $\Vhold$ and swapping variables $\Vswap$; along with a set of matching variables $\Vmat \subset \Vhold$ that define the matching strata. \\
\begin{algorithmic}[1]
	\STATE Set $\bm Z \gets \xd$
	\FORALL{%
		matching strata $m$}
	\IF{$n_{m\cdot\cdot} = 0$ or $n_{m\cdot\cdot} = 1$}
	\STATE \CONTINUE
	\ENDIF
	\FORALL{records $i$ in stratum $m$}\label{lineSelectRecords}
	\STATE Select $i$ with probability $p$
	\ENDFOR
	\IF{0 records selected}
	\STATE \CONTINUE
	\ELSIF{exactly 1 record selected}
	\STATE Deselect all records
	\STATE \GOTO line \ref{lineSelectRecords}
	\ENDIF
	\STATE Sample uniformly at random a permutation $\sigma_m$, which fixes the unselected records and deranges the selected records
	\STATE \COMMENT{Permute the swapping variables according to $\sigma_m$:}
	\STATE $\bm Z \gets [\bm Z_{\mathrm{Hold}}, \bm Z^{\sigma_m}_{\mathrm{Swap}}]$
	\STATE Deselect all records
	\ENDFOR
	\RETURN the fully saturated contingency table $C(\bm Z)$%
\end{algorithmic} 	
\end{algorithm}

It is worth noting that the monotonic increase of $\epsilon_{\mathcal D}$ with $b$ may seem counterintuitive, until one realizes that the PLB quantified in Theorem~\ref{thmSwapDP} does not include the loss due to the invariants themselves. In other words, the more invariants the PSA imposes---which tends to lead to smaller $b$---the less information there is left for the PSA to protect, and hence it is easier to achieve smaller $\epsilon_{\mathcal D}$. This phenomenon is not unique to the PSA, but reflects the fundamentally \textit{relative} nature of DP. %

A proof of Theorem~\ref{thmSwapDP} is presented in Appendix~\ref{sectionAppendixProofMainTheorem}. Here we give a broad sketch for the case $0 < p \le 0.5$ %
and $b > 0$. Because $\sqrt{b+1}/(\sqrt{b+1}+1) > 0.5$, we need to show, for fixed data sets $\xd, \xd'$, and $\bm z$ in the same universe $\mathcal D \in \scD_{\cswap}$, that the budget $\epsilon_{\D} = \ln (b+1) - \ln o$ satisfies the inequality
\begin{equation}\label{eqSketchProofWTS}
\sfP [C([\xdHold, \xdSwap^\sigma]) = C(\bm z)] \le \exp (k \epsilon_{\D}) \sfP [C([\xdHold', \xdSwap'^{\sigma'}]) = C(\bm z)],
\end{equation}
where $k = \dHamSr(\xd, \xd')$. The probabilities in Equation~\ref{eqSketchProofWTS} are over the random sampling of the permutations $\sigma$ and $\sigma'$ in Algorithm \ref{algoPermute}. %
We can show that there exists a derangement $\rho$ of $k$ records such that $C(\xd) = C([\xdHold', \xdSwap'^\rho])$ (Lemma~\ref{lemmaDSteps}). (A derangement is a permutation that does not fix any rows.) Moreover, there is a bijection between the possible $\sigma$ and $\sigma'$ given by $\sigma' = \sigma \circ \rho$. Hence, if $k_{\sigma}$ is the number of records deranged by $\sigma$, we have 
\begin{equation}\label{eqInequalityOfM}
k_{\sigma} - k \le k_{\sigma'} \le k_{\sigma} + k.
\end{equation}
For such pairs of possible $\sigma$ and $\sigma'$, the ratio $\sfP(\sigma)/\sfP(\sigma')$ can be bounded in terms of $o^{k_{\sigma}-k_{\sigma'}}$ and the ratio between the number of derangements of size $k_{\sigma'}$ and of size $k_{\sigma}$. For $o \le 1$, this can in turn be bounded by $(b+1)^k o^{-k}$ using the inequality in Equation~\ref{eqInequalityOfM}. Hence $\epsilon_{\D} = \ln (b+1) - \ln o$ does indeed satisfy Equation~\ref{eqSketchProofWTS}. %

In Appendix~\ref{appendixOptimality}, we prove that the PLB $\epsilon_{\D}$ for the PSA given in Theorem~\ref{thmSwapDP} is tight in the weak sense that under some mild assumptions the difference between the right and left sides of the inequality in Equation~\ref{eqSketchProofWTS} is arbitrarily close to zero for some choice of $\xd$, $\xd'$, and $\bm z$.

\begin{remark}\label{remarkbFunctionD}
Since $n_{m\cdot\cdot}$ is an invariant, $n_{m \cdot\cdot}^{\xd} = n_{m\cdot\cdot}^{\xd'}$ for all $\xd$ and $\xd'$ in the same universe. Thus, $b$ is a function of $\D$ and hence so is the PLB $\epsilon_{\D}$ given in Equation~\ref{eqMainTheorem}. In the context of the PSA, we will use $\epsilon$ to denote the value of $\epsilon_{\DSwap(\xd^*)}$ under the universe $\DSwap(\xd^*)$ corresponding to the realized confidential data $\xd^*$. %
We will also report the PSA's protection loss budget in terms of this value $\epsilon$ and omit the values of $\epsilon_{\D}$ for other universes $\D$. 
Even though it is a function of the realized data $\xd^*$, the value of $\epsilon$ can still be publicly reported under the PSA's DP specification without additional protection loss \citep{bailieRefreshmentStirredNot2026}.
\end{remark}

\subsection{A Numerical Demonstration: The 1940 Census Full Count Data}\label{sec:numerical_demo}

We demonstrate the PSA using the 1940 U.S. Decennial Census full count data, obtained from the IPUMS USA Ancestry Full Count Database \citep{ipumsdata}. For the 1940 Census, the smallest geography level is county, hence swapping is performed among household units across counties within each state---that is, $\Vswap$ is set to be each household's county indicator. The matching variables (or swap key) $\Vmat$ are the number of persons per household and the household's state.  Our analysis is focused on the ownership status of household dwellings, an indicator variable taking value of either owned (including on loan) or rented. This is our $\Vhold \setminus \Vmat$. The invariants $\cswap$ induced by this swapping scheme include 1) the total number of owned versus rented dwellings at each of the household sizes at the state level, and 2) the total number of dwellings at each of the household sizes at the county level. In our notation, these are the $n_{m\cdot s}$'s and the $n_{mh\cdot}$'s, respectively. 

We restrict our illustration to the state of Massachusetts. Table~\ref{tab:swapped_table} compares the two-way tabulations of dwelling ownership by county based on the original data and one instantiation of the swapping mechanism using a high swap rate of $p=50\%$. The row margin of either table is the county-level total dwellings and is invariant due to $n_{\cdot h \cdot} = \sum_m n_{mh \cdot}$. The column margin is the total number of owned versus rented dwellings in Massachusetts and is invariant due to $n_{\cdot\cdot s} = \sum_m n_{m \cdot s}$.

\begin{table}[t!]
\centering
\caption{A comparison of two-way tabulations of dwelling ownership by county based on the 1940 Census full count for the state of Massachusetts (left) and one instantiation of the PSA at $p = 50\%$ (right).}

\begin{tabular}{l@{\hskip 36pt}llr|llr}
	\toprule
	County & Owned & Rented & Total & Owned & Rented & Total \\ 
	&&&& (swapped) & (swapped) & (swapped) \\
	\midrule
	Barnstable & 7,461 & 3,825 & 11,286 & 5,907 & 5,379 & 11,286 \\ 
	Berkshire & 14,736 & 18,417 & 33,153 & 13,770 & 19,383 & 33,153 \\ 
	Bristol & 33,747 & 63,931 & 97,678 & 35,537 & 62,141 & 97,678 \\ 
	Dukes & 1,207 & 534 & 1,741 & 946 & 795 & 1,741 \\ 
	Essex & 53,936 & 81,300 & 135,236 & 52,631 & 82,605 & 135,236 \\ 
	Franklin & 7,433 & 6,442 & 13,875 & 6,337 & 7,538 & 13,875 \\ 
	Hampden & 30,597 & 58,166 & 88,763 & 32,267 & 56,496 & 88,763 \\ 
	Hampshire & 9,427 & 8,630 & 18,057 & 8,145 & 9,912 & 18,057 \\ 
	Middlesex & 104,144 & 147,687 & 251,831 & 100,372 & 151,459 & 251,831 \\ 
	Nantucket & 593 & 432 & 1,025 & 471 & 554 & 1,025 \\ 
	Norfolk & 44,885 & 40,285 & 85,170 & 38,566 & 46,604 & 85,170 \\ 
	Plymouth & 24,857 & 23,882 & 48,739 & 21,549 & 27,190 & 48,739 \\ 
	Suffolk & 49,656 & 176,553 & 226,209 & 67,357 & 158,852 & 226,209 \\ 
	Worcester & 53,126 & 78,535 & 131,661 & 51,950 & 79,711 & 131,661 \\
	\midrule
	Total & 435,805 & 708,619 & 1,144,424 & 435,805 & 708,619 & 1,144,424 \\ 
	\bottomrule
\end{tabular} 
 \flushleft\footnotesize 
  { \textit{Note:} Total dwellings per county, as well as total owned versus rented units per state, are invariant; the values of these statistics are not shown.
    }
\label{tab:swapped_table}
\end{table}

Table~\ref{tab:plb_ma} supplies the conversion between different swap rates to the protection loss $\epsilon$ of the PSA. Under the current swapping scheme, the largest stratum size delineated by $\Vmat$ is $b = 264,331$, consisting of all two-person households of Massachusetts. Therefore by Equation~\ref{eqMainTheorem}, we see that a low swap rate of $1\%$ corresponds to $\epsilon = 17.08$, whereas a high swap rate of $50\%$ corresponds to $\epsilon = 12.48$. It is worth noting that since the invariants $\cswap$ are fixed in this analysis, the different values of $\epsilon$ presented in this table can be directly interpreted as SDC guarantees of different quantified strengths. On the other hand, as we alluded to earlier, the protection losses corresponding to different invariants $\cswap$ are not directly comparable---see the discussion in Section~\ref{sec:compare-das-psa}. 

\begin{table}[t!]
\centering
\caption{Conversion of (expected) swap rate $p$ to protection loss $\epsilon$.}
\begin{tabular}{lrrrr}
	\toprule
	$p$ & 0.01 & 0.05 & 0.10 & 0.50 \\ 
	$\epsilon$ & 17.08 & 15.43 & 14.68 & 12.48 \\ 
	\bottomrule
\end{tabular}
 \flushleft\footnotesize 
  { \textit{Note:} Under this swapping scheme, the largest stratum size is $b = 264,331$, the number of all two-person households of Massachusetts.} 
\label{tab:plb_ma}
\end{table}

We also examine the accuracy of the two-way tabulation as a function of swap rate. Figure~\ref{fig:mape_ma} shows the mean absolute percentage error (MAPE) in the two-way tabulation induced by swapping at different swap rates from $1\%$ to $50\%$. The variability across runs is small: each boxplot reflects $20$ independent runs of the PSA.

Here, the MAPE of a swapped table from its true table is defined as the cell-wise average of the ratio between their absolute differences and the true table values. The MAPE in Figure~\ref{fig:mape_ma} is with respect to the contingency table of county by dwelling ownership in Massachusetts and is defined in the notation of Proposition~\ref{propInvariantsOfSwap} as 
\[\frac{1}{\mathcal H \mathcal S} \sum_{h,s} \frac{\abs{n_{\cdot hs}^{\xd} - n_{\cdot hs}^{\bm Z}}}{n_{\cdot hs}^{\xd}},\]
where $\xd$ is the true table, $\bm Z$ is the swapped table, $h$ is the county indicator, and $s$ the indicator of whether the house was rented or owned.

\begin{figure}
\centering
\includegraphics[width = .8\textwidth]{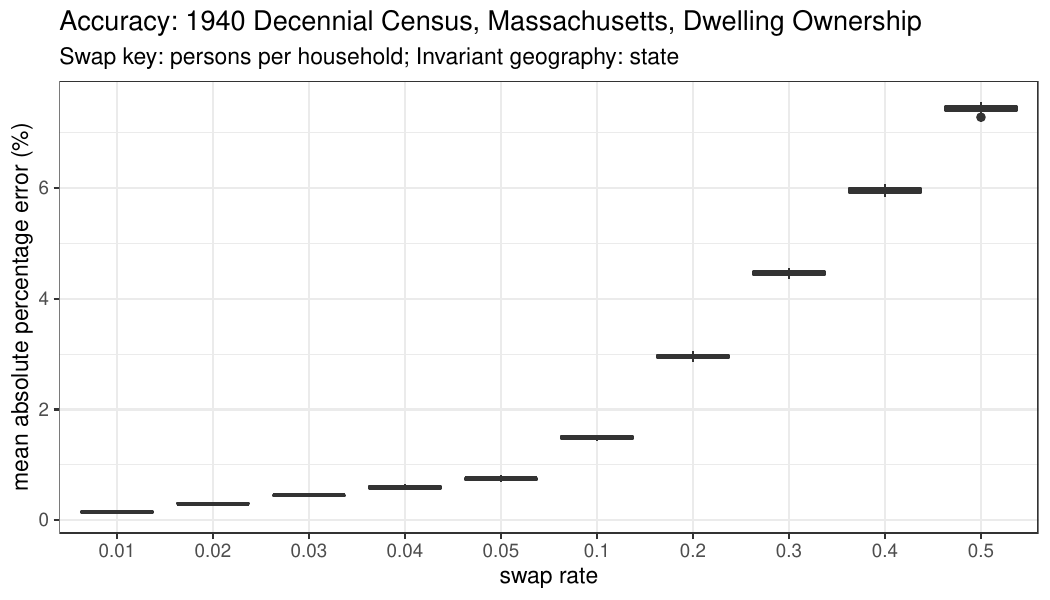}
\caption{Mean absolute percentage error in the two-way tabulation of dwelling ownership by county induced by the Permutation Swapping Algorithm applied to the 1940 Census full count data of Massachusetts, at different swap rates from $1\%$ to $50\%$. Each boxplot reflects $20$ independent runs of the PSA at that swap rate.}
\label{fig:mape_ma}
\end{figure}

The accuracy assessment we demonstrate here is highly limited. The analysis above assesses only cell-wise departures of the swapped two-way marginal table from its confidential counterpart. It does not capture potential loss of data utility in terms of multivariate relational structures. It is well understood in the literature that swapping erodes the correlation between $\Vswap$ and $\Vhold \setminus \Vmat$ \citep[see, e.g.,][]{slavkovic2010synthetic, drechslerSamplingSynthesisNew2010, mitra2006adjusting}. For the current example, this means the countywide characteristics of household dwellings (other than their size) are not preserved, but other multivariate relationships are. While an in-depth investigation into the utility of swapping is out of the scope of this article, we return to the subject of data utility in \citep{bailieRefreshmentStirredNot2024c} to discuss the implication this work may have on that line of inquiry.

\subsection{Estimating the Differential Privacy Specification of the 2010 DAS}\label{sec:2010PrivacyLoss}

If we entertain the assumption that the 2010 DAS used the PSA, we could obtain a crude sketch of the SDC guarantee afforded to the 2010 Census data. (We examine the validity of this assumption in detail in Appendix~\ref{appendixCompareDAS2010Swapping}.) As detailed in Example~\ref{exampleUSCensus2010DAS}, the 2010 DAS swapped household-records. Therefore, the DP flavor for the 2010 DAS would be \FlavSwap. Here the domain $\Xcef$ is the set of all possible Census Edited Files (CEFs), where the term CEF refers to the data set that consists of the census data after editing and imputation and is the input into the USCB's DAS.

The 2010 DAS utilized a swap key that includes the household size as well as the household voting age population and some geography (either tract, county, or state) (Example~\ref{exampleUSCensus2010DAS}). %
As we are unable to locate the 2010 Census data that allows for the precise calculation of $b$ pertaining to this particular swapping scheme, the swap key we consider here is coarser: we set the matching variables $\Vmat$ to be the household's size and state (so we do not include the third matching variable of the 2010 DAS, the household count of voting age persons). %
This results in a value of $b = 3.65$ million, which---as we use a coarser swap key---is an upper bound for the actual $b$ of the 2010 Census. %
Combining $b = 3.65$ million with a purported swap rate $p$ between 2--4\% \citep{boyd2022differential} implies that %
the nominal budget $\epsilon$ is between $18.29$ and $19$. 

The range 18.29--19 is an overestimate for the PLB; using the actual 2010 swap key would decrease the value of $b$ and hence result in a smaller value of $\epsilon$. However, we emphasize that the range 18.29--19 for the value of $\epsilon$ does not necessarily reflect the PLB of the 2010 DAS, but rather the PLB of the PSA when we choose its parameters to reflect what we know about the 2010 DAS. Moreover, as is always the case, this protection loss budget must be interpreted within the context of its DP flavor. Crucially, the DP flavor \FlavSwap\ for the above instantiation of the PSA includes the invariants $\Vhold$ and $\Vswap \times \Vmat$ (Proposition~\ref{propInvariantsOfSwap}). Under the 2010 parameter choices, these invariants are the counts of households by number of occupants at the block level, and all cross-classifications of nongeographical variables at the state level. Hence, the measure of protection loss provided by the above values of $\epsilon$ are modulo any SDC leakage caused by the release of these invariants.

\subsection{What Does the Permutation Swapping Algorithm's Budget Look Like?}

\begin{figure}
\centering
\includegraphics[width=.9\textwidth]{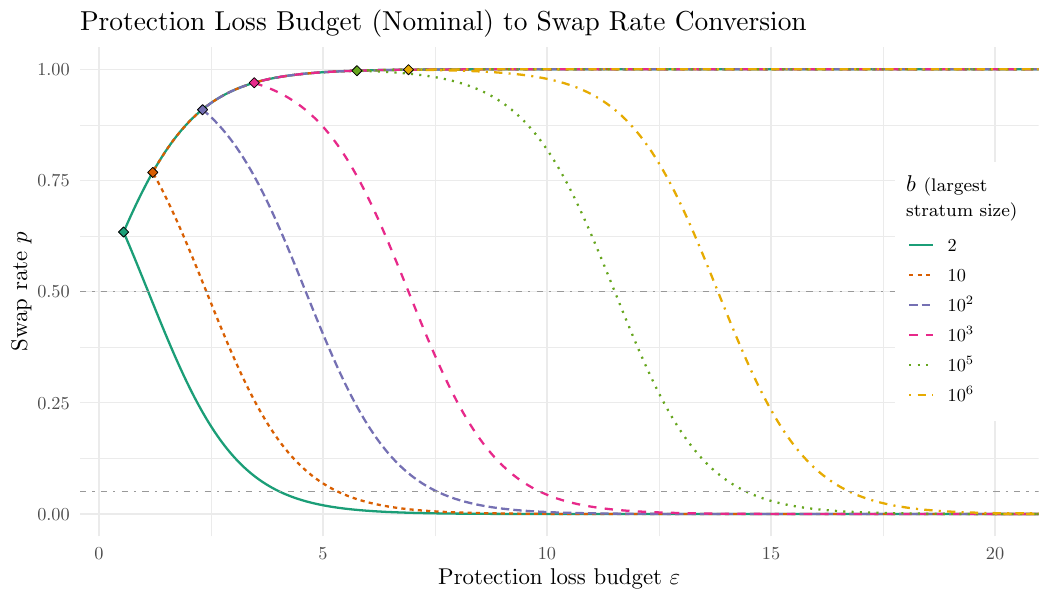}
\caption{Conversion between the nominal protection loss budget $\epsilon$ and the swap rate $p$ for the Permutation Swapping Algorithm. Color and line type encode different values of $b$, the size of the largest stratum delineated by $\Vmat$ (from $2$ to $1$ million). Outlined diamonds indicate the smallest $\epsilon$ attainable for each $b$. Grey dotted horizontal lines correspond to swap rates of 5\% and 50\% respectively. 
The protection loss budgets are nominal in that the statistical disclosure control guarantee they afford must be understood in the context of $\cswap$ (and hence the values of $\epsilon$ across different values of $b$ are not immediately comparable).}
\label{fig:p-by-epsilon}
\end{figure}

Figure~\ref{fig:p-by-epsilon} provides a visual illustration of Theorem~\ref{thmSwapDP}, connecting the PLB $\epsilon$ to the swap rate $p$ for a number of values for the largest stratum size $b$. 
Three observations are worth noting. First, for each $b$, there exists a smallest $\epsilon$, call it $\epsilon^{(b)}$, which lower bounds the PLB of the PSA (regardless of its swap rate $p \in [0, 1]$). The minimum budget $\epsilon^{(b)} = \ln(b+1)/2$ is achieved by the swap rate $p^{(b)} = \sqrt{b+1}/(\sqrt{b+1}+1)$. %
For each $b$ in Figure~\ref{fig:p-by-epsilon}, this quantity $\epsilon^{(b)}$ is marked by an outlined diamond. Importantly, the larger the $b$, the larger the minimum possible budget $\epsilon^{(b)}$. For example, when $b = 10$, $\epsilon^{(b)}$ is $1.20$ (attained at $p^{(b)} = 77\%$);
whereas when $b = 10^6$, $\epsilon^{(b)}$ is  $6.91$ (at $p^{(b)} = 99.9\%$). %

That some PLBs are not attainable for a fixed $b$ follows from the fact that the ratio $\sfP(\bm z \mid \bm x)/\sfP(\bm z \mid \bm x')$ of probabilities of a swapped data set $\bm z$ from two different input data sets $\xd$ and $\xd'$ depends not just on the swap rate $p$ but also the ratio $r$ of the number of derangements of size $\dHamSh(\xd, \bm z)$ to the number of derangements of size $\dHamSh(\xd', \bm z)$. (This is because the PSA selects $\dHamSh(\xd, \bm z)$ records and then samples derangements of size $\dHamSh(\xd, \bm z)$ uniformly at random.) This ratio $r$ is upper bounded by $(b+1)^{\dHamSh(\xd, \xd')}$, which means $\epsilon$ must be at least $\ln(b+1)-\ln p + \ln(1-p)$.

Second, for every $b$ and every budget $\epsilon > \epsilon^{(b)}$, two different swap rates can achieve that budget $\epsilon$, with the higher rate often being very close to $100\%$. For example at $b = 10$, a swap rate of either $35.4\%$ 
or $95.2\%$ %
achieves the nominal budget of $\epsilon = 3$. The mathematical reason behind this is that, for large $p$ (i.e., $p > p^{(b)}$) the ratio $r$ is dominated by the odds $o = p/(1-p)$, in which case $[\ln \sfP(\bm z \mid \bm x)-\ln\sfP(\bm z \mid \bm x')]/\dHamSh(\xd, \xd')$ is maximized when $\dHamSh(\xd', \bm z)$ %
is as small as possible. This results in $\epsilon = \ln o$, while, as explained in the previous paragraph, $\epsilon = \ln(b+1) - \ln o$ for $p \le p^{(b)}$. Since the former $\epsilon$ is monotone increasing in $p$ and the latter monotone decreasing, there are two swap rates $p$ corresponding to any $\epsilon > \epsilon^{(b)}$.
This is akin to the behavior of the randomized response mechanism, where a large probability $p_{\mathrm{RR}} > 0.5$ of flipping the binary confidential answer inadvertently preserves statistical information, thereby achieving the same budget $\epsilon = \abs{\ln o_{\mathrm{RR}}}$ as $1-p_{\mathrm{RR}}$.%

Third and most importantly, we emphasize that the budgets visualized in Figure~\ref{fig:p-by-epsilon} are \emph{nominal} in the sense that the SDC guarantee they afford must be understood with respect to the full context as outlined by the PSA's DP specification. %
An aspect of this context 
is $b$, the size of the largest stratum of $\Vmat$, and as a result, the same value of $\epsilon$ across different $b$'s should not be equated to be the same SDC guarantee. Indeed, the ordering of the $b$ curves in the figure suggests a seemingly peculiar fact that, for a larger $b$, a higher $p$ is needed to achieve the same $\epsilon$. This apparent contradiction is explained by a point we have repeatedly made: for a fixed data set, a change in the value of $b$ requires that the swapping invariants, and hence the PSA's SDC guarantee, also change.

\section{A Differentially Private Analysis of the TopDown Algorithm}\label{sectionCensus-TDA}

This section provides a DP specification for the TDA \citep{uscensusbureauTopDown2023,abowd2022topdown}. The TDA was used to produce the P.L. 94-171 Redistricting Summary File (PL) \citep{uscensusbureauPLnational2021,uscensusbureauPLstate2021} and the Demographic and Housing Characteristics File (DHC) \citep{uscensusbureauDHC2023, cumings-menonDisclosureAvoidance20202025} for the 2020 U.S. Decennial Census. Four other products---the Demographic Profile \citep{census2023DemographicProfile}, the Privacy-Protected Microdata Files (PPMF) \citep{uscensusbureauPPMF2024}, the Redistricting and DHC Noisy Measurement Files (NMF) \citep{uscensusbureau2020CensusRedistricting2023, uscensusbureauNMF2023}, and 118th Congressional District Summary File \citep{uscensusbureau2020Census118th2023}---were also derived during the production of the PL and the DHC. Hence the publication of these four additional data products do not contribute to additional privacy loss, and our DP specifications for the PL and the DHC automatically extend to cover the release of all six products.

We prove in Theorem~\ref{thmTDASatisfiesDP} that the TDA satisfies zero-concentrated DP (zCDP) \citep{bunConcentratedDifferentialPrivacy2016a} subject to its invariants. By this we mean that the TDA satisfies a DP specification whose output premetric is the normalized R\'enyi metric $\Dnor$---the output premetric corresponding to zCDP (see Appendix~\ref{appZCDP})---and whose multiverse contains the TDA's invariants. More exactly, we show that the TDA satisfies the DP flavor \FlavTDA, where %
$\scD_{\cTDA}$ is the multiverse induced by the TDA's invariants: the state population totals; the total number of housing units in each census block; and the count of each type of occupied group quarters in each block. By proving that the TDA cannot satisfy zCDP (with input premetric $\dHamS^p$) for any finite protection loss budget without conditioning on these invariants, we will also show that the TDA's DP flavor must have $\scD_{\cTDA}$ (or a refinement of $\scD_{\cTDA}$) as its multiverse.

The TDA, summarized in Algorithm~\ref{algoTDA}, was run twice for the 2020 Census---once to produce the PL and then a second time for the DHC. The PLBs associated with releasing these files are given in Table~\ref{tab:budgets_2020}. It is a two-step procedure: The first step (called the ``measurement phase'' in \citeauthor{abowd2022topdown}, \citeyear{abowd2022topdown}) produces the NMF $\bm T_p(\xd_p)$ and $\Th(\xdh)$. Here $\xd_p$ and $\xdh$ denote the representations of the CEF at the person and household levels, respectively. The NMF are noisy versions of tabular summaries $\bm Q_p(\xd_p)$, at the person level, and $\Qh(\xdh)$, at the household level, respectively. (In this section, we will include group quarters as households for the purposes of conciseness.) %
The tabular summaries $\bm Q_p(\xd_p)$ and $\Qh(\Xh)$ are different for the PL and the DHC, but, roughly, they are the raw statistics that the USCB would like to include in each of these files. For example, when releasing the PL, $\Qp(\Xp)$ and $\Qh(\Xh)$ are the statistics in this file, but aggregated directly from the census microdata without any noise. (However, to improve accuracy, the bureau adds some additional statistics to $\Qp(\xd_p)$ and $\Qh(\xdh)$, which do not appear in the PL.) %
Discrete Gaussian noise is added to $\bm Q_p(\xd_p)$ and $\Qh(\xdh)$ to produce the NMFs $\bm T_p(\xd_p)$ and $\Th(\xdh)$.

\begin{table}[ht]
\centering
\caption{The protection loss budgets of the mechanisms $\Tp$ (person) and $\Th$ (household) used in the first step of the TopDown Algorithm to produce the P.L. 94-171 Redistricting Summary File (PL) and Demographic and Housing Characteristics File (DHC).}
{\small
	\begin{tabular}{r l |cc}
		\toprule
		& & $\rho^2$%
	& $\epsilon$ (with $\delta = 10^{-10}$) \\
	\midrule
	PL & Household & 0.07 & 2.70 \\
	& Person & 2.56 & 17.90 \\
	DHC & Household & 7.70 & 34.33 \\
	& Person & 4.96 & 26.34 \\
	\midrule
	Total && 15.29 & 52.83 \\
	\bottomrule
\end{tabular}
\normalsize}
\flushleft\footnotesize 
   {\textit{Note:} The value of $\epsilon$ for each row is computed using the conversion $\epsilon = \rho^2 + 2 \rho \sqrt{-\ln \delta}$ given in \citet{bunConcentratedDifferentialPrivacy2016a} and adopted by the U.S. Census Bureau. (Hence the aggregate loss of 52.83 is not the sum of the individual $\epsilon$'s.) We follow the U.S. Census Bureau's choice of $\delta = 10^{-10}$. \\
   {Source: \citet{uscensusbureau20230403PrivacylossBudget2023}. }}
    \label{tab:budgets_2020}
\end{table}

In the second step (called the ``estimation phase'' in \citeauthor{abowd2022topdown}, \citeyear{abowd2022topdown}), the PPMF $\Zp$ and $\Zh$ are produced by solving a complex optimization problem. (The PPMF is also called the Microdata Detail File by \citeauthor{abowd2022topdown}, \citeyear{abowd2022topdown}.) The PPMF $\Zp$ and $\Zh$ agree with the CEF $\Xp, \Xh$ on the invariants $\cTDA$. %
In addition, the PPMF $\Zp$ and $\Zh$ for the DHC are consistent with related statistics in the PL \citep{uscensusbureauFactsheetDisclosureAvoidance2023}. To enforce this consistency, the PL $\bm P$ is passed as input into the TDA when producing the DHC and a constraint $\bm H(\Zp, \Zh) = \bm P$ is added to the optimization problem. %
(The input $\bm P$ is not used by the TDA in producing the PL.)

The PL and the DHC are tabulations of the PPMF data sets $\Zp$ and $\Zh$. In addition to the PL and the DHC, the USCB released the NMF $\Tp(\Xp)$ and $\Th(\Xh)$ produced for the PL and the DHC \citep{uscensusbureauReleaseDatesSet2023}, and the PPMF $\Zp$ and $\Zh$ produced for the DHC \citep{uscensusbureau2020CensusData}. The Demographic Profile and the 118th Congressional District Summary File are retabulations of the DHC \citep{santos2020CensusDemographic, uscensusbureau2020Census118th2023}.

\begin{algorithm}
\caption{Overview of the TopDown Algorithm \citep{abowd2022topdown}, focusing on aspects salient to statistical disclosure control.} 
\label{algoTDA}
\begin{algorithmic}
\REQUIRE
\STATE A CEF $\xd \in \Xcef$ with representations $\xd_p$ and $\xdh$ at the person and household levels respectively;
\STATE Person and household queries $\Qp$ and $\Qh$;
\STATE Privacy noise scales $\Dp$ and $\Dh$;
\STATE Constraints $\cTDA^+$ (including invariants $\cTDA$, edit constraints and structural zeroes);
\STATE (Optional) previously released statistics $\bm P$ along with an aggregation function $\bm H$, which specifies the relationship between $\bm P$ and the PPMF $\bm Z_{p}$ and $\Zh$.%
\end{algorithmic}
\begin{algorithmic}[1]
\STATE Step 1: Noise Infusion
\SCOPE
\STATE Sample discrete Gaussian noise \citep{canonneDiscreteGaussianDifferential2022}:
\SCOPE
\STATE $\Wp \sim \mathcal N_{\mathbb Z}(\bm 0, \Dp)$
\STATE $\Wh \sim \mathcal N_{\mathbb Z}(\bm 0, \Dh)$
\ENDSCOPE
\STATE Compute the NMF:
\SCOPE
\STATE $\Tp (\Xp) \gets \Qp(\Xp) + \Wp$
\STATE $\Th (\Xh) \gets \Qh(\Xh) + \Wh$
\ENDSCOPE
\ENDSCOPE
\STATE Step 2: Post-Processing
\SCOPE
\STATE Compute the PPMF $\Zp$ and $\Zh$ as a solution to the optimization problem:
\SCOPE
\STATE Minimize loss between $[\Tp(\Xp), \Th(\Xh)]$ and $[\Qp(\Zp), \Qh(\Zh)]$
\SCOPE
\STATE subject to constraints $\cTDA^+ (\Zp, \Zh) = \cTDA^+(\Xp, \Xh)$ and $\bm H(\Zp, \Zh) = \bm P$.
\ENDSCOPE
\ENDSCOPE
\ENDSCOPE
\end{algorithmic} 
\begin{algorithmic}
\ENSURE 
\STATE The PPMF $\Zp$ and $\Zh$;
\STATE The NMF $\Tp(\Xp)$ and $\Th(\Xh)$ at the person and household levels.
\end{algorithmic}
\end{algorithm}

\begin{theorem}\label{thmTDASatisfiesDP}
The TDA satisfies the DP specification \SpecTDA\ with PLB $\rho^2 = 2.63$ for the PL and $\rho^2 = 15.29$ for the DHC. (Note that these budgets do not vary with the universe $\D \in \DTDA$.) %

In the opposite direction, let $\bm c'$ be any proper subset of TDA's invariants. %
Then the TDA does not satisfy \SpecTDAPrime\ with any finite budget $\rho$.
\end{theorem}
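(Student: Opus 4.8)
The plan is to exploit the two-step structure of the TDA (Algorithm~\ref{algoTDA}) together with the defining feature of the invariant-induced multiverse~\eqref{eqNewInvariantInducedUniverse}: every universe $\D \in \DTDA$ is a level set of $\cTDA$, so any $\xd, \xd' \in \D$ satisfy $\cTDA(\xd) = \cTDA(\xd')$. For the positive direction I would show (i) the measurement phase (Step~1) already satisfies the claimed specification, and (ii) the estimation phase (Step~2), when restricted to a single universe, is a post-processing of the noisy measurements that uses no further information about the input. Composing (i) and (ii) through the post-processing invariance of $\Dnor$ then gives the result. For the negative direction I would show that exact preservation of $\cTDA$ forces the output laws of two inputs disagreeing on $\cTDA$ to be mutually singular, whence $\Dnor = \infty$, and then exhibit such a pair inside a common $\bm c'$-universe at finite $\dHamS^p$-distance.

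For (i), the measurement phase releases the noisy measurement files via independent discrete Gaussians, $\Tp(\Xp) = \Qp(\Xp) + \Wp$ and $\Th(\Xh) = \Qh(\Xh) + \Wh$. The discrete Gaussian mechanism satisfies zCDP, with a parameter fixed by the person-level $\ell_2$-sensitivity of the queries and the noise scales $\Dp, \Dh$ \citep{canonneDiscreteGaussianDifferential2022}; since these scales are data-independent constants, the resulting budget is the same in every universe, which is why the budget does not vary with $\D$. As $\Dnor$ is the output premetric underlying zCDP, the bound $\Dnor[\sfP(\mathrm{NMF}(\xd)\in\cdot),\,\sfP(\mathrm{NMF}(\xd')\in\cdot)] \le \rho\,\dHamS^p(\xd,\xd')$ holds for each measurement mechanism, and the person- and household-level mechanisms compose, their zCDP parameters (reported as $\rho^2$ in Table~\ref{tab:budgets_2020}) adding (Proposition~\refOther{propComposition}). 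This yields $\rho^2 = 0.07 + 2.56 = 2.63$ for the PL file; for the DHC file the already-released PL file $\bm P$ is itself an input, so its budget composes with the DHC measurement phase to give $\rho^2 = 2.63 + 7.70 + 4.96 = 15.29$.

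For (ii), fix $\D \in \DTDA$ and $\xd, \xd' \in \D$. The only input-dependent ingredients of the Step~2 optimization are the noisy measurements $\Tp, \Th$ and the constraint values $\cTDA^+(\Xp, \Xh)$ (and, for the DHC, the input $\bm P$, already accounted for by the composition above); the edit constraints and structural zeroes contained in $\cTDA^+$ are fixed a priori, and the invariants $\cTDA$ are constant on $\D$. Consequently, on $\D$ the entire map from noisy measurements to $(\Zp, \Zh)$, and thence to the PL or DHC tabulation, is one and the same (possibly randomized) function applied to both $\sfP(\mathrm{NMF}(\xd)\in\cdot)$ and $\sfP(\mathrm{NMF}(\xd')\in\cdot)$. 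This is precisely the setting of the post-processing theorem (Theorem~\refOther{thmPostProcess}, via Proposition~\refOther{propPostProcess} and Appendix~\refOther{appendixPostProcessCompose}), so the data-processing inequality for $\Dnor$ yields $\Dnor[\sfP(\mathrm{TDA}(\xd)\in\cdot),\,\sfP(\mathrm{TDA}(\xd')\in\cdot)] \le \rho\,\dHamS^p(\xd,\xd')$ with the same budget. The crux -- and the reason the multiverse $\DTDA$ is indispensable -- is that Step~2 genuinely consumes the invariants, so it counts as post-processing only once the invariants are frozen, i.e. within a single universe.

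For the negative direction, let $\bm c' \subsetneq \cTDA$. Since the TDA enforces $\cTDA(\Zp, \Zh) = \cTDA(\Xp, \Xh)$ exactly, whenever $\cTDA(\xd) \ne \cTDA(\xd')$ the event $\{\cTDA = \cTDA(\xd)\}$ has probability one under $\sfP(\mathrm{TDA}(\xd)\in\cdot)$ and probability zero under $\sfP(\mathrm{TDA}(\xd')\in\cdot)$; the two output laws are mutually singular and $\Dnor = \infty$. It thus suffices to produce $\xd, \xd' \in \Xcef$ with $\bm c'(\xd) = \bm c'(\xd')$, $\cTDA(\xd) \ne \cTDA(\xd')$, and $\dHamS^p(\xd, \xd') < \infty$: for the universe $\D' \in \scD_{\bm c'}$ containing this pair, the Lipschitz condition would force $\epsilon_{\D'} = \infty$, so no finite budget (even one varying with the universe) can hold. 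Such a pair exists because $\bm c'$ omits some coordinate $c_j$ of $\cTDA$, and the three invariant families (state population totals; block housing-unit counts; block occupied-group-quarters counts) are independent enough that a bounded, admissible modification of the records -- e.g. adding one person to a household in the relevant state, or adding a vacant housing unit or an empty group-quarters facility to the relevant block -- changes $c_j$ alone, leaving every coordinate of $\bm c'$ (and hence the $\bm c'$-universe membership) intact, at finite person-level Hamming distance. The main obstacle is step (ii): one must confirm that the constrained, possibly randomized optimization of Step~2 really does collapse to a fixed map of the noisy measurements once the invariants are held constant, and that $\Dnor$ obeys the data-processing inequality for such maps -- exactly what the apparatus of Appendix~\refOther{appendixPostProcessCompose} supplies. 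A secondary subtlety is verifying, coordinate by coordinate, that the witnessing modification can isolate a single omitted invariant without disturbing the retained ones.
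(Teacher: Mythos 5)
Your proposal is correct and follows essentially the same route as the paper's proof: accumulate the measurement-phase zCDP budgets by composition ($0.07+2.56=2.63$ for PL; $7.70+4.96$ plus the PL budget for DHC), treat the estimation phase as post-processing once the invariants are fixed, and obtain the negative direction from the fact that exact release of an invariant not conditioned on by the multiverse forces infinite output divergence. The only presentational differences are that the paper formalizes ``Step 2 consumes the invariants'' by composing with $\cTDA$ viewed as a zero-budget mechanism (rather than your universe-by-universe freezing, which is equivalent) and delegates both the negative direction and the household-to-person resolution conversion for $\Th$ to propositions from Part I, where you argue them directly.
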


A proof of Theorem~\ref{thmTDASatisfiesDP} is given in Appendix~\ref{appProofTDASatisfiesDP}. 

\begin{remark}\label{remarkRhoSquared}
Because the standard parametrization of zCDP's protection loss budget is equal to the square of our parametrization (Appendix~\ref{appZCDP}), throughout this article we report zCDP budgets in terms of $\rho^2$ to maintain consistency with the values reported in existing publications. We also drop the subscript $_{\D}$ when reporting 2020 Census budgets since these budgets do not vary with the universe $\D$. 
\end{remark}

\section{What if the 2020 Census Used Swapping?}\label{sectionCensus}

In this section we ask the counterfactual question: what if the PSA %
was applied to the 2020 Decennial Census? What would its protection guarantee look like? The answer is not a straightforward one for two reasons. First, the %
set of invariants implemented by the TDA for the 2020 Census does not conform to a valid invariant specification under any PSA regime (see Example~\ref{exTDAInvariantsNotSwapInvariants}). Second, the PSA may be deployed using different swapping schemes at varying swap rates, and we cannot know in the counterfactual world what swapping scheme and swap rate would have been chosen for the 2020 Census. Thus, our analysis proceeds in an exploratory fashion. We ask specifically: what would the disclosure risk for the 2020 Census look like under different choices for invariants, swapping schemes, and swap rates for the PSA, when these choices are made to mimic the design choices of the 2020 DAS, subject to reasonable and necessary departures? 

To answer this question, we begin by spelling out the DP specifications of the 2020 Census and its components, and examine them alongside the DP specifications that can be achieved with the PSA under a range of hypothetical choices for its parameters. As the ensuing analysis will make clear, the distinct DP flavors of the 2020 DAS and the PSA hinder the extent to which their DP specifications may be directly compared against one another---most crucially due to the fact that the invariants of one of these flavors are neither strictly stronger nor weaker than the other.

\subsection{An Overview of the 2020 Disclosure Avoidance System}\label{secOverview2020DAS}
For the `counterfactual PSA' to be at all relevant to the 2020 Decennial Census, we must first understand the DP specification of the 2020 DAS. This will allow us to ascertain the extent of compatibility between the PSA and the 2020 DAS, in order to facilitate an analysis of the DP specification of a would-be 2020 PSA deployment.

An overview of the 2020 DAS and its data products lays the foundation for our analysis of its DP specification. The USCB 
divides the 2020 Census data releases into three groups.
Group 1 encompasses the two principal data products that we have already discussed, namely, the PL and the DHC. (The Demographic Profile is also included in Group 1 but as it is simply a subset of DHC's tabulations, we do not consider it as a standalone product.) As detailed in the previous section, both the PL and the DHC were protected using %
the TDA~\citep{uscensusbureauTopDown2023,abowd2022topdown}. %
Group 2 encompasses the Detailed DHC-A~\citep{uscensusbureauDDHCA2023} and Detailed DHC-B Files~\citep{uscensusbureauDDHCB2024}, which were produced by the SafeTab-P and SafeTab-H Algorithms respectively \citep{uscensusbureauSafeTabP2023,uscensusbureauSafeTabH2024,tumultlabsSafeTabDPAlgorithms2022}. (Here `-P' and `-H' stand for `persons' and `households'.) Group 2 also includes the Supplemental DHC File (S-DHC)~\citep{uscensusbureauSDHC2024}, protected using the PHSafe Algorithm~\citep{uscensusbureauPHSafe2024}. %

Group 3 contains the additional products derived from the 2020 Census data, most notably the PPMF~\citep{uscensusbureauPPMF2024}, the \nth{118} %
Congressional District Summary File~\citep{uscensusbureau2020Census118th2023}, and NMFs~\citep{uscensusbureau2020CensusRedistricting2023, uscensusbureauNMF2023}. As explained in the previous section, because these data products are derived either from the Group 1 products or the privacy-protected intermediate outputs pertaining to those products, their production does not contribute to the overall 2020 PLB. As a result, we need not consider these Group 3 products in our analysis. Other Group 3 data releases, such as publications from researchers with access to census microdata, may be released in the future \citep{hawes2020CensusDisclosure}. 
Our analysis does not account for these releases, nor for products derived from 2020 Census data that are not listed here.

\subsection{Understanding the Differential Privacy Specifications of the 2020 Census}\label{sectionSpecifications2020}

The first five rows of Table~\ref{tab:compare_2020} summarize the DP specifications of the 2020 DAS overall, as well as its constituent algorithms, including the TDA \citep{uscensusbureauTopDown2023,abowd2022topdown}, the SafeTab Algorithms \citep{uscensusbureauSafeTabP2023,uscensusbureauSafeTabH2024,tumultlabsSafeTabDPAlgorithms2022}, and the PHSafe Algorithm \citep{uscensusbureauPHSafe2024}. By way of contrast, the last row of Table~\ref{tab:compare_2020} summarily presents the DP specification of the hypothetical application of the PSA to the 2020 Decennial Census, the details for which is expanded upon in Table~\ref{tab:whatif2020} of Section~\ref{sec:whatif}. 

The following remarks accompany the pertinent entries in Table~\ref{tab:compare_2020}:

\begin{table}%
\caption{The differential privacy (DP) specifications of the 2020 disclosure avoidance system (DAS) 
and of the hypothetical application of the Permutation Swapping Algorithm (PSA) to the 2020 Decennial Census.}
\centering
{\small
\begin{tabular}{l|c|c|c|c}
	\toprule
	& $\dPr$ & $\dX$  (Resolution)  & Invariants$^{\ref{noteEditConstraints}}$ & Protection Loss Budget$^{\ref{noteZCDPBudget}}$ \\
	\midrule
	\multirow{4}{*}{TopDown} &   \multirow{8}{*}{$\Dnor$} & \multirow{5}{*}{$\dHamS^p$  (person)}  &  Population (state)  & PL \& DHC:   \\
	&&&  Total housing units (block) & $\rho^2 = 15.29$ \\ %
&&&  Occupied group quarters  &  $(\epsilon = 52.83, \delta = 10^{-10}$) \\%Note that this conversion is already in terms of our parametrisation of $\rho$. In Bun and Steinke it is $\rho + \sqrt{4\rho\log(1/\delta)}$ -- see page 15.
&&& by type (block) %
& See Table~\ref{tab:budgets_2020} \\
\cmidrule{1-1}\cmidrule{4-5}
SafeTab-P &   &  & \multirow{2}{*}{Total housing units (block)} & DDHC-A: $\rho^2 = 19.776$ \\
\cmidrule{1-1}\cmidrule{3-3}\cmidrule{5-5}
SafeTab-H &  &  \multirow{2}{*}{$\dHamSh$  (household)} &  & DDHC-B: $\rho^2 = 17.79$ \\
\cmidrule{1-1}\cmidrule{4-5}
PHSafe    &   &  & $\ge 1$ housing unit (block)$^{\ref{notePHSafeInvariant}}$ & S-DHC: $\rho^2 = 2.515$ \\
\midrule
Overall (to date) & \multirow{2}{*}{$\Dnor$} & \multirow{2}{*}{$\dHamS^p$  (person)} & \multirow{2}{*}{Same as TopDown} & $\rho^2 = 55.371$ \\
2020 DAS$^{\ref{noteOverall2020Spec}}$ &&& & ($\epsilon = 126.78, \delta = 10^{-10}$) \\
\midrule
\midrule
\multirow{2}{*}{Swapping (PSA)} &  \multirow{2}{*}{$\Mult$} &  \multirow{2}{*}{$\dHamSh$ (household)} & Varies but much greater & \multirow{2}{*}{$\epsilon$ between$^{\ref{noteSwapBudget}}$ 8.42--19.36} \\
& & & than TopDown$^{\ref{noteSwapInvariants}}$%
&   %
\\  
\bottomrule
\end{tabular}
\normalsize}

\flushleft\footnotesize 
   {\textit{Note:} The 2020 DAS consists of the TopDown Algorithm, %
which produced the P.L. 94-171 Redistricting Summary File (PL) %
and the Demographic and Housing Characteristics File (DHC); %
the SafeTab Algorithms, %
which produced the Detailed DHC-A File (DDHC-A) and Detailed DHC-B File (DDHC-B);
and the PHSafe Algorithm, which produced the Supplemental DHC File (S-DHC). For each DP specification, the protection domain is the set $\Xcef$ of all possible Census Edited Files and the multiverse is induced by the listed invariants. %
$\dHamS^p$ and $\dHamSh$ denote the Hamming distance at the resolution of person- and household-records, respectively (Equation~\ref{eqHamSDistDefn}); $\Dnor$ the normalized R\'enyi metric (Appendix~\ref{appZCDP}), which is the output premetric underlying zCDP \citep{bunConcentratedDifferentialPrivacy2016a}; and $\Mult$ the multiplicative distance (Equation~\ref{eqMultDistDefnPart2}), which is pure DP's output premetric \citep{dwork2006calibrating}. The numbered superscripts \ref{noteEditConstraints}--\ref{noteSwapBudget} in the table refer to the explanatory Remarks~\ref{noteEditConstraints}--\ref{noteSwapBudget}.}%
\label{tab:compare_2020}
\end{table}

\begin{remark}[TDA's additional constraints]\label{noteEditConstraints}
In addition to invariants, the TDA also enforces that the PPMFs $\bm Z_p$ and $\bm Z_h$ satisfy edit constraints and structural zeroes \citep{abowd2022topdown}. Edit constraints are rules that the USCB applies in their editing procedure to correct implausible or impossible census responses. One such rule is that a mother must be a certain number of years older than any of her children. If a census record does not satisfy an edit constraint, the USCB will modify it so that the edited record does. Structural zeroes are similar---they describe rules that cannot be broken by the census data because of constraints embedded in data collection. For example, if a block has an occupant, then one of the households in the block must have an occupant. As such, the CEF will always satisfy edit constraints and structural zeros. Because every possible $\xd \in \Xcef$ satisfies these rules by construction, these requirements need not be included as invariants. %
\end{remark}

\begin{remark}[zCDP budget conversion]\label{noteZCDPBudget} Because we reparametrize zCDP, we report its budgets in terms of $\rho^2$ (rather than $\rho$) to be consistent with other literature on the 2020 DAS (see Remark~\ref{remarkRhoSquared}). %
Moreover, following the USCB, we use the formula from \citet{bunConcentratedDifferentialPrivacy2016a}, $\epsilon = \rho^2 + 2\rho \sqrt{-\ln \delta}$, to convert from a zCDP budget $\rho^2$ to an approximate DP budget $(\epsilon,\delta)$. %
\end{remark}

\begin{remark}[Inequality invariants of PHSafe] \label{notePHSafeInvariant} The PHSafe Algorithm has inequality invariants (see Equation~\ref{eqInequalityInvariant}). Specifically, its invariant function $\bm c(\xd)$ is the vector of indicators for whether each census block has at least one housing unit or not.

\end{remark}

\begin{remark}[DP specification of the overall 2020 DAS] \label{noteOverall2020Spec} This DP specification covers all of the primary 2020 Census releases \citep{bureauCensusBureauReleases} discussed previously in this section, but not other data products that were derived from the 2020 CEF, such as the 2020 DAS accuracy metrics \citep{uscensusbureau2020CensusDisclosure2023}, the Population and Housing Unit Estimates \citep{uscensusbureauMethodologyUnitedStates2023}, and the National Population Projections \citep{uscensusbureauMethodologyAssumptionsInputs2023}. We were unable to locate information on the DP specifications associated with these data products. Nevertheless, as with any data release, %
they  
necessarily increase the total PLB associated with the 2020 Census. They could also possibly weaken the 2020 Census's DP flavor by increasing the invariants, weakening the output premetric, or increasing the resolution of the input premetric. %
Moreover, the USCB may make additional releases in the future, such as the Surname File \citep{uscensusbureauDecennialCensusSurname2016} or research papers generated with access to census microdata \citep{hawes2020CensusDisclosure}. These releases would further weaken the DP specification for the 2020 Census. 
In comparison, under data swapping, the PLB and DP flavor covers all data releases (Section~\ref{sec:whatif}).

\end{remark}

\begin{remark}[PSA invariants] \label{noteSwapInvariants} Depending on the swap key $\Vmat$ and the swapping variables $\Vswap$, the invariants induced by the PSA are all (multivariate) household characteristics at either the state, county, or block group levels, and optionally the household size at the corresponding geography one level lower. See Section~\ref{sec:whatif} for details.
\end{remark}

\begin{remark}[PSA budget] \label{noteSwapBudget} The exact PLB $\epsilon$ of the PSA depends on the swap rate $p$ and the swap key $\Vmat$, with the combination of a higher swap rate and finer geography-household strata giving rise to the lower range and vice versa (Table~\ref{tab:whatif2020}).
\end{remark}

Last but not least, we comment on the choice of protection units for both the 2020 DAS and the hypothetical PSA. The protection units (also known as the `privacy' units) of a DP specification are the basic entities that are protected under that DP specification. More exactly, a specification's PLB restricts how much a mechanism can change when a single protection unit's data changes. One might imagine that the protection units of a DP specification correspond to the resolution of its input premetric $\dX$---and this is true, but only in simplistic examples. Here by `the resolution of $\dX$,' we mean the size of the change between $\xd$ and $\xd'$ when $\dX(\xd, \xd') = 1$. For example, if $\dX(\xd, \xd') = 1$ whenever $\xd$ and $\xd'$ differ on a person-record, then the resolution of $\dX$ is a person. Common resolutions, in order from high to low, are: single transactions or interactions, persons, households, and businesses.

However, data preprocessing can create complications, so that the protection units of a specification are not always given by the resolution of $\dX$. In the case of the Decennial Census, an individual contributor's data can be used for multiple records in the CEF $\xd \in \Xcef$ because the USCB's imputation procedure replaces missing records with copies of nonmissing records. As such, the protection units of $(\Xcef, \scD, \dX, \dPr)$ do not correspond to the resolution of $\dX$. Rather, the protection units of the 2020 DAS are `post-imputation persons'---that is, those (fictional) entities with data that is exactly one record in the CEF. Similarly, the PSA's protection units are the `post-imputation households' rather than actual households.

The distinction between actual and post-imputation protection units is not simply a matter of semantics. From the perspective of a data contributor, the resolution of $\dX$ is not particularly informative in determining the SDC protection provided to them. Rather, a more accurate measure of a contributor's actual protection loss is given by the nominal PLB multiplied by the number of person-records the contributor contributed to. %
For example, if the 2020 imputation process duplicates a contributor's record once, then their substantive PLB is $\rho^2 \ge 221.48$ (or $\epsilon \ge 364.31$ with $\delta = 10^{-10}$), rather than $\rho^2 \ge 55.37$%
. (Here the $\rho^2$ budgets are inflated by a factor of four since doubling $\rho$ quadruples $\rho^2$. Also, we write $\ge$ rather than $=$ because the contributor's protection loss will increase due to data releases that we have not accounted for, as explained in Remark~\ref{noteOverall2020Spec}.) 
In general, the conversion from a DP flavor with post-imputation persons as units to a DP flavor with persons as units requires an inflation of the PLB by a factor equal to the maximum number of times a record can be duplicated (for a proof of this, see the section on group privacy in \cite{bailieRefreshmentStirredNot2026}). To avoid this complication, we have reported post-imputation budgets in Table~\ref{tab:compare_2020}, but we caveat this with the important observation that these budgets are relative to unusual protection units.

\subsection{The Protection Guarantee of the 2020 Census Under Swapping}\label{sec:whatif}

Table~\ref{tab:whatif2020} shows the total nominal PLB $\epsilon$ that would be achieved by applying the PSA to the 2020 Decennial Census for a variety of possible parameter choices. For the purpose of illustration, we stipulate the swapping variable $\Vswap$ to be the block, tract, or county membership of each household, and the matching variable $\Vmat$ to be the geography one level higher than $\Vswap$, either alone or crossed with the household size variable. These choices justify what Table~\ref{tab:compare_2020} claims, that the invariants of the PSA ``varies'' but are ``much greater than'' those of the TDA (see Remark~\ref{noteSwapInvariants}).

Note that if the PSA were applied to the 2020 Decennial Census%
, the nominal $\epsilon$ reported in Table~\ref{tab:whatif2020} would be the \emph{total} PLB across all data products derived from the swapped data set $\bm Z$, including the PL, the DHC, the DDHC, and the S-DHC, for both persons and household product types. This is because swapping is performed on the full CEF, and hence produces a synthetic version of it from which all data products can be generated. Therefore, when comparing the PLBs in Table~\ref{tab:whatif2020} with those reported for the 2020 DAS in Table~\ref{tab:compare_2020}, it should be understood that the protection loss for the PSA covers all the 2020 data products. %
This characteristic of swapping leads to an additional desirable property that is not necessarily enjoyed by mechanisms based on output noise infusion (such as those used in the 2020 DAS): the \emph{logical consistency} between, and within, multiple data products is automatically preserved under swapping without the need for any additional processing, because all these data products are produced from the same post-swapped microdata. %

The $\Vswap$ levels in Table~\ref{tab:whatif2020} are ordered in increasing granularity of geography. Within each level of $\Vswap$, the two $\Vmat$ levels are nested, in the sense that the swapping scheme represented in the latter row (i.e., crossed with household size) induces a logically stronger and more constrained set of invariants than the former one. 
These $\Vmat \times \Vswap$--level combinations result in largest strata of varying sizes, as can be seen from $b$ ranging from as large as $13.47$ million (the total number of households in California) to as small as $4,549$ (the total number of two-person households in a Florida block group).%

\begin{table}[ht]
\centering
\caption{The total nominal protection loss budget $\epsilon$ for the Permutation Swapping Algorithm applied to the 2020 Decennial Census for a variety of $\Vmat$, $\Vswap$, and swap rate choices.}
{\small
\begin{tabular}{llrrrl}
\toprule
$\Vmat$ & $\Vswap$ & $b$ & Total $\epsilon$  & Total $\epsilon$  & Largest Stratum \\ 
&&& $p = 5\%$ & $p= 50\%$ & \\
\midrule
State & county & 13,475,623 & 19.36 & 16.42 & California \\ 
State $\times$ household size & county & 3,948,028 & 18.13 & 15.19 & California, two-household \\ 
County & tract & 3,420,628 & 17.99 & 15.05 & Los Angeles County \\ 
County $\times$ household size & tract & 939,185 & 16.70 & 13.75 & Los Angeles County, two-household \\ 
Block group & block & 6,204 & 11.68 & 8.73 & a California block group  \\ 
Block group $\times$ household size & block & 4,549  & 11.37 & 8.42 & a Florida block group, two-household \\ 
\bottomrule
\end{tabular} 
\normalsize}

\flushleft\footnotesize 
   {\textit{Note:} The column $b$ is the number of households in the largest stratum, obtained from the Demographic and Housing Characteristics File. (The California and Florida block groups identified in rows 5 and 6 have 2020 Census GEOIDs 060730187001 and 121199114024, respectively.)}
\label{tab:whatif2020}
\end{table}

This analysis highlights an important, yet perhaps counterintuitive, observation.   %
When the swap rate $p$ is fixed, including more invariants decreases the nominal PLB $\epsilon$ of the PSA. As Table~\ref{tab:whatif2020} shows, when swaps are performed freely across counties in a state, even a high swap rate of $50\%$ renders a nominal $\epsilon$ that is much larger than that pertaining to swaps among households of the same size within a block group at a low swap rate of $5\%$ ($\epsilon = 16.42$ and $11.37$, respectively). If these nominal $\epsilon$'s are taken at face value, one may be tempted to conclude that  swapping schemes with finer invariants should be preferred from an SDC standpoint. Furthermore, one may find it convenient to also recognize that finer invariants are desirable from a data utility standpoint, for the obvious reason that more exact statistics about the confidential data are made known. However, as we warned right after presenting Theorem~\ref{thmSwapDP}, such a conclusion---that finer invariants should benefit both utility \emph{and} privacy---would be dangerously mistaken, for it overlooks the privacy leakage, in an ordinary sense of the phrase, due to the invariants alone. Indeed, it is the loss of protection due to releasing more invariants that results in less information remaining that needs protection, thereby creating the illusion that we can achieve better DP protection with finer invariants. 
This illusion highlights the criticality of interpreting $\epsilon$ within the context of its DP flavor, and the necessity of treating the invariants as an integral part of any protection guarantee.

\subsection{Comparing the 2020 DAS With Swapping: The Need for a Disclosure Risk-Based Assessment}\label{sec:compare-das-psa}

With the aid of our system of DP specifications, we are able to articulate both the flavor and the intensity of the DP guarantees of the 2020 Census as well as the hypothetical application of swapping to it. A side-by-side examination of the building blocks of these DP specifications in Tables~\ref{tab:compare_2020} and~\ref{tab:whatif2020} elucidates the similarities and differences between the protection provided by the PSA and the 2020 DAS.  But we caution against reading these tables as a \emph{direct} comparison between the two SDC approaches. In our view, these tables illustrate precisely a \emph{lack} of comparability between these approaches, due to their distinct DP flavors, which in turn render comparisons of their PLBs as effectively one of `apples' versus `oranges.'

To be sure, the DP specifications of the 2020 DAS and the PSA share similarities that facilitate their comparison on a conceptual level. Firstly, both the 2020 DAS and the PSA have the same protection domain: the set of all possible CEFs $\Xcef$. This means that the PSA and the 2020 DAS protect the data $\xd \in \Xcef$ as it exists after collection, coding, editing, and imputation, rather than as it exists at other stages in its life cycle.  As
such, it is not the contributors’ data (i.e., their `raw’ census responses) that are directly protected, but rather it is the edited and imputed data (i.e., the CEF) that receives the DP guarantee.

Secondly, because both the 2020 DAS and the PSA have invariants, each of their DP specifications partition the protection domain $\Xcef$ into multiple universes. This operation constrains the scope of the 2020 DAS and the PSA's SDC protection to data sets that agree on their invariants. Therefore, for the same reasons that the 2020 DAS cannot satisfy the original specification of zCDP given in \citet{bunConcentratedDifferentialPrivacy2016a}, data swapping cannot satisfy the original pure DP specification of \citet{dwork2006calibrating}. In this sense, both the 2020 DAS and the PSA are DP only insofar as their invariants allow.

Thirdly, the input premetrics for the PSA and the 2020 DAS are both Hamming distances, although with differing resolutions---household-records for the PSA and person-records for the 2020 DAS. This means the protection units are post-imputation households and post-imputation persons, respectively. Since the input premetric is the yardstick for measuring change in the input data,  %
using a lower resolution like household-records provides more protection than a higher resolution like person-records (all else being equal). %
That is, a household-level distance is a stronger notion than a person-level distance, since if the record of a single household changes part of its value, the multiple persons residing in a same household may all change their records.

Fourthly, the PSA's output premetric is also stronger than the 2020 DAS's. The PSA uses the multiplicative distance $\Mult$---as used in pure DP \citep{dwork2006calibrating}---while the 2020 DAS uses the normalized R\'enyi metric $\Dnor$---as used in zCDP \citep{bunConcentratedDifferentialPrivacy2016a}. There exist probabilities $\sfP$ and $\sfQ$ with $\Dnor(\sfP, \sfQ)$ arbitrarily small but $\Mult(\sfP, \sfQ) = \infty$. As such, $\Mult$ ensures a greater level of SDC protection than $\Dnor$ (again, assuming that all else is equal).

On the other hand, however, there are crucial differences between the DP specifications of the 2020 DAS and the PSA that ultimately render their comparison a wrought endeavor.
The most damning obstacle to a meaningful comparison between the DP flavors of the 2020 DAS and the hypothetical PSA is that they do not, and cannot (on the hypothetical PSA's account) induce invariants in the data product that are equal, or even that are nested with respect to one another. As Proposition~\ref{propInvariantsOfSwap} makes clear, a swapping regime maintains invariant two marginal tables, that is, the cross-classifications of the matching variables by the swapping variables and that of all the holding variables, while perturbing the interior cells of the multiway contingency table. The 2020 DAS, on the other hand, used a list of invariants and other constraints on an as-needed basis. They need not, by design, accord to some marginal tabulations of the underlying microdata. 
As a matter of fact, all of the SDC methods used in 2020 have invariants, but the PSA has many more invariants than any of these methods and, as such, places more restrictions on the scope of protection. 
On the flip side, while swapping almost always has stricter invariants for most variables, it does not necessarily have the TDA's group quarter invariants. Therefore, the 2020 DAS DP flavor is not strictly stronger than the PSA's flavor, nor visa versa---although the 2020 DAS places less restrictions on the scope of protection, these are not nested within the restrictions induced by the PSA's invariants.

Even with nested invariants---for example, in the case that the 2020 DAS's invariants were compatible with some swapping regime, the extent of comparison is still limited to a qualitative degree. We should all agree that when all else are held the same, a smaller set of invariants is strictly less disclosive than a larger superset of invariants (Section~\ref{secUnderstandingImpactInvariants}). But just how much less disclosive? We are still left to appeal to our intuition to reason with the privacy leakage induced by the additional invariants, with no quantitative guidance available.

As a consequence of the incommensurability between their DP flavors, the PLBs of the PSA and of the 2020 DAS are not directly comparable because a budget's `unit of measurement' is determined by its DP flavor. That is to say, a PLB is a nominal measure of SDC protection, which is always relative to---and hence can only be understood within the context of---the four other building blocks. The DP flavors for the PSA and the 2020 DAS are different and so their budgets have different units of measurement. 

An astute reader may point out that just like the USCB, we can convert the 2020 DAS zCDP budget $\rho^2 = 55.371$ to the approximate DP budget of $\epsilon = 126.78$ with $\delta = 10^{-10}$, which is more comparable with a pure DP budget $\epsilon$ under the view that the latter has an $\delta=0$. This calculation would appear to reveal that the PLB of the 2020 DAS is an order of magnitude larger than that of the PSA. 
We caution, again, that a budget conversion does not render 
the PSA and the 2020 DAS comparable because their DP flavors have different invariants and different input premetrics. While the 2020 DAS's budget would substantially increase under a household-level input premetric (Appendix~\ref{appProofTDASatisfiesDP}), removing even one invariant from the PSA's DP specification would result in an infinite PLB (Section~\ref{secUnderstandingImpactInvariants}.

It seems to us that the ultimate comparison between two competing SDC algorithms that sport different DP specifications should be phrased in terms of disclosure risk. The focus on disclosure risk is appropriate as it has been traditional in the SDC literature; see, for example, \citet{kazanAssessingStatisticalDisclosure2024} for a demonstration of such assessment in the context of the 2020 Census. Articulating the DP specifications of SDC algorithms and recognizing the differences among competing options are essential to understanding their likely impact on disclosure risk. Yet, most crucially, disclosure risk stands to be the most promising, if not the only viable, yardstick for measuring the efficacy of SDC protection in the presence of invariants. Methodologically, we are encouraged by the observation that releasing a statistic under a large PLB is pragmatically equivalent to making that statistic an invariant. Hence, in principle it should be possible to effectively trade off invariants with large budgets, thus making the comparison between the budgets of SDC algorithms with distinct flavors, such as the PSA and the 2020 DAS, a more tractable one. 
Another possible equating principle is the vulnerability of a DP mechanism (and therefore by extension, its specification) to a reconstruction attack. For example, \citet{abowd2010CensusConfidentiality2023} show that swapping at a rate of 50\% had about the same protection against reconstruction as the TDA (on the DHC). Hence, the increase from the TDA's invariants to the 2010 invariants might be roughly comparable to the increase in budget from $\epsilon \approx 8.42$--19.36 (the swapping budget, at the household level) to $\epsilon \approx 52.83$ (the DHC budget, with $\delta=10^{-10}$, at the person-level). Comparisons between the performance of swapping and the 2020 DAS against other privacy attacks could potentially provide additional heuristics on how invariants can be compared to large budgets \citep{steedQuantifyingPrivacyRisks2025a, ballesterosEvaluatingImpactsSwapping2025, christDifferentialPrivacySwapping2022}. Sections~\ref{secImpactInvariants} and~\ref{secMitigateInvariants} offer some of our preliminary thoughts on the impact of invariants on disclosure risk, though we leave this question largely as a subject for future research.

\section{Discussion}\label{sec:discussion}

This article continues an existing line of research \citep{rinottConfidentialityDifferentialPrivacy2018, bailieABSPerturbationMethodology2019, sadeghiConnectionABSPerturbation2023, neunhoefferFormalPrivacyGuarantees} examining traditional SDC methods---which are typically regarded as ad hoc and are motivated by intuitive notions of protection or specific attacker models
---through the lens of DP. This body of literature shows that---even though they were designed without DP in mind---traditional SDC methods can still be fruitfully analyzed from the perspective of DP. By providing another example---data swapping---that can be studied theoretically via the lens of DP, we hope to inspire 
further formal analyses of other traditional SDC methods. 
This type of analysis improves our understanding of such methods by supplying mathematical descriptions of the level and substance (or, in our terminology, the intensity and flavor) of the methods' SDC. Such descriptions are important: they can provide assurance to data providers and custodians that their data is adequately protected; or, conversely, they can reveal inadequate SDC and spur additional protection. 

However, it can be challenging to assess whether a given DP specification provides an adequate level of protection. To do so, we must understand how choices for each of the five building block can affect SDC---both individually and in conjunction with choices for the other building blocks.  
This requires answering a range of difficult sociotechnical questions. 
For instance, taking the other four building blocks as fixed, what PLB (if any) is sufficient for adequate SDC---adequate for whom, and who should decide this adequacy? 
Also, what is the practical impact of the protection units being post-imputation persons, and how should, if at all, this impact be mitigated? 
Furthermore---and most relevant for this article---what is the effect of invariants on SDC?

While we know that increasing the invariants strictly weakens the DP specification (Section~\ref{secUnderstandingImpactInvariants}), it is more difficult to determine how they affect an attacker's ability to make disclosures. \citet{ashmead2019effective} have investigated the effect of the 2020 DAS invariants, but there is a need for future work that studies the effect of invariants at the scale of those induced by data swapping. 
In addition to building technical understanding---and parallel to studies that survey preferences on appropriate settings for the PLB, protection domain and input and output premetrics---it could be beneficial to gauge %
public opinion on the acceptability of specific invariants.

Nevertheless, by providing DP specifications for both the PSA and the TDA, we demonstrate the feasibility of mathematically comparing and contrasting on fair grounds traditional SDC methods 
with DP-based mechanisms. With these two algorithms as prime examples, the paper points to the possibility of similar comparative analyses  
between other SDC methods, both those that were explicitly inspired by DP and those that were designed and motivated from non-DP perspectives. 
By explicating the five building blocks for the PSA and the TDA, we hope to promote nuanced assessments of DP deployments that go beyond discussion of the PLB.

\subsection{Understanding the Impact of Invariants on Disclosure Risk}\label{secImpactInvariants}

A major criticism of the swapping method implemented in the 2010 Census is that it induces too many invariants. One salient consequence of a plurality of %
invariants is that it severely constrains the permissible values for the confidential data. %
Indeed, 
the larger the number of invariants, 
the more data sets an attacker can rule out as impossible,
and, consequently, the higher the risk of disclosure. %
As \citet{abowd2023confidentiality} discuss, the invariants in the 2010 Census elevate disclosure risk because not only are they numerous but they also include information at a very fine granularity, for example, the total and voting age populations at the block level.%

The USCB's reconstruction and reidentification attack against their 2010 Census \citep{abowd2010CensusConfidentiality2023}
provides some understanding of the impact of its invariants on disclosure risk, although, as we will see, there were also other confounding factors at play.
As its name suggests, a reconstruction and reidentification attack consists of two steps: a reconstruction attack, which creates a plausible version of the confidential microdata; followed by a reidentification attack, which links this `reconstructed' data set to an external source in order to attach names or other personally identifying information to (some or all of) the reconstructed records. Because this second step assigns identities to the linked microdata records, an attacker can use the resulting data to infer some information about some specific census respondents---such as their race or ethnicity---although they cannot be certain that this inference is correct due to both the potential for linkage errors and the uncertainty introduced by swapping.

A reconstruction attack works by collating many publicly available, aggregate statistics about the confidential (unknown) microdata. Then it constructs a data set that agrees with these statistics---or, in the case where the published statistics are noisy functions of the confidential microdata, it \textit{infers} a data set based on a statistical model of the published data.\footnote{This model has the unknown microdata as its parameters, the released statistics as its data, and the SDC method that generated the published statistics from the confidential microdata as its data-generating process. For example, the reconstructed data set could be the maximum likelihood estimate under this model, or it could be a draw from a Bayesian posterior that is compatible with this model.} This reconstructed data set is a plausible guess for the confidential microdata, since, in the case where the statistics are deterministic functions of the confidential microdata, it %
generates identical statistics to the ones generated by the microdata. When the statistics are noisy, the situation %
is more complex: the reconstructed data set does not necessarily reproduce the published statistics exactly due to their stochasticity, but nevertheless it still could plausibly be the microdata that generated these statistics.
In any case, the larger the number of published statistics and the more accurate they are, the more heavily they constrain the possible configurations of the reconstructed data set, and hence the more likely it is that this reconstruction agrees with the true confidential microdata. 

For the reconstruction attack on the 2010 Census, the underlying microdata the USCB targeted was not the CEF$\xd_{\mathrm{CEF}}$, but the post-swapped data---that is, the resulting records after swapping had been applied to $\xd_{\mathrm{CEF}}$. %
This avoids the complication of designing a reconstruction attack that accounts for the noise introduced by swapping. Yet, %
because of the low swap rate and the large number of invariants in the 2010 DAS, there is a high degree of alignment between the reconstructed data and $\xd_{\mathrm{CEF}}$. As a result, linking the reconstructed data to an external data set containing personally identifiable information allows for the possibility of learning potentially sensitive data: the race and ethnicity of census respondents.
Indeed, the USCB found that an attacker could predict the race and ethnicity of about 3.4 million vulnerable individuals (i.e., about 1\% of the U.S. population) with 95\% accuracy \citep{abowd2010CensusConfidentiality2023}, although an actual attacker would not be able to verify their level of accuracy or identify which individuals are vulnerable since they do not have access to the confidential data like the USCB does. 
As discussed in Section~\ref{sec:invariantRequirment}, it is partly from these observations that the bureau concluded an urgent need to revamp their swapping-based SDC. However, 
while the bureau's later experiments further suggest that the rate of swapping must be significantly increased to achieve what it deemed as an acceptable level of protection 
\citep{abowd2023confidentiality},
the question remains open: in what ways does imposing a specific set of invariants impact the disclosure risk of the resulting data product?%

To understand the relationship between swapping's invariants and disclosure risk, two caveats are worth noting at the outset. First, the degree of vulnerability %
to a 
reconstruction attack 
is a measure of \textit{absolute disclosure risk} \citep{duncan1986disclosure,reiter2005estimating}, defined as the degree of certainty with which an attacker can make inferences about confidential information from the published data. Unless strong assumptions about the attacker's prior knowledge are made, DP does not directly translate into any quantifiable degree of control over the absolute disclosure risk; see, for example, \citet{dworkDifficultiesDisclosurePrevention2010, kiferNoFreeLunch2011, mcclure2012differential, hotz2020assessing}. Similarly, absolute measures of reconstruction attack success (such as percentages of successfully reconstructed records, as reported by the USCB) are not controlled by DP \citep{kenny2021use, francisNoteMisinterpretationUS2022}. %
Second, invariants are not unique to swapping, nor should they be viewed as a static byproduct. %
The final choice of invariants used in the 2020 TDA was arrived at by the USCB through an iterative process.
For example, 
block-level populations were once considered as an invariant, but were ultimately not included 
\citep{ashmead2019effective, abowd2022topdown, kiferDesignPrinciplesTopDown2019}. %

Notwithstanding these caveats, it is worthwhile to inquire, to the extent possible, about the impact of invariants on disclosure risk through the lens of DP. Such an inquiry can be challenging within the standard DP paradigm, because the impact of invariants cannot be captured by the PLB. 
By contrast, our system of DP specifications is more dexterous because invariants can be explicitly included through the multiverse $\scD$. An analysis of the impact of invariants on disclosure risk therefore amounts to a five-dimensional comparison between alternative DP specifications that differ on $\scD$ and potentially on other building blocks as well. A comprehensive description of the five-way dynamics remains open for future research, although investigations with a restricted scope (e.g., only varying two or three dimensions at a time, rather than all five) can already be informative. %
For example, it can be shown that reconstruction attacks can be increasingly successful if applied to DP-protected data when more invariants are imposed on them \citep{protivashReconstructionAttacksAggressive2022}. Analysis in Section~\ref{sec:whatif} also indicates that the granularity of swapping's invariants has a large numerical impact on the PLB $\epsilon$ (through the largest stratum size $b$), while the swap rate has comparatively little influence. This %
suggests that a reduction of invariants may have a larger impact on SDC protection compared to an increase in the swap rate---at least when protection is measured by a DP flavor. 

Finally, crude comparisons can be made between swapping and the TDA based on their vulnerability to reconstruction attacks. From experiments in \citet{abowd2010CensusConfidentiality2023}, swapping with a high swap rate of 50\% is roughly comparable in its protection against reconstruction attacks as the TDA (using the 2020 production settings). This suggests that the reduction in invariants from the 2010 data swapping method's invariants (which are numerous) to the TDA's invariants (which are few) is approximately equivalent to an increase in PLB from $\epsilon = 15.19$ at the household level to $\epsilon = 52.83$ (with $\delta=10^{-10}$) at the person level.

\subsection{Mitigating the Impact of Invariants on Disclosure Risk}\label{secMitigateInvariants}

Because some small number of invariants is frequently mandated, while a large number may have an adverse impact on disclosure risk,
statistical agencies 
need methodologies that allow for the specification of invariants in a flexible and precise manner. %
To this end, swapping---as instantiated either in 2010 or in our work---does not suffice because its invariants are largely hardwired into its mechanics. %
However, %
several extensions to 
data swapping 
enable more customization in the choice of invariants, %
thereby allowing for a better balance between SDC protection and accuracy targets. (On this topic, it is also worth noting that the TDA allows for a range of invariant choices.)

One such extension is \textit{probabilistic unit matching}, which was considered by the USCB as part of its comparative analysis between data swapping and the TDA. Instead of using the swapping key to form hard strata within which swapping is confined, this method permits units across different strata to be swapped with a small probability, which could be inversely proportional to some distance metric on the strata. 
For example,
consider using county as the swapping variable, with state and household size as the matching variables. 
Suppose that, for some $\alpha > 0$, a household chosen for swapping would have a $(1-\alpha)\%$ chance of being swapped with another household of the same size, but an $\alpha\%$ chance of being swapped with a differently sized household. Doing so retains the countywide household counts as invariant, but the countywide total populations would no longer be invariant. 

Two other approaches to remove some of swapping's invariants are 
\textit{pre-}  and  \textit{post-swap perturbation}.  As their names suggest, the former infuses noise into the confidential records prior to applying swapping 
\citep[p. 23]{hawesDeterminingPrivacylossBudget2021}, whereas the latter perturbs an intermediate data product after applying swapping. Notably, data swapping followed by tabular perturbation is a common SDC strategy; for example, it is the approach taken by the Office of National Statistics (ONS) for the protection of the 2021 UK Census \citep{officefornationalstatisticsonsProtectingPersonalData2023}. In this case, the cell key method (CKM) \citep{fraserProposedMethodConfidentialising2005, thompsonMethodologyAutomaticConfidentialisation2013, marleyMethodConfidentialisingUserdefined2011} is employed to perturb the cells of contingency tables after targeted swapping has been applied to the underlying microdata.  %
In addition to its use at ONS, applying swapping and then CKM perturbation is also recommended by Eurostat's Centre of Excellence on Statistical Disclosure Control for EU censuses \citep{glessingRecommendationsBestPractices2017}.

That the CKM has already been analyzed through the lens of DP \citep{rinottConfidentialityDifferentialPrivacy2018, bailieABSPerturbationMethodology2019, sadeghiConnectionABSPerturbation2023} suggests the possibility that its use as a post-swap perturbation method may deliver a formal guarantee of protection that is stronger than that provided by swapping or CKM alone. Nevertheless, we leave to future work a full investigation of the protections supplied by probabilistic matching and %
by swapping combined with pre- or post-swap perturbation.
Note that compared to standard swapping algorithms such as the PSA, all three of the above procedures introduce strictly more auxiliary randomness into the data product. It would therefore be reasonable to expect the resulting algorithms to enjoy DP guarantees while supplying fewer and more flexible choices of invariants. One salient question for this line of research is to determine the DP specification for the `chaining' (i.e., sequential composition) of two mechanisms (e.g., swapping followed by tabular perturbation), when these mechanisms satisfy different DP specifications. %

\vspace{1em}

\bookmarksetup{startatroot}
\subsection*{Disclosure Statement}
JB gratefully acknowledges partial financial support from the Australian-American Fulbright Commission and the Kinghorn Foundation; RG and XLM acknowledge partial financial support from the NSF; and XLM acknowledges partial financial support from Harvard University's Office of the Vice Provost for Research.

\bookmarksetup{startatroot}
\subsection*{Acknowledgments}
We thank Cory McCartan for his assistance with the 2010 U.S. Census data, as well as for many stimulating discussions; and Xiaodong Yang, Nathan Cheng, and Souhardya Sengupta for their help in proving Lemma \ref{lemmaDSteps}. We are grateful to the participants of the National Bureau of Economic Research's conference \emph{Data Privacy Protection and the Conduct of Applied Research: Methods, Approaches and Their Consequences} (May 4-5, 2023); the \nth{36} New England Statistical Symposium's invited session \emph{A Private Refreshment on Statistical Principles and Senses} (June 6, 2023); the 2023 Joint Statistical Meetings session \emph{Methodological Approaches to Privacy Concerns Across Multiple Domains} (August 10, 2023); the CA Census retreat at the Boston University Center for Computing and Data Science (September 26, 2023); and the Statistics Canada Methodology Seminar (October 31, 2023) for their thoughtful comments and questions. We appreciate enormously the detailed feedback provided by Daniel Kifer, John Abowd, Philip Leclerc, Ryan Cummings, Rolando Rodriguez, Robert Ashmead, Sallie Keller, and Michael Hawes at the USCB, which greatly improved and corrected all three parts of this trio of papers. All remaining errors are purely our own. 
 
\printbibliography

\appendix

\section{Other Related Work}\label{secRelatedWork}

In this appendix, we briefly review some related work that the main body of this article does not discuss in sufficient depth. Firstly, there is existing literature examining differential privacy (DP) under invariants. One branch of this literature develops DP mechanisms that report invariants without noise. In addition to the United States Census Bureau's work on the TopDown Algorithm (TDA) \citep{abowd2022topdown}, other papers in this branch include \citet{gong2020congenial}, \citet{gao2022subspace}, and \citet{dharangutte2022integer}. Because invariants can be viewed from an attacker's perspective as background knowledge, work addressing how to incorporating this knowledge into DP \citep{kifer2014pufferfish, he2014blowfish, kiferNoFreeLunch2011, desfontainesDifferentialPrivacyPartial2020b} is also relevant. In particular, \citet{seemanFormalPrivacyPartially2022} applies the Pufferfish privacy framework to construct a DP formulation that can handle invariants, although with the additional complication that the data must be modeled. Although not specifically addressing invariant-respecting DP, \citet{protivashReconstructionAttacksAggressive2022} demonstrate that related DP formulations---which, like invariants, also restrict the data universes considered by DP---may not provide sufficient SDC. More recent work on invariants can be found in \citet{choFormalPrivacyGuarantees2024} and \citet{bailieBigDataDifferential2020}. %

Many of these works, including \citet{choFormalPrivacyGuarantees2024}, condition on the specific, realized value of the invariants, which we showed in Section~\ref{secInvariants} is not a valid way to incorporate invariants into DP. Further, those works above that incorporate invariants into DP only do so for specific flavors of DP, while in this work, we use a system that can integrate invariants into any DP flavor. This unified way to handle invariants, which is missing in other work, is needed for our comparisons of the Permutation Swapping Algorithm's and the TDA's DP specifications, as these specifications vary on other dimensions, not just on their invariants.

There is also related literature studying SDC for the U.S. Decennial Census. \citet{ashmead2019effective} and \citet{kiferBayesianFrequentistSemantics2022} describe DP semantics for the 2020 Census, with the former focusing on the impact of invariants. %
\citet{abowd2010CensusConfidentiality2023} examines the 2010 DAS, using a reconstruction attack to demonstrate that aggregation did not provide SDC, as has traditionally been assumed. \citet{christDifferentialPrivacySwapping2022} compares data swapping with standard DP-based mechanisms. 
Finally, the paper that first proposed data swapping \citep{daleniusDataswappingTechniqueDisclosure1982} includes arguments for the SDC provided by data swapping, which were reviewed by \citet{fienbergDataSwappingVariations2004}.

An extensive body of literature inspired and contributed to the development of the system of DP specifications outlined in Section~\ref{secSystemDPSpecifications}. A review of this literature is given in \citet{bailieRefreshmentStirredNot2026}.

\section{Background on Data Swapping}\label{appendixBackgroundSwapping}

Invented by Dalenius and Reiss (\citeyear{daleniusDataswappingTechniqueDisclosure1978, daleniusDataswappingTechniqueDisclosure1982}) and further expanded upon by Fienberg and McIntyre (\citeyear{fienbergDataSwappingVariations2004}), data swapping (also called record swapping, particularly in Europe) refers to a family of statistical disclosure control (SDC) methods that select some subset of records and permute the values these records take for a subset of variables. These methods differ on which variables are swapped, how records are selected to be swapped, and how the interchanging of the values of the swapping variables between the selected records is conducted. (See \cite{depersioNcycleSwappingAmerican2012, kimEffectDataSwapping2015, shlomoDataSwappingProtecting2010, fienbergDataSwappingVariations2004}, for examples of different data swapping methods.) Traditionally, claims of SDC protection provided by swapping methods have been based on the intuition that a successful disclosure requires linking inferred information about a \emph{sensitive variable} to an individual entity using some \emph{quasi-identifying variables}. By sensitive variable, we mean a variable that is plausibly of interest to an attacker---for example, a person's race or a household's income. Learning the value of a sensitive variable for an individual record may not be problematic on its own since the attacker does not know to whom the record belongs. Thus, an attacker has two goals: 1) to infer the value of a sensitive variable for an individual record and 2) to determine, using quasi-identifying variables, the individual entity associated with that record. Since the sensitive variable and the quasi-identifiers must belong to the same record, the attacker needs to infer them jointly. The idea behind data swapping is to hinder such joint inference by randomly permuting the records' quasi-identifiers while keeping the sensitive variables fixed (or visa versa). In this way, there are multiple plausible values for the original data set that are compatible with the swapped data set---thereby adding uncertainty to the relationship between any record's sensitive variables and its quasi-identifiers.

It is important to emphasize that the above discussion is only an intuitive justification for data swapping. A major motivation for this article is to supplement such intuitive arguments with mathematical SDC guarantees. Some such guarantees are provided by the Permutation Swapping Algorithm's differential privacy specification. In fact, Theorem~\ref{thmSwapDP} can be interpreted as a formalization of the above intuitive argument because it provides a bound on how plausible the true confidential data set is compared to other compatible data sets. This bound ensures a degree of uncertainty in the relationship between $\Vswap$ and $\Vhold\setminus\Vswap$. Taking $\Vswap$ to be the quasi-identifiers and $\Vhold\setminus\Vswap$ the sensitive variables (or visa versa), this recovers the above argument. 
However, theoretically any set of variables can function as quasi-identifiers, depending on the attacker's auxiliary knowledge and the context of the data collection (see, e.g., \cite{sweeneySimpleDemographicsOften2000, sweeneyKAnonymityModelProtecting2002, machanavajjhalaLdiversityPrivacyKanonymity2007, cohenAttacksDeidentificationDefenses2022}). As such, arguments that rely on knowing what variables are quasi-identifiers may have limited utility outside the scope of context-specific SDC analyses.

Data swapping is widely utilized---typically in combination with other SDC methods---by statistical offices across the globe. As we remark in the main body of this article, it has been used and studied extensively by the United States
Census Bureau \citep{mckennaLegacyTechniquesCurrent2019, steelEffectsDisclosureLimitation2003, zayatzDisclosureAvoidanceCensus2010, zayatzDisclosureAvoidancePractices2007, laugerDisclosureAvoidanceTechniques2014, lemonsMeasuringDegreeDifference2015}. Further, the Office of National Statistics (ONS) has employed it for the 2001, 2011, and 2021 UK Censuses \citep{spicerStatisticalDisclosureControl2020, officefornationalstatisticsonsProtectingPersonalData2023, shlomoDataSwappingProtecting2010}. It is also one of the two protection methods recommended by Eurostat's Centre of Excellence on Statistical Disclosure Control \citep{glessingRecommendationsBestPractices2017} and was used (or is intended to be used) for protecting census data by 15 of 30 European Union states surveyed by \citet{devriesOverviewUsedMethods2023}. The Australian Bureau of Statistics uses it as one of their primary SDC methods for releasing microdata \citep{australianbureauofstatisticsTreatingMicrodata2021}, and it has been explored as a method for protecting the Japanese Population Census \citep{itoDataSwappingMore2014}.

While we largely focus on the swapping procedure used in the 2010 U.S. Decennial Census, much of this article applies to other statistical agencies, especially when their swapping mechanisms are similar to the Disclosure Avoidance System (DAS) used in 2010. In particularly, the ONS's Targeted Record Swapping \citep{ukstatisticsauthorityTransparencySDCMethods2021} closely aligns with the 2010 DAS and hence the current article is also relevant for the 2021 UK Census.	

\section{The 2010 U.S. Census Disclosure Avoidance System}\label{appendix2010Swap}

This appendix collates information about the 2010 Disclosure Avoidance System (DAS) that have been made public by the United States Census Bureau (USCB). Most of this information also applies to the 2000 DAS---as it was very similar to the 2010 DAS---but likely not to the 1990 DAS, which used a significantly different data swapping procedure \citep{mckennaDisclosureAvoidanceTechniques2018}.

The main references are \citet{mckennaDisclosureAvoidanceTechniques2018}, \citet{ mckennaLegacyTechniquesCurrent2019} and \citet{abowdDeclarationJohnAbowd21}, with additional information spread across various other USCB publications \citep{zayatzDisclosureAvoidanceCensus2010, zayatzDisclosureLimitationCensus2003, hawalaProducingPartiallySynthetic2008, u.s.censusbureau2010DemonstrationData2022, u.s.censusbureauComparingDifferentialPrivacy2021, hawesUnderstanding2020Census2021, zayatzDisclosureAvoidancePractices2007, laugerDisclosureAvoidanceTechniques2014, garfinkelFormalPrivacyMaking2019a, hawesCensusBureauSimulated2021, steelEffectsDisclosureLimitation2003, lemonsMeasuringDegreeDifference2015}. However, the publicly available documentation on the 2010 DAS is deliberately incomplete as some implementation details have been deemed confidential by the USCB due to concerns that they may allow the privacy protections of the 2010 DAS to be undermined.  We are not the only researchers external to the USCB who have attempted to reproduce the 2010 DAS \citep{kimEffectDataSwapping2015, radwayImpactDeIdentificationSingleYearofAge2023, christDifferentialPrivacySwapping2022, keyesHowCensusData2022}; however, we believe this documentation is the most comprehensive of those that are currently publicly available.

The primary protection method of the 2010 DAS was data swapping. Special tabulations had additional rules-based protections (see Appendix~A of \citeauthor{mckennaDisclosureAvoidanceTechniques2018}, \citeyear{mckennaDisclosureAvoidanceTechniques2018} for these rules). Synthetic data methods were used to protect the confidentiality of group quarters (GQs) since swapping was infeasible for GQs due to their sparsity and the consequent lack of matching records \citep{hawalaProducingPartiallySynthetic2008}. These synthetic data methods involved replacing some GQ data with predicted values from a generalized linear model \citep[Section~6.5]{mckennaDisclosureAvoidanceTechniques2018}. 

The data swapping procedure for the 2000 and 2010 DAS had three main steps: 
\begin{enumerate}[label=Step \arabic*:, leftmargin=\widthof{[Step-III]}+\labelsep]
\item A random set $S$ of household records was selected.
\item Each record in $S$ was paired with a similar, nearby household.
\item The location of each household in $S$ was swapped with the location of its pair.
\end{enumerate}

We will describe each of these steps in detail below. The data swapping procedure was applied only to households \citep[i.e., `occupied housing units',][]{2010CensusSummary} and not to unoccupied housing units or group quarters. At the end of step 3, the DAS swapping procedure outputs a data set ---called the post-swapped data set---which differs from its input (the Census Edited File [CEF]) only on the locations of the selected households and their pairs. All publications from the 2010 Census were derived from this post-swapped data set \citep{zayatzDisclosureAvoidanceCensus2010, laugerDisclosureAvoidanceTechniques2014}. (The post-swapped data set is called the Hundred-percent Detailed File by the USCB.)

{\bf Step 1}: Household records were randomly selected into the set $S$ with a probability that, for 2010, depended on (possibly among other factors): 
\begin{enumerate}[label = (\Alph*)]
\item The size of the household's block (larger blocks decreased the probability of selection)
\item Whether the household contained individuals of a race category not found elsewhere in its block (unique race categories increased the probability of selection)
\item The imputation rate within the household's block (higher imputation rates decreased the probability of selection)
\item Whether the household was unique within their geographical area on some set of variables (such households were always included in $S$). (It is not clear what geographical area was used, but we speculate that it may have been either the household's block group, tract, or county.)
\item Whether the household record was imputed and what proportion of the record was imputed \citep{mckennaDisclosureAvoidanceTechniques2018, abowdDeclarationJohnAbowd21, mckennaLegacyTechniquesCurrent2019}.
\end{enumerate}
Note that (at least in 2000) the selection of records into $S$ was not mutually independent. This was because the number of records in $S$ was capped so that the proportion of swapped records (i.e., the swap rate) was controlled at prespecified thresholds at the state level \citep{steelEffectsDisclosureLimitation2003}. (The swap rates for each state were approximately equal \citep{steelEffectsDisclosureLimitation2003}.) There may have also been other dependencies between households' selection into $S$.

Exactly how a household's probability of selection was calculated is not public information. However, the USCB has confirmed that in 2010, the marginal selection probability (unconditional on other selections) was zero for totally imputed households, and was nonzero for all other households \citep{abowdDeclarationJohnAbowd21, mckennaDisclosureAvoidanceTechniques2018, hawesCensusBureauSimulated2021}.\footnote{However, this appears to be contradicted by another statement from the USCB: ``there was a threshold value for not swapping in blocks with a high imputation rate'' \citep{mckennaLegacyTechniquesCurrent2019}. Assuming that this imputation rate threshold was under 100\%, there would be not-totally-imputed household records with zero selection probability.

There is a possible explanation of this contradiction. All not-totally-imputed households may have had the possibility of being swapped (in Step~2) even though some of them had zero probability of being selected into $S$. Yet this would require that, for all the not-totally-imputed households $h$ with zero selection probability, there was a household $h'$ with nonzero selection probability that matched $h$ on the five criteria in Step~2 and furthermore that $h$ and $h'$ could possibly be matched given the DAS's prioritization of certain matches over other matches (e.g., 1.--3. in Step~2). It seems infeasible to guarantee such requirements for all possible CEFs.}

{\bf Step 2}: For each household record in $S$, the DAS swapping procedure found a household that
\begin{itemize}%
\item had the same number of adults (over 18 years of age);
\item had the same number of minors (under 18 years of age);\footnote{As a consequence, the paired housing units also matched on the occupancy status (occupied versus unoccupied), and total number of persons.}
\item had the same tenure status;\footnote{The 2010 Census classified households' tenure as either owner-occupied (owned outright), owner-occupied (with a loan or mortgage), renter occupied, or occupied without payment of rent. It is unclear if the swapping procedure matched households on these categories, or only on the broader categories of A. owner-occupied vs B. renter-occupied (including without payment of rent) \citep{2010CensusSummary, 2010SampleCensusForm}.}
\item was located in the same state; but
\item was located in a different block \citep{abowdDeclarationJohnAbowd21, garfinkelFormalPrivacyMaking2019a, u.s.censusbureauComparingDifferentialPrivacy2021}.\footnote{The public documentation from the USCB is contradictory on whether there were additional requirements beyond the five listed here \citep{hawesCensusBureauSimulated2021, abowdDeclarationJohnAbowd21, u.s.censusbureau2010DemonstrationData2022}.}
\end{itemize}
A household that satisfies these requirements is called a matching household. Each record in $S$ was paired with a matching household. In 2000 (and hence plausibly in 2010 as well), the swapping procedure prioritized pairings where
\begin{enumerate}
\item the matching record was also in $S$; or
\item both records were geographically close (e.g., they were in the same tract or county); or
\item the matching record had a high ``disclosure risk'' \citep{steelEffectsDisclosureLimitation2003}.
\end{enumerate}
It is possible that there were other criteria for deciding the pairing when there were multiple matching households. It is unclear how these criteria were ranked in their importance. (For example, how did the swapping procedure decide between I. a pair where both records were in $S$ but were in different counties; and II. a pair where both records were in the same block group but only one record was in $S$?) However, it is likely that criteria 1. was considered the most important since it minimizes the number of swaps \citep{steelEffectsDisclosureLimitation2003}.

{\bf Step 3}: Steps 1 and 2 produced pairs of household records. These pairs consisted of one record from $S$ along with its matching record found in Step 2 (which may also be in $S$). In Step 3, all pairs had their locations swapped. More exactly, for each household in $S$, the value of its block, block group, tract, and county were swapped with the corresponding values of its paired record. (Note that a pair of records might have had the same block group, tract, or county, in which case these values did not change. \citeauthor{garfinkelFormalPrivacyMaking2019a}, \citeyear{garfinkelFormalPrivacyMaking2019a} states that the paired households were always in the same state, so this location variable was never swapped.)

\subsection{Comparing the 2010 DAS With the PSA}\label{appendixCompareDAS2010Swapping}

In this section, we compare the 2010 data swapping procedure with the PSA. The PSA is a general algorithm (in the sense that its parameters---such as the swapping and matching variables---are not set but must be chosen). Thus, for the purposes of this comparison, we will consider the PSA using the implementation choices that attempt to mirror the 2010 DAS, as given in Section~\ref{sec:2010PrivacyLoss}.

There are a number of key similarities between the data swapping procedure in the 2010 DAS and the PSA from Section~\ref{sec:2010PrivacyLoss}:
\begin{enumerate}
\item The \emph{swapping units} (i.e., the records that are swapped) are household-records for both the 2010 DAS and the PSA.

\item \emph{Swap rates:} The swap rate is defined as the fraction of records that were swapped. For the 2010 DAS, the swap rate is the fraction of records that were selected into $S$ or were paired with a record in $S$. This rate was explicitly controlled by the USCB at the state level and all states had approximately the same swap rate  \citep{zayatzDisclosureLimitationCensus2003}. Although the USCB has not released the value of the 2010 DAS's swap rate, at the national level it is purported to be between 2-4\% \citep{boyd2022differential}.

In comparison, the PSA controls the expected swap rate (where the expectation is over the randomness in the PSA). An implementer of the PSA cannot precisely fix the swap rate---but only the expected swap rate (via the parameter $p$). However, when the number of records $n$ is large, the swap rate is typically very close to $p$, since its variance is approximately $p(1-p)/n \approx 0$. 

Hence, one may set the PSA's parameter $p$ so that the swap rates for the PSA and the 2010 DAS are similar at the state and national levels.

\item The \emph{matching variables} of the 2010 DAS include the household's state, the number of adult occupants, the number of child occupants, and the household's tenure status. There may be other matching variables (which have not been disclosed by the USCB), but \citet{abowdDeclarationJohnAbowd21} implicitly suggests that this is not the case. The PSA could be implemented with exactly the same matching variables. However, the matching variables of the PSA implementation in Section~\ref{sec:2010PrivacyLoss} are the household's state and counts of adults and children---the household's tenancy status was not included. By excluding a matching variable, the PSA from Section~\ref{sec:2010PrivacyLoss} has fewer invariants and its protection loss budget (PLB) $\epsilon$ is a conservative estimate, compared to a PSA implementation that mirrored exactly the 2010 DAS matching variables. (The reasoning here mirrors the discussion in Section~\ref{sec:2010PrivacyLoss} regarding the exclusion of the household count of voting age persons from the swap key.)

\item The \emph{swapping variables} of the 2010 DAS and the PSA from Section~\ref{sec:2010PrivacyLoss} are the same: the households' county, tract, block group, and block are swapped by the 2010 DAS and the PSA. (As we will discuss later in this section, the 2010 DAS sometimes used the households' county, tract, or block group as matching variables in an adaptive matching procedure. For our purposes, they can still be considered as swapping variables; matching variables can always be swapped since swapping them does not change the data.)
\end{enumerate}

There are a number of significant differences between the PSA and the 2010 swapping procedure:
\begin{enumerate}
\item The 2010 DAS \emph{swapped} pairs of records, whereas the PSA \emph{permutes} multiple records. While any permutation is equal to a sequence of multiple pairwise swaps, the 2010 DAS does not allow for such arbitrary swaps. However, permutation swapping (under the name $n$-Cycle swapping) was actively being investigated by the USCB \citep{depersioNcycleSwappingAmerican2012, laugerDisclosureAvoidanceTechniques2014} before this work was supplanted by their shift toward differential privacy (DP) \citep{mckennaLegacyTechniquesCurrent2019}. The USCB found that permutation swapping provided both better data utility and better data protection than the pairwise swapping used in 1990-2010 Censuses; their second finding is corroborated by our DP analysis of permutation swapping.

\item \emph{Swap probabilities:} The probability of a given household being swapped was not constant in the 2010 DAS. In fact, swapping was highly targeted to households that were ``vulnerable to re-identification'' \citep{hawesCensusBureauSimulated2021}. Moreover, the probability of a household being swapped was dependent on whether other households were selected for swapping (for example, because the absolute statewide swap rates were controlled). In comparison, in the PSA, the probability of a household being swapped is constant and independent of other households.

\item \emph{Adaptive matching:} The 2010 DAS paired households according to a complicated matching procedure. For example, they prioritized matching households that shared the same county or tract. (More details on their matching procedure is given in Step 2 of the 2010 DAS description above.) In essence, this means that sometimes the household's county or tract were included as matching variables, and sometimes they were not; and whether they were included was a function of the household as well as its matching households. The matching procedure for the PSA is much simpler by comparison: the matching variables are static and the choice of how to swap the selected matching households is made uniformly at random.

\item \emph{Nonvacuous swaps:} A swap is vacuous if it does not change the data set, except (possibly) by reordering the records. A pairwise swap is not vacuous if and only if the paired records have different values for both their swapping variables $\Vswap$ and their holding variables $\Vhold$. It is unclear whether the 2010 DAS prohibited vacuous swaps but we suspect so. On the other hand, vacuous swaps are allowed by the PSA.
\end{enumerate}

\subsection{Modifying the PSA to Further Align With the 2010 DAS}\label{appendixModifyPSA2010}

We discuss some possible extensions to the PSA in Section~\ref{secMitigateInvariants}. Those extensions aimed to reduce the PSA's invariants without foregoing its DP guarantee. In this section, we propose four additional extensions to the PSA that also provide a DP guarantee while being more faithful to the 2010 DAS. These extensions address the differences between the 2010 DAS and the PSA identified in the previous section. We show that these differences can be bridged without losing the guarantee of DP---at the cost of greatly complicating the calculation of the PLB. We do not attempt these calculations; we only argue why the PLBs for these extensions remain bounded away from infinity.

First, we address one aspect of the 2010 DAS that cannot be incorporated into a DP swapping mechanism. %
The 2010 DAS used disjoint, pairwise swapping \citep{laugerDisclosureAvoidanceTechniques2014}. This is not a transitive action---its orbit space does not equal the universe induced by the swapping invariants---and hence it cannot satisfy differential privacy. (Roughly speaking, a necessary condition for a mechanism $T$ to be pure DP is that $\sfP \big(T(\xd, U) \in\cdot\ \big)$ and $\sfP\big( T(\xd', U) \in\cdot\ \big)$ have common support for all $\xd$ and $\xd'$ in the same data universe with $\dX(\xd, \xd') < \infty$. When the statistical disclosure control method is a random group action on $\xd$, as is the case for permutation swapping, this condition is equivalent to the group action being transitive.)

\emph{Variable Swapping Probabilities}: The PSA uses the same swapping probability $p$ for all records. However, we can modify the PSA to use a different swapping probability $p_i$ for each record $i$. As long as these probabilities are uniformly bounded away from zero and one (so that $p_i^{-1}$ and $(1-p_i)^{-1}$ are bounded away from infinity), this modification will satisfy the same DP flavor as the original PSA (with a finite budget). The proof would follow the same strategy as in Appendix~\ref{sectionAppendixProofMainTheorem}; only the final computations would change, as one would need to optimize over $o_i = p_i/(1-p_i)$ for all $i$.

\emph{Nonuniform Permutations}: The PSA samples derangements of the selected records uniformly at random, whereas the 2010 DAS prioritizes certain swaps over others. We can mirror this aspect of the 2010 DAS by sampling from a nonuniform distribution over the derangements. This would allow for some derangements to be selected with higher probability than other derangements. The advantage here is that some derangements, which result in poor data utility (such as when geographically distant records are swapped), can be undersampled; while other, more desirable derangements can be oversampled. This would mimic the adaptive matching of the 2010 DAS. By reasoning that is analogous to the previous extension, this extension will also retain the PSA's DP flavor, provided that
$\sfPx (\sigma = g)/\sfPxdash (\sigma = g')$
is uniformly bounded by %
$\exp[O(\abs{k_g - k_{g'}})]$, for all derangements $g$ and $g'$ (of $k_g$ and $k_{g'}$ records, respectively).

\emph{Prohibiting Imputed Records From Being Swapped}: The 2010 DAS never swaps records that have been completely imputed. (The rationale is that imputed records do not require privacy protection.) We can modify the PSA so that $p_i = 0$ for all records $i$ that are imputed. Suppose that the records that are imputed are constant. If the PSA satisfies \SpecSwap, then this modification would satisfy $\epsilon'_{\D}$-DP$(\Xcef,\allowbreak \DSwap,\allowbreak \dX',\allowbreak \Mult)$, where
\[\dX' (\xd, \xd') = \begin{cases}
\dHamSh (\xd, \xd') & \text{if $\xd, \xd'$ do not differ on any imputed record,} \\ 
\infty & \text{otherwise'}
\end{cases}\]
and $\epsilon_{\mathcal D}' \le \epsilon_{\mathcal D}$ since the maximum stratum size $b$ is reduced when the imputed records are removed.

\emph{Prohibiting Vacuous Swaps}: A swap (or more generally a permutation) is vacuous if it does not change the data set, except perhaps by reordering the records. %

We assume that the 2010 DAS does not allow vacuous swaps. We can similarly prohibit the PSA from allowing vacuous swaps. Instead of sampling derangements uniformly at random, we would put zero probability on vacuous derangements. Under the action of nonvacuous derangements, the orbit space is still the entire data universe. Hence, this modification will still satisfy the same DP flavor as the PSA, however, the calculation of the PLB $\epsilon_{\mathcal D}$ will be difficult as one must optimize $\sfPx (\sigma = g)/\sfPxdash (\sigma = g')$ over all permutations $g$ and $g'$ that are nonvacuous with respect to $\xd$ and $\xd'$, over all $\xd, \xd' \in \mathcal D$.

\section{Appendix to Section~\ref{secInvariants}}\label{secInvariantsProofs}

\subsection{An Additional Result on the Impact of Invariants}

\begin{proposition}\label{propNestedscD}
    Suppose that $\scD'$ is refinement of $\scD$. %
    Then, for all PLBs $\epsilon_{\D'}' : \scD' \to [0,\infty]$, any mechanism that satisfies \DPSpec\ also satisfies $\varepsilon_{\D'}'$-DP$(\X,\allowbreak \scD',\allowbreak \dX,\allowbreak \dPr)$, 
    whenever $\epsilon_{\D} \le \inf \{ \epsilon_{\D'}' : \D' \in \scD' \mathrm{\ with\ } \D' \subset \D\}$. Furthermore, for all PLBs $\epsilon_{\D} : \scD \to [0,\infty]$, any mechanism that satisfies \DPSpec\ also satisfies $\varepsilon_{\D'}'$-DP$(\X,\allowbreak \scD',\allowbreak \dX,\allowbreak \dPr)$, whenever $\epsilon'_{\D'} \ge \inf \{ \epsilon_{\D} : \D \in \scD \mathrm{\ with\ } \D' \subset \D \}$.%
\end{proposition}

This result shows how protection loss budgets (PLBs) under one flavor of differential privacy (DP) can be transferred into PLBs under another flavor, when the two flavors are related by the fact that one's multiverse is refinement of the other (such as is the case when one has more invariants than the other). It will be used in the proofs of Proposition~\ref{propNestedscD2} and Theorem~\ref{thmTDASatisfiesDP}. Its second part is a stronger and more general result than Proposition~2.7 of \citet{gao2022subspace}, which was restricted to nested linear subspaces and did not consider shrinking of the PLB.

\subsection{Proofs for Section~\ref{secInvariants}'s Results}

\begin{proof}[Proof of Proposition~\ref{propReleaseF}]
    $T$ is constant within any universe $\mathcal D \in \scD_{\bm c}$. Therefore,
    \[\dPr\left[ \sfP(T(\xd, U) \in \cdot\ ), \sfP(T(\xd', U) \in \cdot\ ) \right] = 0,\]
    for all $\xd,\xd' \in \mathcal D$. This proves the first half of the proposition. 
    
    To prove the second half, observe that $\sfP(T(\xd, U) \in \cdot\ )$ and $\sfP(T(\xd', U) \in \cdot\ )$ have disjoint support (the former's support is concentrated on $\bm c(\xd)$, the latter's on $\bm c(\xd')$). Hence their total variation distance is one. This implies 
    \[\dPr\left[ \sfP(T(\xd, U) \in \cdot\ ), \sfP(T(\xd', U) \in \cdot\ ) \right] = \infty,\]
    Because $\dX(\xd_1, \xd_2) < \infty$, the DP condition (Equation~\ref{eqTSatisfiesDPDefn}) can therefore only be satisfied if $\epsilon_{\D_0} = \infty$.
\end{proof}

\begin{proof}[Proof of Proposition~\ref{propReleaseFConverse}]
    This proposition relies on the metric axiom $\dPr(\sfP,\sfQ) > 0$ if $\sfP \ne \sfQ$. This implies 
    \[\dPr\left[ \sfP(T(\xd, U) \in \cdot\ ), \sfP(T(\xd', U) \in \cdot\ ) \right] > 0.\]
    Because $\dX(\xd, \xd') < \infty$, the DP condition can therefore only be satisfied if $\epsilon_{\D_0} > 0$. (Note that on the right hand side of Equation~\ref{eqTSatisfiesDPDefn}, we define $0 \times \infty$ to be $\infty$.)
\end{proof}

\begin{lemma}\label{lemmaMultiverseCompare}
    Given two DP specifications \DPSpec\ and $\epsilon'_{\D'}$-DP$(\X,\allowbreak \scD',\allowbreak \dX,\allowbreak \dPr)$, suppose that, for all $\D' \in \scD'$ and all $\delta > 0$, there exists $\D \in \scD$ such that $\D' \subset \D$ and $\epsilon_{\D} \le \epsilon_{\D'}' + \delta$. Then
    \[\M (\X, \scD, \dX, \dPr, \epsilon_{\D}) \subset \M (\X, \scD', \dX, \dPr, \epsilon_{\D'}').\]
\end{lemma}

In the above lemma, $\M (\X,\allowbreak \scD,\allowbreak \dX,\allowbreak \dPr,\allowbreak \epsilon_{\D})$ denotes the set of mechanisms satisfying the DP specification \DPSpec. 

\begin{proof}
    Suppose that $T \in \M (\X, \scD, \dX, \dPr, \epsilon_{\D})$. Let $\D' \in \scD'$ and $\delta > 0$. Suppose $\xd, \xd' \in \D'$. By assumption, there exists $\D \in \scD$ such that $\D' \subset \D$ and
    \[\dPr\left[ \sfP(T(\xd, U) \in \cdot\ ), \sfP(T(\xd', U) \in \cdot\ ) \right] \le \epsilon_{\D} \dX(\xd, \xd') \le (\epsilon'_{\D'}+\delta) \dX(\xd, \xd').\]
    Since $\dPr\left[ \sfP(T(\xd, U) \in \cdot\ ), \sfP(T(\xd', U) \in \cdot\ ) \right] \le (\epsilon'_{\D'}+\delta) \dX(\xd, \xd')$ holds for all $\delta > 0$, it follows that $\dPr\left[ \sfP(T(\xd, U) \in \cdot\ ), \sfP(T(\xd', U) \in \cdot\ ) \right] \le \epsilon'_{\D'} \dX(\xd, \xd')$. This proves $T \in \M (\X,\allowbreak \scD',\allowbreak \dX,\allowbreak \dPr,\allowbreak \epsilon_{\D'}')$.
\end{proof}

\begin{proof}[Proof of Proposition~\ref{propNestedscD}]
    Because $\scD'$ is a refinement of $\scD$, the assumption of Lemma~\ref{lemmaMultiverseCompare} holds for both choices of the budgets $\epsilon_{\D}$ and $\epsilon'_{\D'}$. The results then follow by this lemma.
\end{proof}

\begin{proof}[Proof of Proposition~\ref{propNestedscD2}]
    Suppose for contradiction that the result does not hold. Then there exists some $\D'_0 \in \scD'$ and some $\D_0 \in \scD$ such that $\PLDdashz < \PLDz$ and $\D_0 \subset \D'_0$.

    Given a budget $\epsilon_{\D'}'^{(i)} : \scD' \to [0,\infty]$, define the budget $\epsilon_{\D}^{(i)} : \scD \to [0,\infty]$ as follows: For $\D_0$, define $\epsilon_{\D_0}^{(i)} = \epsilon_{\D_0'}'^{(i)}$. For all other $\D_1 \in \scD$, fix a $\D'_1 \in \scD'$ (which does not depend on $i$) with $\D' \supset \D$ and define $\epsilon_{\D_1}^{(i)} = \epsilon_{\D_1'}'^{(i)}$. 

    Now, if $T$ satisfies $\epsilon_{\D'}'^{(i)}$-DP\DPFlavDiffMultiverse, then it also satisfies $\epsilon_{\D}^{(i)}$-DP\DPFlav\ by Proposition~\ref{propNestedscD}. Yet we can construct a sequence of budgets $\epsilon_{\D'}'^{(1)}, \epsilon_{\D'}'^{(2)}, \ldots$ such that 
    \begin{enumerate}
        \item $\epsilon_{\D'_0}'^{(i)}$ converges to $\PLDdashz$ as $i \to \infty$; and
        \item $T$ satisfies $\epsilon_{\D'}'^{(i)}$-DP\DPFlavDiffMultiverse\ for all $i$.
    \end{enumerate}
    Because $\PLDdashz < \PLDz$, this implies there exists some $N$ such that $T$ satisfies $\epsilon_{\D}^{(N)}$-DP\DPFlav\ with $\epsilon_{\D_0}^{(N)} < \PLDz$. This contradicts the definition of $\PLDz$.
\end{proof}

\section{Proof of Theorem~\ref{thmSwapDP}}\label{sectionAppendixProofMainTheorem}

In this appendix, we prove that Algorithm~\ref{algoPermute} (the Permutation Swapping Algorithm [PSA]) satisfies \SpecSwapGen\ %
for the value of $\epsilon_{\mathcal D}$ given in Theorem~\ref{thmSwapDP}. Assume throughout this appendix the conditions of Theorem~\ref{thmSwapDP}: that all $\xd \in \X$ share a common set of variables $\bm V$, which is partitioned into subsets $\Vswap$ and $\Vhold$; and that the PSA swaps records at the same resolution as $\dHamSr$. %

By Proposition~\ref{propInvariantsOfSwap}, we may also assume that there is exactly one matching variable, one nonmatching holding variable, and one swapping variable. For ease of exposition, we assume that each of these variables can take on a finite number of values, which we denote by $m = 1,\ldots, \mathcal M$ and $h = 1, \ldots, \mathcal H$ and $s = 1, \ldots, \mathcal S$, respectively, although the proof immediately generalizes beyond this assumption. Recall that $n_{mhs}^{\xd}$ is the count of records in $\xd$, which take the value $(m,h,s)$. Replacing a category $m,h,s$ with $\cdot$ denotes a marginal count---for example, $n_{m \cdot s}^{\xd} = \sum_{h=1}^{\mathcal H} n_{mhs}^{\xd}$. We will drop $\xd$ in the superscript and $\cdot$ in the subscript when this does not cause ambiguity.

Write $(M_i^{\xd}, H_i^{\xd}, S_i^{\xd})$ for the $i$-th record in $\xd$, so that we can write $\xd$ as the vector $[(M_i, H_i, S_i)]_{i=1}^{n}$, %
where $n = n_{\cdot\cdot\cdot}^{\xd} = \abs{\xd}$ is the number of records in $\xd$. With this notation, 
\[n_{mhs}^{\xd} = \sum_{i=1}^n 1_{M_i^{\xd} = m} 1_{H_i^{\xd} = h} 1_{S_i^{\xd} = s}.\]

Let $\ell_1^r(\xd, \xd')$ be the $\ell_1$-distance on the interior cells of the fully saturated contingency table
\begin{equation}\label{eqDefl1Distance}
	\ell_1^r(\xd, \xd') := \sum_{m,h,s} \abs{ n_{mhs}^{\xd} - n_{mhs}^{\xd'}}.
\end{equation}

\begin{lemma}\label{lemmaEll1HamDistEqual}
	$\ell_1^r(\xd, \xd') = 2 \dHamSr(\xd, \xd')$ if $\abs{\xd} = \abs{\xd'}$.
\end{lemma}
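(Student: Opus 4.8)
The plan is to reduce both sides of the identity to explicit functions of the contingency tables $[n_{mhs}^{\xd}]$ and $[n_{mhs}^{\xd'}]$ and then match them. The structural fact I would exploit is that, because records are unordered at resolution $r$, a dataset is completely determined by its contingency table, and so the Hamming distance $\dHamSr(\xd, \xd')$ between two equal-sized datasets is just the minimum number of records whose category must be relabelled to turn $[n_{mhs}^{\xd}]$ into $[n_{mhs}^{\xd'}]$. Everything then hinges on writing this count in closed form and comparing it to~\eqref{eqDefl1Distance}.

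First, I would recall from~\eqrefOther{eqHamSDistDefn} that, when $\abs{\xd} = \abs{\xd'}$, the Hamming distance admits the closed form
\[
\dHamSr(\xd, \xd') = \sum_{m,h,s} \bigl(n_{mhs}^{\xd} - n_{mhs}^{\xd'}\bigr)^+,
\]
where $(x)^+ = \max\{x,0\}$. The justification is a matching argument: the number of records common to $\xd$ and $\xd'$ is $\sum_{m,h,s}\min\{n_{mhs}^{\xd}, n_{mhs}^{\xd'}\}$, so the number of records of $\xd$ with no counterpart in $\xd'$ — i.e.\ the number that must be relabelled — is $\abs{\xd} - \sum_{m,h,s}\min\{n_{mhs}^{\xd}, n_{mhs}^{\xd'}\} = \sum_{m,h,s}\bigl(n_{mhs}^{\xd} - n_{mhs}^{\xd'}\bigr)^+$.

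Second, I would split the $\ell_1$ sum of~\eqref{eqDefl1Distance} into its one-sided parts using $\abs{a-b} = (a-b)^+ + (b-a)^+$, giving
\[
\ell_1^r(\xd, \xd') = \sum_{m,h,s} \bigl(n_{mhs}^{\xd} - n_{mhs}^{\xd'}\bigr)^+ + \sum_{m,h,s} \bigl(n_{mhs}^{\xd'} - n_{mhs}^{\xd}\bigr)^+,
\]
and then invoke conservation of total mass: since $\abs{\xd} = \abs{\xd'}$, we have $\sum_{m,h,s}\bigl(n_{mhs}^{\xd} - n_{mhs}^{\xd'}\bigr) = \abs{\xd} - \abs{\xd'} = 0$, which forces the two one-sided sums to be equal. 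Combining this with the first step yields $\ell_1^r(\xd, \xd') = 2\,\dHamSr(\xd, \xd')$.

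The main obstacle I anticipate is the first step — cleanly tying the closed-form expression back to the definition of $\dHamSr$ in Part~I. If that definition is already phrased directly in terms of contingency-table counts, the step is immediate; otherwise the matching/transport argument pairing the common records of $\xd$ and $\xd'$ and counting the unmatched surplus has to be spelled out, and one must confirm that it is valid precisely because $\abs{\xd} = \abs{\xd'}$ (this equality is exactly what makes relabelling, rather than insertion/deletion, the right operation). Once that is settled, the remaining two steps are a routine algebraic identity.
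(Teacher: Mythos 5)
Your proof is correct and complete. For what it's worth, the paper itself states Lemma~\ref{lemmaEll1HamDistEqual} without any proof at all, so there is no in-paper argument to compare against; your write-up supplies exactly the justification the authors leave implicit. The three steps you give --- the closed form $\dHamSr(\xd,\xd') = \sum_{m,h,s}\bigl(n_{mhs}^{\xd} - n_{mhs}^{\xd'}\bigr)^+$ obtained by counting the common records $\sum_{m,h,s}\min\{n_{mhs}^{\xd}, n_{mhs}^{\xd'}\}$ and subtracting from $\abs{\xd}$, the decomposition $\abs{a-b} = (a-b)^+ + (b-a)^+$, and the mass-conservation observation that $\abs{\xd}=\abs{\xd'}$ forces the two one-sided sums to coincide --- constitute the standard argument and leave no gaps. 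The one dependency you correctly flag is that the first step requires the Part~I definition of $\dHamSr$ to reduce, on unordered equal-size datasets, to the minimum number of record substitutions; under that reading (which is how the paper uses $\dHamSr$ throughout, e.g.\ in Lemma~\ref{lemmaDSteps}, where a distance of $\Delta$ corresponds to a permutation deranging exactly $\Delta$ records), your identity is exactly right.
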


\begin{lemma}
	$\Mult$ is a metric on the space of a.e. equal random variables (over the same probability space $\mathcal T$).
\end{lemma}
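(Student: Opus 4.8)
The plan is to verify directly the four defining properties of an (extended-real-valued) metric for $\Mult$, regarded as a map on pairs of probability measures on $(\T,\mathscr F)$ taking values in $[0,\infty]$ (allowing the value $\infty$, consistent with the paper's usage elsewhere). It is convenient to first record the equivalent reformulation $\Mult(\sfP,\sfQ)=\inf\{\epsilon\ge 0 : \sfP(E)\le e^{\epsilon}\sfQ(E)\text{ and }\sfQ(E)\le e^{\epsilon}\sfP(E)\text{ for all }E\in\mathscr F\}$, with $\inf\emptyset=\infty$. This form makes symmetry manifest and sidesteps the need to assign a value to $\ln 0$ or $\ln\infty$: an event on which one measure vanishes while the other does not simply forces $\Mult=\infty$, while an event null under both contributes nothing.

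Nonnegativity is immediate, since a supremum of absolute values is $\ge 0$ (equivalently, $\epsilon=0$ is the smallest conceivable candidate in the reformulation); symmetry is built into the reformulation; and $\Mult(\sfP,\sfP)=0$ because every ratio equals $1$. For the triangle inequality I would introduce a third measure $\mathsf R$ and chain the multiplicative bounds: for any $\epsilon_1>\Mult(\sfP,\sfQ)$ and $\epsilon_2>\Mult(\sfQ,\mathsf R)$ one has $\sfP(E)\le e^{\epsilon_1}\sfQ(E)\le e^{\epsilon_1+\epsilon_2}\mathsf R(E)$ for every $E\in\mathscr F$, and symmetrically $\mathsf R(E)\le e^{\epsilon_1+\epsilon_2}\sfP(E)$, so $\Mult(\sfP,\mathsf R)\le \epsilon_1+\epsilon_2$; sending $\epsilon_1,\epsilon_2$ down to the two distances (or reading the bound as trivial when either is $\infty$) gives the claim. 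Equivalently, this is just subadditivity of $\abs{\cdot}$ applied to $\ln(\sfP(E)/\mathsf R(E))=\ln(\sfP(E)/\sfQ(E))+\ln(\sfQ(E)/\mathsf R(E))$ followed by a supremum over $E$.

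The remaining axiom, separation, is the crux of the statement and is what pins down the underlying space. If $\Mult(\sfP,\sfQ)=0$ then $\sfP(E)=\sfQ(E)$ for every $E\in\mathscr F$, so $\sfP$ and $\sfQ$ coincide as measures; conversely, if $\sfP\neq\sfQ$ there is an event $E_0$ with $\sfP(E_0)\neq\sfQ(E_0)$, and then either both probabilities are positive, yielding a strictly positive log-ratio, or exactly one vanishes, yielding $\Mult=\infty$ -- in either case $\Mult(\sfP,\sfQ)>0$. This is exactly why the lemma must work modulo a.e. equality: identifying two measures precisely when they agree on every event of $\mathscr F$ (equivalently, identifying their Radon--Nikodym densities with respect to a common dominating measure, such as $(\sfP+\sfQ)/2$, when these densities agree almost everywhere) is what turns what would otherwise be a pseudometric into a genuine metric separating distinct points.

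I expect the main obstacle to be not any single inequality but the consistent bookkeeping of the extended-real arithmetic and the zero-probability conventions throughout -- in particular, confirming that the triangle inequality survives when one distance is infinite (the right-hand side is then $\infty$) and when some event is null under one or more measures. Routing every step through the $\inf$-reformulation handles these cases uniformly, after which each axiom reduces to an elementary observation.
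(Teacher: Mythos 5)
Your proof is correct and takes essentially the same route as the paper's: the triangle inequality via subadditivity of $\abs{\cdot}$ applied to $\ln(\sfP(E)/\mathsf R(E))=\ln(\sfP(E)/\sfQ(E))+\ln(\sfQ(E)/\mathsf R(E))$ followed by a supremum over events, with symmetry and separation treated as immediate. Your additional care with the $\inf$-reformulation and the zero-probability and infinite-value edge cases is a refinement of bookkeeping, not a different argument.
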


\begin{proof}
	It is easy to see that $\Mult$ is symmetric and $\Mult (X, Y) = 0$ if and only if $X = Y$ a.e. All that remains is to verify the triangle inequality. Let $\{E_n\} \subset \mathcal F$ such that
	\[ \abs{ \ln \frac{\sfP (X \in E_n)}{\sfP (Z \in E_n)}} \to \Mult(X,Z),\]
	as $n \to \infty$. Then
	\begin{align*}
		\abs{ \ln \frac{\sfP (X \in E_n)}{\sfP (Z \in E_n)}} &\le \abs{ \ln [\sfP (X \in E_n)] - \ln [\sfP (Y \in E_n)]} + \abs{\ln [\sfP (Y \in E_n)] - \ln [\sfP (Z \in E_n)]} \\
		&\le \Mult (X,Y) + \Mult (Y,Z).\qedhere
	\end{align*}
\end{proof}

Recall that $\sigma_m$ is the random perturbation sampled by the PSA, which deranges the selected records in matching stratum $m$. Let $\sigma$ be the permutation defined by $\sigma(i) = \sigma_{M_i}(i)$. Since $\sigma_m$ fixes $i$ whenever $M_i \ne m$, it is the case that $\sigma = \sigma_{\mathcal M} \circ \cdots \circ \sigma_{1}$. %
(Note that $\sigma$ is a random function of the input data set $\xd$, although we leave this dependence implicit.) For a permutation $g$, write $g(\xd)$ as shorthand for the data set %
in which the the values of the swapping variables have been permuted according to $g$. %
That is, if $\xd = [(M_i, H_i, S_i)]_{i=1}^{n}$ then $g(\xd) = [(M_i, H_{i}, S_{g(i)})]_{i=1}^n$. Given an input data set $\xd$, the swapped data set $\sigma(\xd)$ generated by the PSA is denoted by $\bm Z$.

Let $\sfPx$ denote the probability %
induced by the randomness in the PSA (i.e., the randomness in selecting records and in sampling the permutation $\sigma$), taking the input data set $\xd$ as fixed. Recall that the output of the PSA is the fully saturated contingency table $C(\bm Z) = [n_{jkl}^{\bm Z}]$. %

\begin{lemma}\label{lemmaReorderDoesntChangeMult}
	If $\xd$ and $\xd'$ differ only by reordering of rows (i.e., $\dHamSr(\xd, \xd') = 0$), then 
	\[\Mult\left[\sfPx(C(\bm Z) \in \cdot\ ), \sfPxdash(C(\bm Z) \in \cdot\ )\right] = 0.\]		
\end{lemma}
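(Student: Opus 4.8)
The plan is to exploit the fact that the PSA treats the records within each matching stratum exchangeably, so that reordering the rows of the input induces only a reordering of the output, which the contingency table $C(\bm Z)$ cannot detect. Concretely, I would first fix a permutation $\pi$ of $\{1,\ldots,n\}$ witnessing the reordering, so that the $i$-th record of $\xd'$ equals the $\pi(i)$-th record of $\xd$; that is, $(M_i^{\xd'}, H_i^{\xd'}, S_i^{\xd'}) = (M_{\pi(i)}^{\xd}, H_{\pi(i)}^{\xd}, S_{\pi(i)}^{\xd})$ for all $i$. Since $M_i^{\xd'} = M_{\pi(i)}^{\xd}$, the map $\pi$ sends stratum $m$ (viewed as a set of row indices) of $\xd'$ bijectively onto stratum $m$ of $\xd$, and the stratum sizes $n_{m\cdot\cdot}$ agree for $\xd$ and $\xd'$.

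Next I would set up a measure-preserving bijection between the internal randomness of the PSA applied to $\xd$ and to $\xd'$, via conjugation: to each permutation $\sigma$ that the PSA can produce on input $\xd$, associate $\sigma' = \pi^{-1} \circ \sigma \circ \pi$. Conjugation preserves cycle type stratum by stratum, so $\sigma'$ fixes and deranges exactly the $\pi^{-1}$-images of the records fixed and deranged by $\sigma$; in particular the number of records moved within each stratum is unchanged, and $\sigma'$ is again an admissible PSA permutation for $\xd'$. The crux is to observe that $\sfPx(\sigma)$ factorizes over strata and, within each stratum, depends only on the stratum size $n_{m\cdot\cdot}$ and on the number of records $\sigma$ deranges there: every admissible selection set of a given size is equally likely, every derangement of the selected records is equally likely, and the reselection loop contributes a normalizing factor that also depends only on $n_{m\cdot\cdot}$. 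Because $\pi$ preserves both quantities in every stratum, $\sfPxdash(\sigma') = \sfPx(\sigma)$, and $\sigma \mapsto \sigma'$ is a bijection between the two sets of admissible permutations.

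Finally I would verify that conjugate permutations yield identical output tables. Using $S_j^{\xd'} = S_{\pi(j)}^{\xd}$ together with $\pi \circ \sigma' = \sigma \circ \pi$, one gets $S_{\sigma'(i)}^{\xd'} = S_{\sigma(\pi(i))}^{\xd}$, so the $i$-th record of $\sigma'(\xd')$ equals the $\pi(i)$-th record of $\sigma(\xd)$. Hence $\sigma'(\xd')$ is merely a row-reordering of $\sigma(\xd)$, and since $C$ is invariant under reordering of rows, $C(\sigma'(\xd')) = C(\sigma(\xd))$. Combining the bijection with this identity and summing over $\sigma$ shows that $\sfPx(C(\bm Z) \in \cdot)$ and $\sfPxdash(C(\bm Z) \in \cdot)$ are the same distribution, whereupon the preceding lemma (that $\Mult$ is a metric vanishing precisely on a.e.-equal variables) delivers $\Mult\!\left[\sfPx(C(\bm Z)), \sfPxdash(C(\bm Z))\right] = 0$.

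The main obstacle I anticipate is the bookkeeping in the middle step: pinning down the per-permutation probability $\sfPx(\sigma)$ as a function only of the invariant quantities (stratum size and number of deranged records per stratum) — in particular isolating the reselection loop's normalizing constant and confirming it depends only on $n_{m\cdot\cdot}$ — and checking that conjugation by $\pi$ restricts to a bijection on admissible permutations within each stratum. The remaining relabeling computations are routine.
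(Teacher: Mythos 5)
Your proof is correct and follows the same route as the paper's, which simply asserts that $C(\bm Z)$ is invariant to row reordering and concludes $\sfPx(C(\bm Z)) = \sfPxdash(C(\bm Z))$ in one line. Your conjugation argument ($\sigma' = \pi^{-1}\circ\sigma\circ\pi$, measure preservation via the stratum-wise factorization of $\sfPx(\sigma)$, and the identity $C(\sigma'(\xd')) = C(\sigma(\xd))$) is exactly the equivariance reasoning the paper leaves implicit, spelled out in full.
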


\begin{proof}
	The contingency table $[n_{mhs}^{\bm Z}]$ is invariant to reordering of rows of $\bm Z$. Thus $\sfPx(C(\bm Z) \in \cdot\ ) = \sfPxdash(C(\bm Z) \in \cdot\ )$.
\end{proof}

\begin{lemma}\label{lemmaDSteps}
	Fix some data universe $\mathcal D \in \scD_{\cswap}$ and some $\xd, \xd' \in \mathcal D$ with $\dHamSr(\xd, \xd') = \Delta$. Then there exists a permutation $\rho$ that fixes exactly $n - \Delta$ records such that $C(\rho(\xd)) = C(\xd')$.%
\end{lemma}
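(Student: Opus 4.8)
The plan is to build $\rho$ explicitly, one matching stratum at a time, using that $\xd$ and $\xd'$ lie in the same universe and therefore share the swapping invariants $\cswap$, i.e.\ the margins $n_{mh\cdot}$ and $n_{m\cdot s}$ (Proposition~\ref{propInvariantsOfSwap}). The first observation is that applying a permutation to the swapping variable keeps each record in its $(m,h)$ cell: record $i$ becomes $(M_i, H_i, S_{\rho(i)})$, changing only its swapping value. Hence the construction decomposes over the strata $m$, with the global $\rho$ obtained by composing stratum-wise permutations acting on the disjoint record sets $\{i : M_i = m\}$. Rewriting $\Delta = \dHamSr(\xd,\xd')$ via Lemma~\ref{lemmaEll1HamDistEqual} as $\tfrac12 \ell_1^r(\xd, \xd') = \tfrac12 \sum_{m,h,s} \abs{n_{mhs}^{\xd} - n_{mhs}^{\xd'}}$, it then suffices to derange exactly $\Delta_m := \tfrac12 \sum_{h,s} \abs{n_{mhs}^{\xd} - n_{mhs}^{\xd'}}$ records inside each stratum, since $\sum_m \Delta_m = \Delta$.

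Fixing a stratum $m$, I would compare the sub-tables $A_{hs} := n_{mhs}^{\xd}$ and $B_{hs} := n_{mhs}^{\xd'}$, which have equal row margins $n_{mh\cdot}$ and equal column margins $n_{m\cdot s}$. In each cell $(h,s)$ I label $\min(A_{hs}, B_{hs})$ of the $\xd$-records \emph{common} and the remaining $\max(A_{hs}-B_{hs},0)$ records \emph{excess}, giving exactly $\Delta_m$ excess records. Common records will be fixed points of $\rho$. Each excess record sits in a cell $(h,s)$ with $A_{hs} > B_{hs}$ and must acquire a new swapping value filling a \emph{deficit} cell $(h,s')$ (one with $B_{hs'} > A_{hs'}$) of its own row $h$; the number of excess records in row $h$ matches the number of deficit slots in row $h$ because $\sum_{s}(A_{hs}-B_{hs}) = n_{mh\cdot}^{\xd} - n_{mh\cdot}^{\xd'} = 0$.

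The step I expect to be the main obstacle is assembling these per-row reassignments into a single permutation that deranges the excess records and fixes everything else. Here I would invoke a value-counting identity: for each swapping value $v$ the number of excess records holding $v$ equals the number of deficit slots demanding $v$, since the two counts differ by $\sum_h (A_{hv} - B_{hv}) = n_{m\cdot v}^{\xd} - n_{m\cdot v}^{\xd'} = 0$. Thus the multiset of values released by the excess records coincides with the multiset of values they must collectively acquire, so I can define $\rho$ on the excess records by bijecting, for each value $v$, the excess records assigned value $v$ with the excess records currently holding value $v$; this is a genuine permutation of the $\Delta_m$ excess records. It has no fixed point, because an excess record at $(h,s)$ is assigned a deficit value of row $h$, which cannot equal its current value $s$ (a cell cannot be simultaneously excess and deficit). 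Finally I would check cell by cell that the common records together with the deficit-filling excess records reproduce $B$ exactly, so $\rho$ deranges precisely $\Delta_m$ records in stratum $m$; composing over $m$ yields a permutation fixing exactly $n - \Delta$ records with $C(\rho(\xd)) = C(\xd')$.
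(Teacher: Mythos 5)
Your construction is correct, and it proves the lemma by a genuinely different route than the paper. The paper first strips out the common records, reducing to the symmetric difference $\xd_0, \xd_0'$ (each of size $\Delta$, with the difference table $C(\xd_0)-C(\xd_0')$ having zero row- and column-sums), and then argues by induction on $\Delta$: at each step it locates a $2\times 2$ submatrix with signs $+,+$ on the diagonal and $-$ off it, performs one transposition, and shows the $\ell_1^r$ distance drops by $2$ or $4$, so the permutation is assembled as a product of swaps. You instead build $\rho$ in one shot: the excess/deficit bookkeeping within each cell, the row-margin identity $\sum_s(A_{hs}-B_{hs})=0$ to pair excess records with deficit slots in the same row, and the column-margin identity $\sum_h(A_{hv}-B_{hv})=0$ to guarantee that the multiset of values released by the excess records equals the multiset demanded, so the reassignment is realizable as a bijection of the excess records. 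Your fixed-point argument (an excess cell cannot also be a deficit cell) and the cell-by-cell verification that common plus deficit-filling records reproduce $B$ are both sound, and the count of deranged records being exactly $\Delta=\sum_m\Delta_m$ falls out of Lemma~\ref{lemmaEll1HamDistEqual} just as in the paper. What each approach buys: the paper's induction exhibits $\rho$ explicitly as a sequence of pairwise swaps that monotonically decrease the distance, which matches the ``swapping'' intuition and handles the base cases $\Delta=1,2$ separately; your transportation-style argument avoids the induction and the somewhat fiddly submatrix reordering, and makes the ``exactly $n-\Delta$ fixed points'' claim immediate rather than a consequence of the reduction to $\xd_0$. Both rest on the same structural fact, namely that membership in a common universe of $\DSwap$ forces the difference table to have vanishing row and column sums.
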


\begin{proof}
	We have that $\Delta < \infty$ since the invariants $\cswap$ imply that all data sets in $\mathcal D$ have the same number of records. Hence the symmetric difference $\xd \ominus \xd'$ contain $2\Delta$ records, with $\Delta$ records from $\xd$ and $\Delta$ records from $\xd'$. %
	Denote the records in $\xd \ominus \xd'$ that come from $\xd$ by $\xd_0$ and the records from $\xd'$ by $\xd'_0$, so that $\xd \ominus \xd'$ is the disjoint union of $\xd_0$ and $\xd_0'$.  
	
	Without loss of generality, we may assume that there is a single matching category ($\mathcal M = 1$). (If there is more than one matching category, apply the following argument to each category separately.) Then the data set $\xd$ (disregarding the order of the records) can be represented as the matrix $C(\xd) = [n_{hs}^{\xd}]$.
	
	We will need the following result ($*$) whose proof is straightforward: For any $\xd'', \xd''' \in \X$, the matrix $C(\xd'') - C(\xd''') =
	[n_{hs}^{\xd''} - n_{hs}^{\xd'''}]$ has zero row- and column-sums if and only if $\xd'' \in \D_{\cswap}(\xd''')$. Moreover, $\xd'' \in \D_{\cswap}(\xd''')$ implies $\xd_0'' \in \D_{\cswap}(\xd'''_0)$ and $C(\xd'') - C(\xd''') = C(\xd''_0) - C(\xd'''_0)$.
	
	By the above result $(*)$, the marginal counts of $\xd_0$ and $\xd_0'$ agree: $n_h^{\xd_0} = n_h^{\xd_0'}$ and $n_s^{\xd_0} = n_s^{\xd_0'}$ for all $h$ and $s$. But the interior cells disagree: if $n_{hs}^{\xd_0} > 0$ then $n_{hs}^{\xd_0'} = 0$ (and visa versa, swapping $\xd_0$ and $\xd_0'$). Further $C(\xd_0) - C(\xd_0')$ has positive entries that sum to $\Delta$ and negative entries that sum to $-\Delta$, and zero row- and column-sums. %
	
	By construction of $\xd_0$ and $\xd_0'$, if we can permute $\xd_0$ to produce $\xd_0'$, then we can use the same permutation to produce $\xd'$ from $\xd$ (up to reordering of records). Critically, permutations of $\xd_0$ can only derange $\Delta$ records (since there are only $\Delta$ records in $\xd_0$) and indeed must derange $\Delta$ records to produce $\xd_0'$ (since there are no records in common between $\xd_0$ and $\xd_0'$). Therefore we have reduced the problem: we need to find a permutation $\rho$ (regardless of the number of records it fixes) such that $C(\rho(\xd_0)) = C(\xd_0')$.
	
	We construct this permutation $\rho$ by induction on $\Delta =\dHamSr(\xd, \xd') = \dHamSr(\xd_0, \xd_0')$. There are two %
	base cases: The case $\Delta = 1$ is vacuous since $\dHamSr(\xd, \xd') = 1$ implies that $\xd, \xd'$ are not in the same data universe. Why? If $\ell_1^r(\xd, \xd') = 2$ then $C(\xd) - C(\xd')$ only has one or two nonzero cells. But this implies $C(\xd) - C(\xd')$ has a row or column with nonzero sum.
	
	For $\Delta = 2$, the result ($*$) implies that the $2 \times 2$ top-left submatrix of $\bm A = C(\xd_0) - C(\xd'_0)$ looks like
	\[\bm A_{1:2,1:2} = \begin{bmatrix}
		1 & -1 \\ -1 & 1
	\end{bmatrix},\]
	(up to reordering of rows and columns). Therefore (up to reordering of records), $\xd_0$ and $\xd'_0$ differ by a single swap: if $h,h',s,s'$ are indices such that $A_{hs}= A_{h's'} = 1$ then define $\rho$ to be the swap of the records $(k,l)$ and $(k',l')$ in $\xd_0$. We have $C(\rho(\xd_0)) = C(\xd'_0)$ as desired.

	This completes the base cases. Now we will prove the induction step. By ($*$), we can always reorder the rows and columns of $\bm A = C(\xd_0) - C(\xd'_0)$ such that the $2 \times 2$ top-left submatrix looks like
	\[\bm A_{1:2,1:2} = \begin{bmatrix}
		A_{11} & A_{12} \\
		A_{21} & A_{22}		
	\end{bmatrix},\]
	with $A_{11}, A_{22} > 0$ and $A_{21} < 0$. Define $\xd_1$ by swapping the records $(1,1)$ and $(2,2)$ in $\xd_0$. Then the top-left submatrix of $\bm A' = C(\xd_1) - C(\xd'_0)$ looks like
	\[\bm A_{1:2,1:2}' = \begin{bmatrix}
		A_{11}-1 & A_{12}+1 \\
		A_{21}+1 & A_{22}-1		
	\end{bmatrix},\]
	and the rest of $\bm A'$ is the same as $\bm A$. If $A_{12} < 0$, then $\ell_1^r (\xd_1, \xd_0') = \ell_1(\bm A') = \ell_1(\bm A) - 4$. If $A_{12} \ge 0$ then $\ell_1(\bm A') = \ell_1(\bm A) - 2$. In both cases, we can use the induction hypothesis to give us a permutation $\rho_1$ of $\xd_1$, which produces $\xd_0'$ (up to reordering of records). Define the permutation $\rho$ as the composition of $\rho_1$ with the swap of $(1,1)$ and $(2,2)$. Then $C(\rho(\xd_0)) = C(\xd_0')$ as desired.
\end{proof}

\begin{proof}[Proof of Theorem \ref{thmSwapDP}]
	Fix $\xd$ and $\xd'$ in the same data universe $\mathcal D \in \scD_{\cswap}$. Let $\Delta = \dHamSr(\xd, \xd')$.
	We need to prove that $\Mult[\sfPx(C(\bm Z)), \sfPxdash(C(\bm Z))] \le \Delta \epsilon_{\mathcal D}$ or equivalently
	\begin{equation*}
		\sfPx \left[C(\sigma(\xd)) = C(\bm z) \right] \le \exp(\Delta\epsilon_{\mathcal D}) \sfPxdash \left[C(\sigma(\xd')) = C(\bm z)\right],
	\end{equation*}
	for all possible swapped data sets $\bm z$, %
	where the probability is over the random permutation $\sigma$ sampled by the PSA. Since the output $C(\bm Z)$ does not depend on the ordering of the records in the input $\xd$, we may without loss of generality reorder the records in $\xd'$. Hence, there exists a permutation $\rho$, which fixes exactly $n - \Delta$ records such that $\rho(\xd') = \xd$ by Lemma~\ref{lemmaDSteps}. %
	
	Since
	\[\sfPx \left[C(\sigma(\xd)) = C(\bm z) \right] = \sum_{\bm z'} \sfPx \left[\sigma(\xd) = \bm z' \right],\]
	where the sum is over data sets $\bm z'$ with $\dHamSr(\bm z, \bm z') = 0$, it suffices to show
	\begin{equation}\label{eqToProveThm0}
		\sfPx \left[\sigma(\xd) = \bm z \right] \le \exp(\Delta\epsilon_{\mathcal D}) \sfPxdash \left[\sigma(\xd') = \bm z\right],
	\end{equation}
	for all possible swapped data sets $\bm z$.
	
	Recall 
	\[b = \max \{0, n_{m\cdot\cdot} \mid \mathrm{there\ are\ two\ records\ with\ different\ values\ in\ matching\ stratum\ } m\}.\]
	If $b = 0$, then $\xd$ and $\xd'$ only differ by reordering of rows and hence $\epsilon_{\mathcal D} = 0$ satisfies the differential privacy (DP) condition (Equation~\ref{eqToProveThm0}) by Lemma~\ref{lemmaReorderDoesntChangeMult}. Having taken care of the case $b = 0$, from herein we may assume $b \ge 2$. (The case $b = 1$ is not possible.)
	
	If $p \in \{0,1\}$, then $\epsilon_{\D} = \infty$ and the DP condition holds vacuously.

All that remains is to prove Equation~\ref{eqToProveThm0} holds in the case where $0 < p < 1$. %
Since $\xd$ and $\xd'$ themselves differ by the permutation $\rho$, %
we can permute $\xd$ to produce $\bm z$ if and only if we can permute $\xd'$ to produce $\bm z$. %
Thus, either $\sfPx(\sigma(\xd) = \bm z)$ and $\sfPxdash(\sigma(\xd') = \bm z)$ are both zero, or they are both nonzero. We need only focus on the case where both probabilities are nonzero. 

Recall that any permutation $\sigma$ selected with nonzero probability by the PSA can be decomposed as $\sigma = \sigma_{\mathcal M} \circ \ldots \circ \sigma_1$, where $\sigma_m$ will leave any unit $i$ with matching category $M_i \ne m$ fixed. Write $\xd_m$ for the records of $\xd$ with $M_i = m$. Because we perform random selection and permutation independently for each stratum $m$,
\[\frac{\sfPx(\sigma(\xd)= \bm z)}{\sfPxdash(\sigma(\xd')= \bm z)}
= \frac{\prod_{m=1}^{\mathcal M}\sfPx(\sigma_m(\xd_m)= \bm z_m)}{\prod_{m=1}^{\mathcal M}\sfPxdash(\sigma_m(\xd_m')= \bm z_m)}.\]
Thus, to prove Equation~\ref{eqToProveThm0} it suffices to show
\begin{equation}\label{eqToProveThm}
	\frac{\sfPx(\sigma_m(\xd_m)= \bm z_m)}{\sfPxdash(\sigma_m(\xd_m')= \bm z_m)} \le \exp (\Delta_m \epsilon_{\mathcal D}),
\end{equation}
for all $m$ where $\Delta_m = \dHamSr(\xd_m, \xd_m')$. 

Fix some $m$. For notation simplicity, whenever it is not essential to indicate the role of $m$, we will drop the subscript $m$ from herein (until the end of the proof when we need to optimize over $m$). (This is the same as assuming $\Vmat$ is empty.)

Let $G_{\xd \to \bm z} = \{ \text{permutation } g : g(\xd) = \bm z\}$. We use the notation $g$ instead of $\sigma$ to emphasize that $g$ is not random, while the permutation $\sigma$ chosen by Algorithm \ref{algoPermute} is random. There is a bijection between $G_{\xd \to \bm z}$ and $G_{\xd' \to \bm z}$ given by $g \mapsto g \circ \rho$. %
Since 
\[\sfPx (\sigma (\xd) = \bm z) = \sum_{g \in G_{\xd \to \bm z}} \sfPx (\sigma = g),\]
we will prove (\ref{eqToProveThm}) by showing
\[ \sfPx (\sigma = g) \le \exp(\Delta \epsilon_{\mathcal D}) \sfPxdash (\sigma = g \circ \rho),\]
for all $g \in G_{\xd \to \bm z}$. (Note that this may not obtain the best possible bound for specific $\xd$ and $\xd'$, but it is mathematically easier to bound ${\sfPx (\sigma = g)}/{\sfPxdash (\sigma = g \circ \rho)}$ than bound the desired ratio 
\[\frac{\sum\limits_{g \in G_{\xd \to \bm z}} \sfPx (\sigma = g)}{\sum\limits_{g \in G_{\xd \to \bm z}}\sfPxdash (\sigma = g \circ \rho)}\]
directly. Yet in the case where $G_{\xd \to \bm z}$ and $G_{\xd' \to \bm z}$ are singletons, this approach gives tight bounds.)

Let $k_{g}$ be the number of records (in category $m$) that were deranged (i.e., not fixed) by $g$ and let $d(k)$ denote the $k$-th derangement number (i.e., the number of derangements of size $k$):
\begin{alignat}{2}
	d(k) &= k! \sum_{j=0}^k \frac{(-1)^j}{j!} \notag\\
	&= k d(k-1) + (-1)^k \qquad &\text{ for } k \ge 0.\label{eqRecurrence}
\end{alignat}

Fix $g \in G_{\xd \to \bm z}$ and $g' = g \circ \rho$. We now compute $\sfPx (\sigma = g)$. The permutation $g$ is sampled in Algorithm \ref{algoPermute} via a two-step procedure. Firstly records are independently selected for derangement with probability $p$. Suppose that $g$ deranges records $\{i_1,\ldots,i_{k_g}\}$. Since we disallow the possibility of selecting only one record,
\[ \sfPx (\text{the selected records are } \{i_1,\ldots,i_{k_g}\}) = \frac{p^{k_g} (1-p)^{n - k_g}}{1-\sfPx(\text{exactly 1 record selected})}.\]
Secondly we sample uniformly from the set of all derangements of $k_g$ records. Hence, we sample $g$ with probability $[d(k_g)]^{-1}$and therefore,
\begin{equation*}%
	\sfPx (\sigma = g) = \frac{p^{k_g} (1-p)^{n - k_g}}{[1-\sfPx(\text{exactly 1 record selected})] d(k_g)}.
\end{equation*}
This gives
\begin{equation}\label{eqProbRatio}
	\frac{\sfPx (\sigma = g)}{\sfPxdash (\sigma = g')} = o^\delta \frac{d(k_g-\delta)}{d(k_g)},
\end{equation}
where $o =p/(1-p)$ and $\delta = k_g - k_{g'}$. 

Our aim is now to bound the right hand side of (\ref{eqProbRatio}) by $\exp(\Delta\epsilon_{\mathcal D})$. Since $g'$ and $g$ differ only by the permutation $\rho$ (which fixes $n - \Delta$ records), we must have $k_g - \Delta \le k_{g'} \le k_g + \Delta$. Therefore, there are at most $2\Delta+1$ possible cases: 
\begin{align*}
	\delta \in S &= \left\{ \delta \in \mathbb Z \mid  -\Delta \le \delta \le \Delta \text{ and } \big( k_g - \delta = 0 \text{ or } 2 \le k_g - \delta \le n\big)\right\} \\
	&= \left\{ \delta \in \mathbb Z \mid  \max(-\Delta, k_{g}-n) \le \delta \le \min (\Delta, k_g) \text{ and } \delta \ne k_g - 1 \right\}.
\end{align*}

Suppose $0 < p \le 0.5$. Since $d(k)$ is nondecreasing (except at $k = 1$, which is not realizable by $g$ or $g'$) and $(1-p)/p \ge 1$, the right hand side of (\ref{eqProbRatio}) is maximized when $k_{g_m'} = n_m$ and $k_{g_m} = n_m- \Delta_m$ (i.e., $\delta = - \Delta_m$), in which case 
\begin{align}
	\frac{\sfPx (\sigma = g)}{\sfPxdash (\sigma = g')} &= o^{-\Delta} \prod_{m=1}^{\mathcal M} \frac{d(n_m)}{d(n_m-\Delta_m)} \notag\\
	&\le o^{-\Delta} \prod_{m=1}^{\mathcal M} (n_m+1)^{\Delta_m} \notag\\
	&\le o^{-\Delta} (b+1)^{\Delta} \notag\\
	&= \exp (\Delta \epsilon_{\mathcal D}),\label{eqProofMainResultPLessOneHalf}
\end{align}
for $\epsilon_{\mathcal D} = \ln (b+1) - \ln o$. (The second line uses Lemma \ref{lemmaBoundDisarrangement}, which is given below this proof.)

Now suppose $0.5 < p < 1$. In the case of $\delta_m = \Delta_m$, the ratio (\ref{eqProbRatio}) is maximized at $o^{\Delta_m}$ when $k_{g_m} = \Delta_m = 2$. Moreover, $o^{\Delta_m}$ also dominates $o^{\delta_m} \frac{d(k_{g_m}-\delta_m)}{d(k_{g_m})}$ for all $0 \le \delta_m \le \Delta_m$ and all possible $k_{g_m}$. Thus,
\begin{align*}
	\frac{\sfPx (\sigma = g)}{\sfPxdash (\sigma = g')} &\le \prod_{m=1}^{\mathcal M} \max \left\{ o^{\Delta_m}, o^{\delta_m} \frac{d(k_{g_m} - \delta_m)}{d(k_{g_m})} : \delta_m \in S_m \text{ and } \delta_m < 0 \right\} \\
	&\le \prod_{m=1}^{\mathcal M} \max \left\{ o^{\Delta_m}, o^{\delta_m} (k_{g_m}-\delta_m + 1)^{-\delta_m} : \delta_m \in S_m \text{ and } \delta_m < 0 \right\} \\
	&\le \prod_{m=1}^{\mathcal M} \max \left\{ o^{\Delta_m}, o^{-\delta_m} (n_m + 1)^{\delta_m} : 0 < \delta_m \le \Delta_m \right\} \\
	&\le \max \left\{ o^{\Delta}, o^{-\delta} (b + 1)^{\delta} : 0 < \delta \le \Delta \right\}.
\end{align*}
If $o^{-1}(b+1) \ge 1$ then $o^{-\delta}(b + 1)^{\delta}$ is maximised at $\delta = \Delta$. Otherwise $o^{-\delta}(b + 1)^{\delta} < 1 < o^{\Delta}$. Hence
\begin{equation}\label{eqProofMainResultPGreaterOneHalf}
	\frac{\sfPx (\sigma = g)}{\sfPxdash (\sigma = g')} \le \exp (\Delta \epsilon_{\mathcal D}),
\end{equation}
for $\epsilon_{\mathcal D} = \max \big\{\hspace{-0.15em} \ln o, \ln (b+1) - \ln o \big\}$. Combining Equations~\ref{eqProofMainResultPLessOneHalf} and~\ref{eqProofMainResultPGreaterOneHalf}, we have 
\begin{equation}\label{eqProofMainResultNearlyThere}
	\epsilon_{\D} = \begin{cases}
		\ln (b+1) - \ln o &\mathrm{if\ } 0 < p \le 0.5 \mathrm{\ and\ } b > 0,\\
		\max \big\{\hspace{-0.15em} \ln o, \ln (b+1) - \ln o \big\} &\mathrm{if\ } 0.5 < p < 1 \mathrm{\ and\ } b > 0.
	\end{cases}
\end{equation}
When $b > 0$, we have $b \ge 2$ and hence also $\max \big\{\hspace{-0.15em} \ln o, \ln (b+1) - \ln o \big\} = \ln (b+1) - \ln o$ for $0.5 < p \le \sqrt{b+1}/(\sqrt{b+1}+1)$. Thus, Equation~\ref{eqProofMainResultNearlyThere} simplifies to 
\[\epsilon_{\D} = \begin{cases}
	\ln (b+1) - \ln o &\mathrm{if\ } 0 < p \le \frac{\sqrt{b+1}}{\sqrt{b+1}+1} \mathrm{\ and\ } b > 0,\\
	\ln o &\mathrm{if\ } \frac{\sqrt{b+1}}{\sqrt{b+1}+1} < p < 1 \mathrm{\ and\ } b > 0.
\end{cases}\]
as required.
\end{proof}

\begin{lemma}\label{lemmaBoundDisarrangement}
For any $k \in \mathbb N$ and any $a \in \mathbb N$ satisfying $0 \le a \le k$ and $a \ne k - 1$,
\[\frac{d(k)}{d(k-a)} \le (k+1)^a,\]
where $d(k)$ is the number of derangements of $k$ elements (see Equation~\ref{eqRecurrence}).
\end{lemma}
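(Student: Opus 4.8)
The plan is to write the ratio as a telescoping product of consecutive derangement-number ratios and to bound each factor separately using the recurrence~\eqref{eqRecurrence}. First I would record the elementary facts that $d(0) = 1$, $d(1) = 0$, and $d(j) \ge 1$ for every $j \ge 2$; the last follows from~\eqref{eqRecurrence} by a one-line induction, since $d(j) = j\,d(j-1) + (-1)^j \ge j \cdot 1 - 1 \ge 1$ once $d(j-1) \ge 1$. The role of the hypothesis $a \ne k-1$ is precisely to guarantee that the denominator $d(k-a)$ is not the single vanishing value $d(1)$, so that the left-hand side is well defined; this is the crux of why that hypothesis appears.

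For the main estimate, assuming $k - a \ge 2$, I would telescope
\[
\frac{d(k)}{d(k-a)} = \prod_{j=k-a+1}^{k} \frac{d(j)}{d(j-1)},
\]
where now every index satisfies $j \ge k-a+1 \ge 3$, so each denominator $d(j-1)$ has $j-1 \ge 2$ and is therefore at least $1$. Rewriting each factor via~\eqref{eqRecurrence} gives $d(j)/d(j-1) = j + (-1)^j/d(j-1)$, and since $d(j-1) \ge 1$ this is at most $j+1 \le k+1$ (the odd-$j$ factors are in fact strictly below $j$). Multiplying the $a$ factors together yields the claimed bound $(k+1)^a$; the boundary subcase $a = 0$ is the empty product, which equals $1 = (k+1)^0$.

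The only remaining case is $k-a = 0$, i.e. $a = k$ (the value $k-a=1$ being excluded by hypothesis). Here the desired inequality reads $d(k) \le (k+1)^k$, which I would obtain immediately from the closed form in~\eqref{eqRecurrence}: since $\sum_{j=0}^k (-1)^j/j! \in [0,1]$ we have $d(k) = k!\sum_{j=0}^k (-1)^j/j! \le k!$, and $k! \le k^k \le (k+1)^k$. (Alternatively, one could telescope from $j=3$ to $k$ and use $d(0)=d(2)=1$ to bypass the vanishing term $d(1)$ entirely, but the factorial bound is shorter.)

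The main obstacle --- and essentially the only subtlety --- is the vanishing of $d(1)$: a naive telescoping starting from $d(k-a)=d(0)$ would pass through the factor $d(2)/d(1) = 1/0$, so the argument must be organized to skip this term, which is exactly what the case split on the value of $k-a$ accomplishes. Everything else reduces to the routine per-factor estimate $d(j)/d(j-1) \le j+1$.
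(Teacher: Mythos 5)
Your proof is correct and is essentially the paper's argument: the paper proves the same per-factor bound $d(j)/d(j-1) \le j+1$ from the recurrence~\eqref{eqRecurrence} and chains it via induction on $k$, which is just your telescoping product written one step at a time. The only cosmetic difference is the endpoint case $a = k$, which you dispatch with $d(k) \le k! \le (k+1)^k$ while the paper's induction bottoms out at the base case $d(2)/d(0)$ without ever forming the problematic ratio $d(2)/d(1)$ --- both handle the vanishing of $d(1)$ correctly.
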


\begin{proof}
We use induction on $k$. The base cases $k = 0,1,2$ are straightforward to verify since $d(0) = d(2) = 1$ and $d(1) = 0$. For the induction step, we can assume $k \ge 3$ so that $d(k-1) \ge 1$ and hence 
\begin{align*}
	\frac{d(k)}{d(k-a)} &= \frac{d(k)}{d(k-1)} \frac{d(k-1)}{d(k-a)} \\
	&\le \frac{d(k)}{d(k-1)} k^{a-1}
\end{align*}
by the induction hypothesis. The result then follows by the identity given in Equation~\ref{eqRecurrence}:
\begin{align*}
	\frac{d(k)}{d(k-1)} &= \frac{k d(k-1) + (-1)^k}{d(k-1)} \\
	&\le k+1.\qedhere
\end{align*}
\end{proof}

\section{Optimality of Theorem~\ref{thmSwapDP}}\label{appendixOptimality}

Throughout this appendix we make the following assumptions. Following Proposition~\ref{propInvariantsOfSwap}, we may assume there is a single matching variable, a single nonmatching holding variable, and a single swapping variable. Let $\mathscr M$, $\mathscr H$, and $\mathscr S$ be the domains for the matching variable, the nonmatching holding variable, and the swapping variable, respectively. Define $\Xprod = \bigcup_{k=1}^\infty (\mathscr M \times \mathscr H \times \mathscr S)^k$. (Note $\Xcef \subset \Xprod$, but we cannot assume the reverse inclusion.)

Recall that $b = \max \{0, n_{m\cdot\cdot} \mid \mathrm{there}\allowbreak\ \mathrm{are}\allowbreak\ \mathrm{two}\allowbreak\ \mathrm{records}\allowbreak\ \mathrm{with}\allowbreak\ \mathrm{different}\allowbreak\ \mathrm{values}\allowbreak\ \mathrm{in}\allowbreak\ \mathrm{matching}\allowbreak\ \mathrm{stratum}\allowbreak\ m \in \mathscr M\}$; that $o = p/(1-p)$; and that $d(k)$ denotes the $k$-th derangement number (see Equation~\ref{eqRecurrence}).

\begin{theorem}\label{thmTightSwapBudget}
Assume that $\abs{\mathscr H}, \abs{\mathscr S} \ge 2$ (so that $\D \in \DSwap$ are not all singletons and swapping is not completely vacuous).

Suppose that the PSA satisfies \SpecSwapProdGen. Then:
\begin{enumerate}[label=(\Alph*)]
	\item If $p \in \{0,1\}$, then there exists a universe $\D_0 \in \DSwap$ such that $\epsilon_{\D_0} = \infty$. \label{thmTightSwapBudgetStatementA}
	\item If $0 < p < 1$, then there exists a universe $\D_0 \in \DSwap$ such that $\epsilon_{\D_0} \ge \ln o$. %
	\label{thmTightSwapBudgetStatementB}
	\item If $0 < p < 1$, then there exists a universe $\D_0 \in \DSwap$ such that $\epsilon_{\D_0} \ge 0.5 \ln [d(b)/d(b-2)] - \ln (o)$. %
	\label{thmTightSwapBudgetStatementC}
\end{enumerate}
\end{theorem}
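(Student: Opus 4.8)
The plan is to prove the lower bound by exhibiting one universe $\D_0 \in \DSwap$, two datasets in it, and one output table that together force $\epsilon_{\D_0}$ to be large. As in the proof of Theorem~\ref{thmSwapDP}, I work with a single matching stratum (the multi-stratum case follows by running the construction in one stratum and leaving the others untouched). I take $\D_0$ to be the universe generated by a dataset $\xd$ whose single stratum contains $b$ records with \emph{all distinct} holding and swapping values: concretely record $i$ takes value $(h,s) = (i,i)$ for $i = 1, \ldots, b$. I then let $\xd'$ be obtained from $\xd$ by interchanging the swapping values of records $1$ and $2$, so that $\xd'$ has records $(1,2),(2,1),(3,3),\ldots,(b,b)$. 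This interchange preserves the margins $n_{m\cdot s}$ and $n_{mh\cdot}$, so $\xd' \in \D_0$; and since $\xd, \xd'$ differ in exactly the two records $1,2$, we get $\dHamSr(\xd,\xd') = 2$, with $\rho = (1\,2)$ the transposition realizing $\rho(\xd') = \xd$ from Lemma~\ref{lemmaDSteps}.

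Next I choose the output. Taking $b \ge 4$ (so that $b-2 \ge 2$ is a realizable derangement size; the cases $b=2$ and the degenerate $b=3$ are handled separately), I let $g$ be any derangement of the records $\{3,\ldots,b\}$ that fixes records $1,2$, and I set $\bm z = g(\xd)$. The crucial structural fact is that, because every holding and swapping value of $\xd$ (and of $\xd'$) is distinct, the output table $C(\bm z)$ determines its generating permutation uniquely. Hence both $G_{\xd \to \bm z}$ and $G_{\xd' \to \bm z}$ are singletons, equal to $\{g\}$ and $\{g\circ\rho\}$ respectively under the bijection of the proof of Theorem~\ref{thmSwapDP}. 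Writing $g' = g\circ\rho$, a direct record count gives $k_g = b-2$ (only $3,\ldots,b$ are deranged) and $k_{g'} = b$ (the transposition $\rho$ additionally deranges $1,2$), so that $\delta = k_g - k_{g'} = -2$.

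Because the pre-images are singletons, $\sfPx(C(\sigma(\xd)) = C(\bm z)) = \sfPx(\sigma = g)$ and $\sfPxdash(C(\sigma(\xd')) = C(\bm z)) = \sfPxdash(\sigma = g')$, and the normalizing constant $1 - \sfPx(\text{exactly one record selected})$ depends only on $b$ and $p$, so it cancels in the ratio. Formula~\eqref{eqProbRatio} then yields exactly
\[
\frac{\sfPx\big(C(\sigma(\xd)) = C(\bm z)\big)}{\sfPxdash\big(C(\sigma(\xd')) = C(\bm z)\big)} = o^{\delta}\,\frac{d(k_g - \delta)}{d(k_g)} = o^{-2}\,\frac{d(b)}{d(b-2)}.
\]
Choosing the event $\{C(\bm Z) = C(\bm z)\}$ in the supremum defining $\Mult$ gives $\Mult[\sfPx(C(\bm Z)), \sfPxdash(C(\bm Z))] \ge \ln[o^{-2} d(b)/d(b-2)]$, and since the PSA satisfies \SpecSwapProdGen\ this is at most $\epsilon_{\D_0}\,\dHamSr(\xd,\xd') = 2\epsilon_{\D_0}$. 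Rearranging gives $\epsilon_{\D_0} \ge 0.5\ln[d(b)/d(b-2)] - \ln o$.

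The main obstacle, and the reason for choosing the stratum with all distinct values, is controlling the degeneracy of the pre-images $G_{\xd\to\bm z}$ and $G_{\xd'\to\bm z}$. If the stratum contained repeated holding or swapping values, many permutations with \emph{differing} numbers of deranged records would produce the same output table; the probability mass would then spread across several terms $p^{k}(1-p)^{b-k}/d(k)$ with different $k$, and the clean single-term ratio $o^{-2}d(b)/d(b-2)$ would degrade by a combinatorial factor (of order $a!(b-a)!$ when only two swapping values are available, as one checks on a small example). Forcing all values in the stratum to be distinct collapses each pre-image to one permutation and is exactly what makes the single-permutation ratio of~\eqref{eqProbRatio} govern the full event probability. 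Verifying this singleton property rigorously, confirming that the normalization genuinely cancels, and disposing of the small cases $b \in \{2,3\}$ (where $d(b-2)$ is $1$ or $0$) are the technical heart of the argument.
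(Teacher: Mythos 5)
Your proposal addresses only conclusion~\ref{thmTightSwapBudgetStatementC}: conclusions~\ref{thmTightSwapBudgetStatementA} and~\ref{thmTightSwapBudgetStatementB} are never argued, so as a proof of the theorem as stated it is incomplete. The paper obtains~\ref{thmTightSwapBudgetStatementA} from the observations that $p=0$ forces $\sigma=\id$ (so any table-distinct pair in a common universe makes the denominator of the likelihood ratio zero) and that $p=1$ forbids fixed points (so a dataset with a unique attribute pattern has zero probability of being output unchanged); it obtains~\ref{thmTightSwapBudgetStatementB} from a stratum of attribute-distinct records in which $G_{\xd\to\xd}=\{\id\}$, yielding the ratio $o^{-2}$ and hence $\epsilon_{\D_0}\ge \ln o$ (Propositions~\ref{propTightSwapBudgetpZero}, \ref{propTightSwapBudgetpOne} and~\ref{propTightSwapBudgetLnO}). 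These fit naturally into your framework, but you still need to write them out.

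For part~\ref{thmTightSwapBudgetStatementC} your construction is correct where it applies: the singleton pre-images, the counts $k_g=b-2$ versus $k_{g'}=b$, the cancellation of the normalizing constant, and the resulting ratio $o^{-2}d(b)/d(b-2)$ all check out against equation~\eqref{eqProbRatio}. The genuine divergence from the paper is in how the degeneracy of $G_{\xd\to\bm z}$ is controlled, and it costs you generality. You force the pre-images to be singletons by making all $b$ records in the stratum attribute-distinct, which requires $\abs{\mathscr H},\abs{\mathscr S}\ge b$. Proposition~\ref{propTightSwapBudgetLnDb1} needs only $\abs{\mathscr H},\abs{\mathscr S}\ge 4$: it tolerates large pre-images but chooses $\bm z$ (via Lemma~\ref{lemmaDiverseVacuous}) at Hamming distance exactly $b$ from $\xd'$ and $b-2$ from $\xd$, so that Lemma~\ref{lemmaSwapDist} forces $k_{g'}=b$ for \emph{every} $g'\in G_{\xd'\to\bm z}$, while condition~\eqref{eqDiv2Assume} (two records carrying unique $h$ and $s$ values) forces every $g\in G_{\xd\to\bm z}$ to fix those two records and hence have $k_g=b-2$. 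Since all terms in the numerator then share one derangement count and all terms in the denominator share another, the ratio of sums collapses to the single-term ratio without any injectivity. The practical consequence of your restriction is that the lower bound is certified only for universes whose largest stratum size $b$ is at most the attribute-domain sizes -- exactly the wrong regime for the optimality comparison in Theorem~\ref{thmTightSwapBudget2}, where $b$ can be in the millions while $\mathscr H$ and $\mathscr S$ are small. If you replace ``all records distinct'' by the weaker conditions \eqref{eqDivAssume}--\eqref{eqDiv2Assume} and add the common-derangement-count observation, your argument becomes the paper's.
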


The above values of $\epsilon_{\mathcal D_0}$ describe lower bounds on the protection loss budget (PLB) of the Permutation Swapping Algorithm (PSA); any differential privacy (DP) specification for the PSA must have a PLB at least equal to these values. Comparing these lower bounds to the PLB $\epsilon_{\D}^{(1)}$ given in Theorem~\ref{thmSwapDP} shows that $\epsilon_{\D}^{(1)}$ is optimal in the weak sense that there exists universes $\D_0$ for which $\epsilon_{\D_0}^{(1)}$ is arbitrarily close to the best possible budget $\epsinf_{\D_0}$.

\begin{theorem}\label{thmTightSwapBudget2}
Assume $\abs{\mathscr H}, \abs{\mathscr S} \ge 4$. For each $\D_0 \in \DSwap$, define
\[\epsilon_{\D_0}^{(\inf)} = \inf \{ \epsilon_{\D_0} \mid \mathrm{the\ PSA\ satisfies\ } \epsilon_{\D}\text{-DP}(\Xprod, \DSwap, \dHamSr, \Mult)\}.\]
(That is, $\epsinf_{\D}$ is the pointwise infimum over all PLBs $\epsilon_{\D}$ satisfied by the PSA.) Then $\epsinf_{\D}$ is the smallest budget under which the PSA satisfies the DP flavor \FlavSwapProdGen. 

Let $\epsilon_{\D}^{(1)}$ be the PLB given in Theorem~\ref{thmSwapDP}. There exists $\D_0 \in \DSwap$ such that 
\[\epsilon_{\D_0}^{(1)} - \epsinf_{\D_0} \le \begin{cases}
	f(b) &\mathrm{if\ } 0 < p < \frac{\sqrt{b+1}}{\sqrt{b+1}+1} \mathrm{\ and\ } b > 0, \\
	0 &\mathrm{otherwise,}
\end{cases}\]
where 
\[f(b) = \frac{1}{2} \ln \left[ \frac{(b+1)^2}{b(b-1)} \frac{1+ \tfrac{e}{2(b-2)!}}{1-\tfrac{e}{2b!}}\right],\]
is a positive, monotonically decreasing function for $b \ge 2$ that converges to zero, and satisfies, for example, $f(b) \le 0.148$ for all $b \ge 10$.
\end{theorem}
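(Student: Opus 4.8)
Fix a universe $\D \in \DSwap$. The PSA satisfies $\epsilon_{\D}$-DP on $\D$ exactly when $\Mult\bigl[\sfPx(C(\bm Z)), \sfPxdash(C(\bm Z))\bigr] \le \epsilon_{\D}\,\dHamSr(\xd, \xd')$ for every pair $\xd, \xd' \in \D$; because this inequality is non-strict, it holds for all pairs if and only if $\epsilon_{\D} \ge R_{\D}$, where
\[
R_{\D} := \sup_{\substack{\xd, \xd' \in \D \\ \dHamSr(\xd, \xd') > 0}} \frac{\Mult\bigl[\sfPx(C(\bm Z)), \sfPxdash(C(\bm Z))\bigr]}{\dHamSr(\xd, \xd')}.
\]
Since the DP flavor \FlavSwapProdGen\ couples no two distinct universes, the set of budgets under which the PSA is DP is precisely $\{\epsilon : \epsilon_{\D} \ge R_{\D} \text{ for all } \D\}$. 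Its pointwise infimum is the function $\D \mapsto R_{\D}$, which itself belongs to this set; hence $\epsinf_{\D} = R_{\D}$ is attained and is the smallest valid budget. I would record this first, as it licenses treating $\epsinf$ as a genuine budget -- and therefore as an admissible input to Theorem~\ref{thmTightSwapBudget} -- in the remainder.

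\textbf{Step 2: reduction to one universe, and the degenerate regimes.} Theorem~\ref{thmSwapDP} shows $\epsilon^{(1)}$ is a valid budget, so $\epsinf_{\D} \le \epsilon^{(1)}_{\D}$ for every $\D$; the difference is thus always nonnegative and only an upper bound is at issue. For the cases gathered under ``otherwise'', I would argue the gap is zero by exhibiting a universe $\D_0$ (of the prescribed $b$) where $\epsinf_{\D_0} = \epsilon^{(1)}_{\D_0}$. When $b = 0$, $\epsilon^{(1)}_{\D} = 0$ by \eqref{eqMainTheorem}, forcing $\epsinf_{\D} = 0$. When $p \ge \sqrt{b+1}/(\sqrt{b+1}+1)$, we have $\epsilon^{(1)}_{\D_0} = \ln o$, while statement~\ref{thmTightSwapBudgetStatementB} of Theorem~\ref{thmTightSwapBudget}, applied to the valid budget $\epsinf$, supplies a universe with $\epsinf_{\D_0} \ge \ln o$; together these give $\epsinf_{\D_0} = \epsilon^{(1)}_{\D_0}$.

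\textbf{Step 3: the main regime.} Suppose $0 < p < \sqrt{b+1}/(\sqrt{b+1}+1)$ and $b > 0$, so $\epsilon^{(1)}_{\D} = \ln(b+1) - \ln o$ by \eqref{eqMainTheorem}. Applying statement~\ref{thmTightSwapBudgetStatementC} of Theorem~\ref{thmTightSwapBudget} to the valid budget $\epsinf$ produces a universe $\D_0$ with $b = b(\D_0)$ for which $\epsinf_{\D_0} \ge \tfrac{1}{2}\ln\bigl[d(b)/d(b-2)\bigr] - \ln o$. Subtracting,
\[
\epsilon^{(1)}_{\D_0} - \epsinf_{\D_0} \le \bigl[\ln(b+1) - \ln o\bigr] - \Bigl[\tfrac{1}{2}\ln\tfrac{d(b)}{d(b-2)} - \ln o\Bigr] = \tfrac{1}{2}\ln\frac{(b+1)^2\,d(b-2)}{d(b)}.
\]
It then remains to bound the derangement ratio. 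Writing $d(k) = k!/e + \eta_k$ and using the alternating-tail estimate $\abs{\eta_k} \le 1/2$ for $k \ge 1$ (with the single endpoint $b = 2$, where $d(b-2) = d(0)$, checked by hand), one gets $d(b-2)/d(b) = [b(b-1)]^{-1}(1 + e\eta_{b-2}/(b-2)!)(1 + e\eta_{b}/b!)^{-1}$, bounded above by $[b(b-1)]^{-1}(1 + e/(2(b-2)!))(1 - e/(2b!))^{-1}$. Substituting reproduces exactly $f(b)$, and the claimed properties of $f$ (positivity and monotone decrease to $0$ for $b \ge 2$, and $f(b) \le 0.148$ for $b \ge 10$) follow by routine estimation of these two factors.

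\textbf{Main obstacle.} All the real content sits in the tight lower bound at $\D_0$, i.e. statement~\ref{thmTightSwapBudgetStatementC} of Theorem~\ref{thmTightSwapBudget}. Granting it, the only remaining care is bookkeeping: ensuring the universe-dependent $b = b(\D_0)$ in that lower bound is the same $b$ appearing in $\epsilon^{(1)}_{\D_0}$ and in the regime threshold $\sqrt{b+1}/(\sqrt{b+1}+1)$, and applying the derangement estimate in the correct direction (upper-bounding the numerator factor, lower-bounding the denominator). Were that lower bound not available, the crux would be to build the extremal $\D_0$ -- a single matching stratum of size $b$ with inputs $\xd, \xd'$ at Hamming distance $2$ and an output $\bm z$ whose preimage sets are singletons with derangement counts $b-2$ and $b$ -- so that the ratio $\sfPx(\cdot)/\sfPxdash(\cdot)$ equals $o^{-2}d(b)/d(b-2)$ rather than merely being bounded by it. This is exactly where the strengthened hypothesis $\abs{\mathscr H}, \abs{\mathscr S} \ge 4$ enters: it provides enough distinct cell values to realise a full derangement of all $b$ records by a \emph{unique} permutation, which is what makes the bound sharp.
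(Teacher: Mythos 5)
Your proposal is correct and follows essentially the same route as the paper's proof: the multiverse decouples the universes so the pointwise infimum is itself a valid budget, the lower bounds of Theorem~\ref{thmTightSwapBudget} (your statement~\ref{thmTightSwapBudgetStatementC}, the paper's Proposition~\ref{propTightSwapBudgetLnDb1}) give $\epsinf_{\D_0} \ge \tfrac{1}{2}\ln[d(b)/d(b-2)] - \ln o$, and the estimate $d(k) = k!/e + \eta_k$ with $\abs{\eta_k}\le 1/2$ yields $f(b)$ exactly as in the paper (your explicit hand-check at $b=2$, where $d(0)=1$ falls outside the floor identity, is if anything more careful than the paper's). The only loose end is the degenerate case $p \in \{0,1\}$ with $b>0$ in the ``otherwise'' branch, which your Step~2 does not explicitly cover but which follows immediately from statement~\ref{thmTightSwapBudgetStatementA}.
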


We emphasize that this is a weak form of optimality. A budget $\epsilon_{\D}$ can be tight at the level of the output (in the sense that $\frac{\sfPx(C(\sigma)(\xd) = \bm z)}{\sfPxdash(C(\sigma)(\xd') = \bm z)} = \exp[\epsilon_\D \dHamSr(\xd, \xd')]$ for all $\xd, \xd' \in \D$, all $\bm z$ and all $\D$); or at the level of the data (in the sense that $\Mult(\sfPx, \sfPxdash) = \epsilon_{\D} \dHamSr(\xd, \xd')$ for all $\xd, \xd' \in \D$ and all $\D$); or at the level of the universe (in the sense that $\Mult(\sfPx, \sfPxdash) = \epsilon_{\D} \dHamSr(\xd, \xd')$ for some $\xd, \xd' \in \D$, and all $\D \in \DSwap$). The optimality of Theorem~\ref{thmSwapDP} is weaker than any of these notions; all we have shown is that, for all $\delta > 0$, there exists some $\D_0 \in \DSwap$ and some $\xd, \xd' \in \D_0$ such that $\epsilon^{(1)}_{\D_0} - \Mult(\sfPx, \sfPxdash) < \delta$. Part of the suboptimality arises from the fact that $\epsilon^{(1)}_{\D}$ is a function only of $p$ and $b$. We could perform a tighter analysis of the PSA by allowing $\epsilon_{\D}$ to depend on $\D$ in more complex ways (i.e., by allowing $\epsilon_{\D}$ to be a function of other properties of $\D$, not just $b$).

\begin{proof}[Proof of Theorem~\ref{thmTightSwapBudget}]
Result~\ref{thmTightSwapBudgetStatementA} follows from Propositions~\ref{propTightSwapBudgetpZero} and~\ref{propTightSwapBudgetpOne}. Result~\ref{thmTightSwapBudgetStatementB} follows from Proposition~\ref{propTightSwapBudgetLnO}. Result~\ref{thmTightSwapBudgetStatementC} follows from Propositions~\ref{propTightSwapBudgetLnDb1} and~\ref{propTightSwapBudgetLnDb1Point5}.%
\end{proof}

\begin{proof}[Proof of Theorem~\ref{thmTightSwapBudget2}]
Because the multiverse $\DSwap$ partitions $\Xprod$, the DP constraint imposed on each universe $\D$ is independent of the constraint on another universe $\D' \ne \D$. Hence the PSA does indeed satisfy $\epsinf_{\D}$-DP$(\Xprod,\allowbreak \DSwap,\allowbreak \dHamSr,\allowbreak \Mult)$. Clearly, $\epsinf_\D \le \epsilon_{\D}$ holds for all $\D$ and all budgets $\epsilon_{\D}$ for which the PSA satisfies \SpecSwapProdGen. Hence $\epsinf_{\D}$ is the smallest budget for which the PSA satisfies the DP flavor \FlavSwapProdGen.

Moving on to the second half of the theorem, we have by Theorem~\ref{thmTightSwapBudget} that 
\[\epsilon_{\D_0}^{(1)} - \epsinf_{\D_0} = 0,\]
if $b = 0$ or $p = 0$ or $\frac{\sqrt{b+1}}{\sqrt{b+1}+1} \le p \le 1$. On the other hand, if $0 < p < \frac{\sqrt{b+1}}{\sqrt{b+1}+1}$ and $b > 0$, then
\begin{align*}
	\epsilon_{\D_0}^{(1)} - \epsinf_{\D_0} &\le \ln(b+1) - \frac{1}{2} \ln [d(b)/d(b-2)] \\
	&= \frac{1}{2} \ln \left[ (b+1)^2 \frac{\floor{\tfrac{(b-2)!}{e}+\tfrac{1}{2}}}{\floor{\tfrac{b!}{e}+\tfrac{1}{2}}} \right] \\
	&\le \frac{1}{2} \ln \left[ \frac{(b+1)^2}{b(b-1)} \frac{1+ \tfrac{e}{2(b-2)!}}{1-\tfrac{e}{2b!}}\right] \\
	&= f(b),
\end{align*}
where the first line follows by Proposition~\ref{propTightSwapBudgetLnDb1} and the second line by the identity $d(k) = \floor{\tfrac{k!}{e}+\tfrac{1}{2}}$. The second term inside the logarithm 
\[\frac{1+ \tfrac{e}{2(b-2)!}}{1-\tfrac{e}{2b!}}\]
has a numerator that decreases with $b$ and a denominator that increases. Hence this term is monotonically decreasing. The first term inside the logarithm $\frac{(b+1)^2}{b(b-1)}$ has negative first derivative and hence is also decreasing. Therefore, $f(b)$ is monotonically decreasing. Moveover, $f(b)$ is positive and converges to zero because both terms inside the logarithm are greater than one and converge to one.
\end{proof}

\begin{proposition}\label{propTightSwapBudgetpZero}
Suppose $p = 0$ and $\abs{\mathscr H}, \abs{\mathscr S} \ge 2$. %
Then there exists $\D_0 \in \DSwap$ such that $C(\xd) \ne C(\xd')$ for some $\xd, \xd' \in \D_0$. Hence, 
the PSA does not satisfy \SpecSwapProdGen\ for any finite $\epsilon_{\mathcal D_0}$ and any such $\D_0$.   
\end{proposition}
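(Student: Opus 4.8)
The plan is to produce an explicit witness pair $\xd, \xd' \in \Xprod$ lying in a single universe $\D_0 \in \DSwap$ but with distinct output tables $C(\xd) \ne C(\xd')$; the statement about the privacy-loss budget then follows at once from the fact that setting $p = 0$ makes the PSA deterministic. Concretely, I would fix one matching value $m$ and, invoking the hypothesis $\abs{\mathscr H}, \abs{\mathscr S} \ge 2$, pick distinct holding values $h_1, h_2 \in \mathscr H$ and distinct swapping values $s_1, s_2 \in \mathscr S$. Let $\xd$ be the two-record dataset $[(m, h_1, s_1), (m, h_2, s_2)]$ and $\xd'$ the two-record dataset $[(m, h_1, s_2), (m, h_2, s_1)]$. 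A one-line check of row and column sums shows $n_{m h_i \cdot}^{\xd} = n_{m h_i \cdot}^{\xd'} = 1$ and $n_{m \cdot s_j}^{\xd} = n_{m \cdot s_j}^{\xd'} = 1$ for $i, j \in \{1,2\}$ (all other margins being zero), so $\cswap(\xd) = \cswap(\xd')$; setting $\D_0 = \D_{\cswap}(\xd)$, we then have $\xd' \in \D_0 \in \DSwap$. Their interior cells nonetheless differ, so $C(\xd) \ne C(\xd')$. This is just the minimal $2 \times 2$ swap appearing in the $\Delta = 2$ base case of Lemma~\ref{lemmaDSteps}, run in reverse. By Lemma~\ref{lemmaEll1HamDistEqual}, $\dHamSr(\xd, \xd') = 2 < \infty$, a fact I would record for later.

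Next I would observe that when $p = 0$, no record is ever selected in the selection loop of Algorithm~\ref{algoPermute}, so the control flow always reaches the ``0 records selected'' branch and continues without permuting; the sampled permutation $\sigma$ is therefore the identity, and the PSA returns $C(\xd)$ with probability one on input $\xd$. Consequently $\sfPx$ is the point mass on $C(\xd)$ and $\sfPxdash$ is the point mass on $C(\xd')$. Taking the measurable event $E = \{C(\xd)\}$ in the definition~\eqref{eqMultDistDefnPart2} of $\Mult$ yields $\sfPx(E) = 1$ and $\sfPxdash(E) = 0$, so $\abs{\ln [\sfPx(E)/\sfPxdash(E)]} = \infty$ and therefore $\Mult[\sfPx(C(\bm Z)), \sfPxdash(C(\bm Z))] = \infty$.

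Finally, I would invoke the DP condition~\eqref{eqTSatisfiesDPDefn} specialized to this flavor, which for the universe $\D_0$ reads $\Mult[\sfPx, \sfPxdash] \le \epsilon_{\D_0}\, \dHamSr(\xd, \xd')$. Since the left-hand side is infinite while $\dHamSr(\xd, \xd')$ is finite, the inequality can hold only if $\epsilon_{\D_0} = \infty$, which is exactly the assertion that the PSA fails \SpecSwapProdGen\ for any finite budget on $\D_0$. There is no real obstacle here: the construction is elementary and the argument is a short chain of observations. The only points meriting a moment's care are confirming that $p = 0$ genuinely collapses the output to a single point mass -- so that the resulting infinite multiplicative distance cannot be absorbed by any finite budget multiplied by the finite Hamming distance -- and checking that the two differently-tabulated datasets share identical invariants and hence inhabit the same universe.
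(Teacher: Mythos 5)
Your proposal is correct and follows essentially the same route as the paper's proof: the same two-record witness pair $[(m,h_1,s_1),(m,h_2,s_2)]$ versus $[(m,h_1,s_2),(m,h_2,s_1)]$, the same observation that $p=0$ forces $\sigma=\id$ so each input yields a point mass on its own table, and the same conclusion that an infinite multiplicative distance cannot be bounded by a finite budget times the finite Hamming distance. The only difference is cosmetic -- you verify the shared margins and the value $\dHamSr(\xd,\xd')=2$ explicitly where the paper merely asserts membership in a common universe.
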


\begin{proof}
First we show that such a universe $\D_0 \in \DSwap$ exists. Given $\abs{\mathscr H}, \abs{\mathscr S} \ge 2$, the data sets $[(m,h,s), (m,h',s')]$ and $[(m,h,s'), (m,h',s)]$ (for any choice of $m \in \mathscr M$, $h \ne h' \in \mathscr H$ and $s \ne s' \in \mathscr S$) are in the same universe $\D_0 \in \DSwap$ and satisfy $C(\xd) \ne C(\xd')$. 

Let $\xd, \xd' \in \Xprod$ be data sets that are in the same universe $\D_0$. Suppose $C(\xd) \ne C(\xd')$. If $p = 0$ then the permutation $\sigma$ sampled by the PSA must be the identity. Thus, $\sfPxdash (C(\sigma(\xd')) = C(\xd)) = 0$ but $\sfPx (C(\sigma(\xd)) = C(\xd)) = 1$. Since $\dHamSr(\xd, \xd') < \infty$, the DP condition
\[\sfPx(C(\sigma(\xd)) = C(\xd)) \le \exp [\dHamSr(\xd, \xd') \epsilon_{\mathcal D_0}] \sfPxdash(C(\sigma(\xd')) = C(\xd)),\]
cannot be satisfied by a finite $\epsilon_{\D_0}$. %
\end{proof}

\begin{proposition}\label{propTightSwapBudgetpOne}
Suppose $p = 1$ and $\abs{\mathscr H}, \abs{\mathscr S} \ge 2$. %
Then there exists $\mathcal D_0 \in \DSwap$ with $n_{m_0h_0} = n_{m_0h'_0} = n_{m_0s_0} = n_{m_0s'_0} = 1$ for some $m_0 \in \mathscr M$, $h_0 \ne h'_0 \in \mathscr H$, and $s_0 \ne s'_0 \in \mathscr S$. Hence the PSA does not satisfy \SpecSwapProdGen\ for any finite $\epsilon_{\D_0}$ and any such $\D_0$.
\end{proposition}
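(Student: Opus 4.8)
The plan is to mirror the proof of Proposition~\ref{propTightSwapBudgetpZero}, replacing the forced identity permutation that arises when $p=0$ with the forced transposition that arises when $p=1$. First I would exhibit the universe. Since $\abs{\mathscr H}, \abs{\mathscr S} \ge 2$, I choose any $m_0 \in \mathscr M$, distinct $h_0 \ne h'_0 \in \mathscr H$, and distinct $s_0 \ne s'_0 \in \mathscr S$, and set $\xd = [(m_0, h_0, s_0), (m_0, h'_0, s'_0)] \in \Xprod$. Taking $\D_0 = \D_{\cswap}(\xd)$, a direct count of margins gives $n_{m_0 h_0} = n_{m_0 h'_0} = n_{m_0 s_0} = n_{m_0 s'_0} = 1$, so $\D_0$ satisfies the stated conditions. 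Exhibiting this one universe is all that is needed, since the theorem it feeds into only requires the existence of a single universe with infinite budget.

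Next I would describe $\D_0$ explicitly. Fixing the invariants $\cswap$ forces every dataset in $\D_0$ to consist of exactly two records, both in stratum $m_0$, with the multiset of holding values equal to $\{h_0, h'_0\}$ and the multiset of swapping values equal to $\{s_0, s'_0\}$. Such a dataset is just a bijection between $\{h_0,h'_0\}$ and $\{s_0,s'_0\}$, of which there are only two. Hence, up to reordering of rows, $\D_0$ contains only $\xd$ and $\xd' = [(m_0, h_0, s'_0), (m_0, h'_0, s_0)]$, and these satisfy $C(\xd) \ne C(\xd')$.

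The crux of the argument is the behavior of the PSA when $p=1$. With $p=1$ every record is selected, so the sampled permutation must derange all of stratum $m_0$; since this stratum has exactly two records, the only available derangement is the transposition. The PSA is therefore deterministic on $\D_0$, mapping $\xd \mapsto \xd'$ and $\xd' \mapsto \xd$ up to reordering. Consequently $\sfPx(C(\sigma(\xd)) = C(\xd')) = 1$ while $\sfPxdash(C(\sigma(\xd')) = C(\xd')) = 0$. Inserting the output event $\{C(\sigma) = C(\xd')\}$ into the DP condition
\[
\sfPx\big(C(\sigma(\xd)) = C(\xd')\big) \le \exp\big[\dHamSr(\xd, \xd')\, \epsilon_{\D_0}\big]\, \sfPxdash\big(C(\sigma(\xd')) = C(\xd')\big)
\]
reduces it to $1 \le 0$, since $\dHamSr(\xd, \xd') < \infty$. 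No finite $\epsilon_{\D_0}$ can satisfy this, which establishes the claim.

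I expect the only genuinely delicate point to be the determinism step: one must verify that $p=1$ truly forces the unique size-two derangement, so that no alternative output receives positive probability, and that the roles of $\xd$ and $\xd'$ are arranged so the numerator event has probability one while the denominator event has probability zero. Everything else -- the existence of $\D_0$, the enumeration of its two members, and the finiteness of $\dHamSr(\xd, \xd')$ -- is routine.
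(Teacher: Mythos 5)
Your construction and the determinism argument are correct, and for the purpose this proposition actually serves downstream (Theorem~\ref{thmTightSwapBudget}\ref{thmTightSwapBudgetStatementA} only needs \emph{one} universe with infinite budget) your proof suffices. However, the proposition as stated makes a stronger claim in its second sentence: the PSA fails to satisfy \SpecSwapProdGen\ with finite $\epsilon_{\D_0}$ for \emph{any} universe $\D_0$ having the margin property $n_{m_0h_0} = n_{m_0h'_0} = n_{m_0s_0} = n_{m_0s'_0} = 1$ -- including universes whose datasets contain arbitrarily many additional records beyond the two you single out. Your argument covers only the two-record universe $\D_{\cswap}([(m_0,h_0,s_0),(m_0,h'_0,s'_0)])$, where the enumeration ``$\D_0$ has exactly two members'' and the ``forced transposition'' step are available; neither carries over when stratum $m_0$ contains other records, so the universal quantifier is not discharged.

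The paper's proof handles the general case by a slightly different choice of distinguishing event. Take $\xd, \xd' \in \D_0$ differing by a single swap between $(m_0,h_0,s_0)$ and $(m_0,h'_0,s'_0)$, with the remaining records $\xd_{3:n}$ shared. Because $h_0$ and $s_0$ each occur exactly once in stratum $m_0$, and because $p=1$ forces \emph{every} record of the stratum to be deranged, the cell $(m_0,h_0,s_0)$ is necessarily vacated: the unique record with holding value $h_0$ must receive a swapping value other than $s_0$, and no other record can land there. Hence $\sfPx\big(C(\sigma(\xd)) = C(\xd)\big) = 0$ while $\sfPxdash\big(C(\sigma(\xd')) = C(\xd)\big) > 0$, and since $\dHamSr(\xd,\xd') < \infty$ the $\Mult$ divergence is infinite. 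If you want your write-up to prove the proposition as literally stated, you should either adopt this cell-vacating argument or explicitly weaken the claim to the existential version you actually use.
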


\begin{proof}
The universe given in the proof of Proposition~\ref{propTightSwapBudgetpZero} satisfies the property: $n_{m_0h_0} = n_{m_0h'_0} = n_{m_0s_0} = n_{m_0s'_0} = 1$ for some $m_0 \in \mathscr M$, $h_0 \ne h'_0 \in \mathscr H$ and $s_0 \ne s'_0 \in \mathscr S$.

Now take any $\D_0 \in \DSwap$ that satisfies this property. Then there exists $\xd, \xd' \in \D_0$ that differ by a single swap between $(m_0,h_0,s_0)$ and $(m_0,h'_0,s'_0)$---that is,
\begin{align*} 
	{\xd}&=\left[ (m_0, h_0, s_0),  (m_0, h'_0, s'_0),  \xd_{3:n}\right], \\
	{\xd'}&=\left[ (m_0, h_0, s'_0),  (m_0, h'_0, s_0), \xd_{3:n}\right],
\end{align*}
where $\xd_{3:n}=[(M_i, H_i, S_i), i=3, \ldots, n]$. Then $n_{m_0h_0s_0}^{\xd} = n_{m_0h'_0s'_0}^{\xd} = 1$ and $n_{m_0hs_0}^{\xd} = n_{m_0h_0s}^{\xd} = 0$ for all $h \ne h_0$ and all $s \ne s_0$. Since no records can be fixed by $\sigma$ when $p = 1$, we have $n_{m_0h_0s_0}^{\sigma(\xd)} = 0$ for any possible $\sigma$ and hence $\sfPx(C(\sigma(\xd)) = C(\xd)) = 0$ but $\sfPx(C(\sigma(\xd')) = C(\xd)) > 0$.
\end{proof}

\begin{proposition}\label{propTightSwapBudgetLnO}
Suppose that $0 < p < 1$ and $\abs{\mathscr H}, \abs{\mathscr S} \ge 2$. Then there exists $\D_0 \in \DSwap$ and $m_0 \in \mathscr M$ such that $n_{m_0} \ge 2$ and $n_{m_0h}, n_{m_0s} \in \{0,1\}$ for all $h \in \mathscr H$ and $s \in \mathscr S$. A necessary condition for the PSA to satisfy \SpecSwapProd\ is that $\epsilon_{\D_0} \ge \ln o$ for any such $\D_0$.%
\end{proposition}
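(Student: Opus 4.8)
The plan is to reduce both halves of the statement to a single swap performed inside the stratum $m_0$. The existence claim I would dispatch first: taking the two-record dataset $\xd = [(m_0,h,s),(m_0,h',s')]$ with $h\ne h'$ in $\mathscr H$ and $s \ne s'$ in $\mathscr S$ (available because $\abs{\mathscr H},\abs{\mathscr S}\ge 2$) yields a stratum with $n_{m_0}=2$ and every margin $n_{m_0 h}, n_{m_0 s}\in\{0,1\}$, so its universe $\D_0 \in \DSwap$ has the required property. This is exactly the universe exhibited in the proof of Proposition~\ref{propTightSwapBudgetpZero}, which I would simply invoke.

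For the lower bound, fix any $\D_0$ and $m_0$ with the stated property. Since $n_{m_0}\ge 2$ while $n_{m_0 h},n_{m_0 s}\le 1$ for every $h,s$, the records of stratum $m_0$ carry pairwise distinct holding values and pairwise distinct swapping values. I would pick any two of them, $a$ and $b$, take an arbitrary $\xd \in \D_0$, and let $\xd'$ be obtained from $\xd$ by interchanging the swapping values of $a$ and $b$ while leaving every other record — and every other stratum — untouched. Proposition~\ref{propInvariantsOfSwap} guarantees $\xd'\in\D_0$, and because $a,b$ have distinct holding and distinct swapping values, exactly four interior cells change, so $\ell_1^r(\xd,\xd')=4$ and hence $\dHamSr(\xd,\xd')=2$ by Lemma~\ref{lemmaEll1HamDistEqual}.

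The heart of the argument is to evaluate the output distributions on the singleton event $E=\{C(\bm Z)=C(\xd')\}$. Since $\xd$ and $\xd'$ differ only in stratum $m_0$, the PSA's selection-and-derangement probabilities factor across strata (as in the proof of Theorem~\ref{thmSwapDP}), and every factor but the $m_0$ one cancels in the ratio because $\xd$ and $\xd'$ agree elsewhere. Within stratum $m_0$ the distinctness of the holding and swapping values forces two clean uniqueness facts, which I would verify: the only permutation $g$ with $C(g(\xd))=C(\xd')$ is the transposition of $a$ and $b$ — it deranges exactly two records and cannot arise from selecting a larger set, since a derangement must move every selected record whereas this $g$ fixes all the others — and the only $g$ with $C(g(\xd'))=C(\xd')$ is the identity, sampled precisely when no record of the stratum is selected. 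Writing $N = 1 - n_{m_0}\,p(1-p)^{n_{m_0}-1}$ for the (common) re-selection normalizer and using $d(2)=1$, this gives $\sfPx(E) = p^2(1-p)^{n_{m_0}-2}/N$ and $\sfPxdash(E) = (1-p)^{n_{m_0}}/N$, whose ratio is $p^2/(1-p)^2 = o^2$.

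It then follows that $\Mult[\sfPx(C(\bm Z)),\sfPxdash(C(\bm Z))] \ge \abs{\ln o^{2}} = 2\abs{\ln o}$, so the DP requirement $\Mult \le \epsilon_{\D_0}\,\dHamSr(\xd,\xd') = 2\epsilon_{\D_0}$ forces $\epsilon_{\D_0} \ge \abs{\ln o}\ge \ln o$, as claimed. The main obstacle is the pair of uniqueness facts in the third paragraph: everything hinges on the $\{0,1\}$ margin condition ruling out any permutation other than the single transposition (respectively the identity) from producing $C(\xd')$, even when $n_{m_0}>2$, and on confirming that the re-selection normalizers for $\xd$ and $\xd'$ genuinely coincide so that they cancel. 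Once these are nailed down the ratio collapses to $o^2$ and the conclusion is immediate.
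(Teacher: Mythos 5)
Your proposal is correct and follows essentially the same route as the paper: isolate the stratum $m_0$, use the $\{0,1\}$ margin condition to force uniqueness of the permutation achieving a given output table, and compare the identity (probability $\propto(1-p)^{n_{m_0}}$) against the unique two-record transposition (probability $\propto p^2(1-p)^{n_{m_0}-2}$), yielding a likelihood ratio of $o^{\pm 2}$ against $\dHamSr(\xd,\xd')=2$. The only cosmetic differences are that you evaluate the event $\{C(\bm Z)=C(\xd')\}$ where the paper uses $\{C(\bm Z)=C(\xd)\}$ (mirror images of the same computation), and you build $\xd'$ by an explicit transposition rather than invoking Lemma~\ref{lemmaDSteps}.
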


\begin{proof}
Let $\xd, \xd' \in \D_0$ with $\dHamSr(\xd_{m_0}, \xd'_{m_0}) = 2$ and $\dHamSr(\xd_{m}, \xd'_{m}) = 0$ for all $m \ne m_0$. (Such a pair of data sets exist because $n_{m_0} \ge 2$.) Reorder the records in $\xd'$ so that there exists a permutation $\rho$ which deranges exactly two records and satisfies $\rho(\xd') = \xd$. (Such a permutation exists by Lemma~\ref{lemmaDSteps}.)

Because $n_{m_0h}, n_{m_0s} \in \{0,1\}$ for all $m \in \mathscr M$ and $s \in \mathscr S$, there are no vacuous swaps in the $m_0$ stratum. That is, $g(\xd_{m_0}) \ne \xd_{m_0}$ for all permutations $g$ that are not the identity $\id$. Hence $G_{\xd_{m_0} \to \xd_{m_0}}=\{\id\}$. Thus,
\begin{align*}
	\frac{\sfPx(C(\sigma(\xd)) = C(\xd))}{\sfPxdash(C(\sigma(\xd')) = C(\xd))} &= \frac{\sfPx(C(\sigma_{m_0}(\xd_{m_0})) = C(\xd_{m_0}))}{\sfPxdash(C(\sigma_{m_0}(\xd'_{m_0})) = C(\xd_{m_0}))} \\
	&= \frac{\sfPx(\sigma_{m_0} = \id)}{\sfPxdash(\sigma_{m_0} = \rho)} \\
	&= o^{-2}.
\end{align*}
Hence, $\sfPxdash(C(\sigma(\xd')) = C(\xd)) \le \exp [\dHamSr(\xd, \xd') \epsilon_{\mathcal D_0}] \sfPx(C(\sigma(\xd)) = C(\xd))$ if and only if $\epsilon_{\D_0} \ge \ln o$.
\end{proof}

\begin{proposition}\label{propTightSwapBudgetLnDb1}
Suppose that $0 < p < 1$ and $\abs{\mathscr H}, \abs{\mathscr S} \ge 4$. %
Then there exists $\mathcal D_0 \in \DSwap$ that has the following properties: 
\begin{equation}\label{eqDivAssume}
	\max_h n_{m_0h} \le \frac{b}{2} - 1 \text{ and } \max_s n_{m_0s} \le \frac{b}{2}  - 1,
\end{equation}
for some $m_0 \in \mathscr M$ with $n_{m_0} = b$, and there exists $h_1 \ne h_2 \in \mathscr H$ and $s_1 \ne s_2 \in \mathscr S$ such that
\begin{equation}\label{eqDiv2Assume}
	n_{m_0 h_1} = n_{m_0 h_2} = n_{m_0 s_1} = n_{m_0 s_2} = 1.
\end{equation}
A necessary condition for the PSA to satisfy \SpecSwapProdGen\ is that 
\[\epsilon_{\D_0} \ge 0.5 \ln [d(b)/d(b-2)] - \ln (o),\] 
for any $\D_0$ satisfying the above properties.
\end{proposition}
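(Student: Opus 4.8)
The plan is to prove the two assertions separately: first the existence of a universe $\D_0$ satisfying \eqref{eqDivAssume} and \eqref{eqDiv2Assume}, then the lower bound it forces on $\epsilon_{\D_0}$. For existence I would exhibit an explicit single-stratum representative dataset. Using $\abs{\mathscr H}, \abs{\mathscr S} \ge 4$, reserve two ``singleton'' categories $h_1, h_2$ and $s_1, s_2$ and two ``bulk'' categories $h_3, h_4$ and $s_3, s_4$. Place one record $u$ at $(h_1, s_1)$, one record $v$ at $(h_2, s_2)$, and split the remaining $b-2$ records evenly between the diagonal cells $(h_3, s_3)$ and $(h_4, s_4)$ (taking $b$ even; the odd case uses an analogous balanced fill). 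Reading off the margins gives $n_{m_0 h_1} = n_{m_0 h_2} = n_{m_0 s_1} = n_{m_0 s_2} = 1$, i.e. \eqref{eqDiv2Assume}, and $\max_h n_{m_0 h} = \max_s n_{m_0 s} = (b-2)/2 = b/2 - 1$, i.e. \eqref{eqDivAssume}; take $\D_0 = \D_{\cswap}(\xd)$ for this dataset.

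For the bound I would run the ``one swap'' argument of Proposition~\ref{propTightSwapBudgetLnO}, but with a fully-deranged target. Let $\xd$ be the representative above and let $\xd'$ be obtained by swapping the swapping-variable values of $u$ and $v$, so $u$ sits at $(h_1, s_2)$ and $v$ at $(h_2, s_1)$. The margins are unchanged, so $\xd, \xd' \in \D_0$ and $\dHamSr(\xd, \xd') = 2$; by Lemma~\ref{lemmaDSteps} we may reorder so that the transposition $\rho$ of $u$ and $v$ satisfies $\rho(\xd') = \xd$. Choose the target output $\bm z$ to be $\xd$ with its bulk ``rotated'' to the anti-diagonal: $u, v$ stay put while the $(h_3, s_3)$ records move to $(h_3, s_4)$ and the $(h_4, s_4)$ records to $(h_4, s_3)$. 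Since the two diagonal counts are equal, $\bm z$ has the same margins as $\xd$.

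The heart of the argument is to show that every $g \in G_{\xd \to \bm z}$ deranges exactly the $b-2$ bulk records while fixing $u$ and $v$. Fixing $u, v$ is forced by \eqref{eqDiv2Assume}: as $h_1$ and $s_1$ are both singletons, the unique $H = h_1$ record can only receive the unique $S = s_1$ value, so $u$ cannot move, and likewise $v$. Conversely, $\bm z$ places no $h_3$-record at $s_3$ and no $h_4$-record at $s_4$, so every bulk record must change its swapping value — this disjoint-support property is exactly what the balance condition \eqref{eqDivAssume} and the four available categories buy us. Hence $k_g = b-2$ for all $g \in G_{\xd \to \bm z}$. Composing with $\rho$, whose support $\{u,v\}$ is disjoint from the bulk, the bijection $g \mapsto g \circ \rho$ carries $G_{\xd \to \bm z}$ onto $G_{\xd' \to \bm z}$ and turns each $(b-2)$-derangement into a genuine $b$-derangement, so $k_{g'} = b$ for all $g' \in G_{\xd' \to \bm z}$. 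With each $g$ contributing $p^{b-2}(1-p)^2/[Z\,d(b-2)]$ and each image $g'$ contributing $p^{b}/[Z\,d(b)]$ for a common normalizer $Z$, and the two index sets equinumerous, summing and dividing yields (in agreement with \eqref{eqProbRatio} at $\delta = -2$, $k_g = b-2$)
\[
\frac{\sfPx\big(C(\sigma(\xd)) = C(\bm z)\big)}{\sfPxdash\big(C(\sigma(\xd')) = C(\bm z)\big)} = o^{-2}\,\frac{d(b)}{d(b-2)}.
\]
The DP constraint \SpecSwapProdGen\ with $\dHamSr(\xd, \xd') = 2$ then requires $o^{-2} d(b)/d(b-2) \le \exp(2\epsilon_{\D_0})$, which rearranges to $\epsilon_{\D_0} \ge 0.5\,\ln[d(b)/d(b-2)] - \ln o$, as claimed.

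The main obstacle I anticipate is the ``exactly $b-2$'' claim for $G_{\xd \to \bm z}$: one must rule out permutations that leave some bulk record fixed and compensate elsewhere, and must confirm the anti-diagonal target genuinely shares $\xd$'s margins. Both hinge on arranging the bulk so that its target has disjoint support, which is precisely what the balance condition \eqref{eqDivAssume} and the availability of four categories in each of $\mathscr H, \mathscr S$ provide; carrying this through for odd $b$ (where an even diagonal split is unavailable) needs a slightly more elaborate balanced fill but introduces no new ideas.
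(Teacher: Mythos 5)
Your construction and ratio computation are correct, and the core of your argument is the same as the paper's: a pair $\xd, \xd'$ differing by one swap of the two singleton records, a target $\bm z$ chosen so that every permutation reaching $C(\bm z)$ from $\xd$ deranges exactly the $b-2$ bulk records while every permutation reaching it from $\xd'$ deranges all $b$, and the resulting exact ratio $o^{-2}d(b)/d(b-2)$. Your identification of $k_g$ via the disjoint-support property of the anti-diagonal target is a clean, self-contained substitute for the paper's route, which instead gets the lower bound $k_g \ge b-2$ from $\dHamSr(\xd,\bm z) = b-2$ together with Lemma~\ref{lemmaSwapDist}, and the upper bound from the same singleton-margin argument you use.

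The one genuine gap is scope. The proposition asserts the lower bound on $\epsilon_{\D_0}$ \emph{for any} $\D_0 \in \DSwap$ satisfying \eqref{eqDivAssume} and \eqref{eqDiv2Assume}, whereas you establish it only for the particular universe generated by your two-equal-diagonal-blocks dataset (and, as you concede, only for even $b$). For an arbitrary such $\D_0$ the margins $n_{m_0 h}$ and $n_{m_0 s}$ are prescribed and need not admit your anti-diagonal rotation; what is needed is that the bulk of some $\xd \in \D_0$ admits a derangement $g_0$ of all $b-2$ non-singleton records with $\dHamSr(\xd, g_0(\xd)) = b-2$, so that $\bm z = g_0(\xd)$ forces $k_g = b-2$. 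This is precisely what the balance condition \eqref{eqDivAssume} guarantees via Lemma~\ref{lemmaDiverseVacuous} (whose alternating-records construction also handles odd $b$, using a $3$-cycle on the last three records). Your proof as written suffices for the downstream uses in Theorems~\ref{thmTightSwapBudget} and~\ref{thmTightSwapBudget2}, which only need existence of one such $\D_0$, but to prove the proposition as stated you should replace your explicit bulk by an appeal to Lemma~\ref{lemmaDiverseVacuous} within an arbitrary qualifying universe.
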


We will use the following two lemmata in the proof of Proposition~\ref{propTightSwapBudgetLnDb1}.

\begin{lemma}\label{lemmaSwapDist}
For any $\xd$ and any permutation $g$,
\[ \dHamSr(\xd, g(\xd)) \le k_g,\]
where $k_g$ is the number of records that are deranged by $g$.
\end{lemma}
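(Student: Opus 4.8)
The plan is to reduce the claim to a cell-by-cell comparison of the contingency tables $C(\xd)$ and $C(g(\xd))$, and then invoke Lemma~\ref{lemmaEll1HamDistEqual} to translate the resulting $\ell_1$ bound back into a statement about $\dHamSr$. First I would use the fact that $g$ acts only on the swapping coordinate: since $g(\xd) = [(M_i, H_i, S_{g(i)})]_{i=1}^n$, any index $i$ fixed by $g$ satisfies $(M_i, H_i, S_{g(i)}) = (M_i, H_i, S_i)$, so that record is identical in $\xd$ and $g(\xd)$. Writing the difference of interior cell counts as
\[ n_{mhs}^{\xd} - n_{mhs}^{g(\xd)} = \sum_{i=1}^n 1_{M_i = m}\, 1_{H_i = h} \left( 1_{S_i = s} - 1_{S_{g(i)} = s} \right), \]
I would note that every summand with $g(i) = i$ vanishes, so only the $k_g$ deranged records contribute to any cell.

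Next I would bound $\ell_1^r(\xd, g(\xd))$ directly. Applying the triangle inequality to the absolute value of the above display and interchanging the order of summation gives
\[ \ell_1^r(\xd, g(\xd)) = \sum_{m,h,s} \abs{n_{mhs}^{\xd} - n_{mhs}^{g(\xd)}} \le \sum_{i : g(i) \ne i} \; \sum_{s} \abs{1_{S_i = s} - 1_{S_{g(i)} = s}}, \]
where the sums over $m$ and $h$ collapse because, for each fixed record $i$, the factor $1_{M_i = m}\, 1_{H_i = h}$ selects the single pair $(m,h) = (M_i, H_i)$. For each deranged record the inner sum $\sum_s \abs{1_{S_i = s} - 1_{S_{g(i)} = s}}$ equals $0$ when $S_i = S_{g(i)}$ and equals $2$ otherwise, hence is at most $2$. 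Summing over the $k_g$ deranged records yields $\ell_1^r(\xd, g(\xd)) \le 2 k_g$.

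Finally, because $g$ preserves the number of records we have $\abs{\xd} = \abs{g(\xd)} = n$, so Lemma~\ref{lemmaEll1HamDistEqual} applies and gives $\ell_1^r(\xd, g(\xd)) = 2\,\dHamSr(\xd, g(\xd))$. Combining this identity with the bound above produces $2\,\dHamSr(\xd, g(\xd)) \le 2 k_g$, i.e.\ $\dHamSr(\xd, g(\xd)) \le k_g$, as required.

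I would expect the only real difficulty here to be bookkeeping rather than anything conceptual. The two points to keep straight are: (i) the distance $\dHamSr$ is the order-blind Hamming distance encoded through the contingency table, and the factor of $2$ in Lemma~\ref{lemmaEll1HamDistEqual} arises precisely because each genuinely changed record appears once as a deleted value and once as an inserted value in $\ell_1^r$; and (ii) the result must be an inequality, not an equality, because a deranged record whose swap value happens to coincide with its image ($S_i = S_{g(i)}$) contributes nothing to $\ell_1^r$ while still counting toward $k_g$.
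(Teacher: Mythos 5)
Your proof is correct and follows essentially the same route as the paper: bound each deranged record's contribution to the interior cells of the contingency table by $2$, conclude $\ell_1^r(\xd, g(\xd)) \le 2k_g$, and convert via Lemma~\ref{lemmaEll1HamDistEqual}. Your indicator-function bookkeeping just makes explicit the paper's one-line observation that each permuted record decreases one cell by at most $1$ and increases another by at most $1$.
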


\begin{proof}
For every record $(M_i, H_i, S_i)$ permuted by $g$, the counts in the fully saturated contingency table can change by at most 2: the count $n_{M_i H_i S_i}$ will decrease by (at most) 1 and the count $n_{M_i, H_i, S_{g(i)}}$ will increase by (at most) 1. Thus, in sum, the counts $n_{mhs}$ can change by at most $2k_g$. That is,
\[\ell_1^r(\xd, g(\xd)) = \sum_{m,h,s} \abs{n_{mhs}^{\xd} - n_{mhs}^{g(\xd)}} \le 2k_g.\]
The desired result then follows by Lemma~\ref{lemmaEll1HamDistEqual}.
\end{proof}

\begin{lemma}\label{lemmaDiverseVacuous}
Suppose that $\D_0 \in \DSwap$ satisfies Equation~\ref{eqDivAssume}. Then there exists $\xd \in \D_0$ and a derangement $g$ of $\xd_{m_0}$ such that
\[ \dHamSr(\xd_{m_0}, g(\xd_{m_0})) = b.\]
(In fact, such an $\xd$ and $g$ exist if and only if $\D_0$ satisfies Equation~\ref{eqDivAssume}.)
\end{lemma}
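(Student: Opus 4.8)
The plan is to translate the statement into a question about contingency tables and then settle it combinatorially. Writing $N := C(\xd_{m_0})$ and $N' := C(g(\xd_{m_0}))$ with entries $n_{hs}$ and $n'_{hs}$, note that because $g$ keeps each record's holding value fixed and only permutes the swapping values, $N$ and $N'$ carry the \emph{same} margins $r_h := n_{m_0 h}$ and $c_s := n_{m_0 s}$ (the row margins because $H_i$ is preserved record-by-record, the column margins because the multiset of $S$-values is merely permuted). By Lemma~\ref{lemmaEll1HamDistEqual} together with $\sum_{h,s} n_{hs} = \sum_{h,s} n'_{hs} = b$, one gets $\dHamSr(\xd_{m_0}, g(\xd_{m_0})) = \tfrac12 \ell_1^r(\xd_{m_0}, g(\xd_{m_0})) = b - \sum_{h,s} \min(n_{hs}, n'_{hs})$, so the distance equals $b$ \emph{exactly when $N$ and $N'$ have disjoint supports}. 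I would then record the bookkeeping that disjoint supports automatically make $g$ a derangement (a fixed point would preserve a record's cell, contradicting disjointness), and conversely that any two tables with identical margins are linked by some record-permutation via a routine transportation argument, which has no fixed point once the supports are disjoint (the remaining strata of $\xd$ may be copied from any reference member of $\D_0$, so that $\xd \in \D_0$). This reduces the lemma to: \emph{two tables with margins $(r_h),(c_s)$ and disjoint supports exist iff \eqref{eqDivAssume} holds}.

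For the sufficiency (``if'') direction---the only one invoked in Proposition~\ref{propTightSwapBudgetLnDb1}---I would fix $N$ to have a \emph{sparse} support, e.g. the table produced by the northwest-corner rule, whose support meets each row and each column in an interval and has size at most $\abs{\{h : r_h>0\}} + \abs{\{s : c_s>0\}} - 1$. It then suffices to produce an $(r,c)$-margin table $N'$ supported off $\mathrm{supp}(N)$. This is a feasibility question for a transportation problem with forbidden cells, equivalently an integral max-flow on the bipartite network whose edges are the allowed cells, with row-supplies $r_h$ and column-demands $c_s$; by the max-flow/min-cut (defect Hall) criterion it is solvable iff for every set $A$ of rows the allowed columns reachable from $A$ carry total margin at least $\sum_{h\in A} r_h$. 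I expect verifying this cut condition to be the crux of the whole argument: because $N$ is sparse each row sees almost every column, and the bound $\max_h r_h, \max_s c_s \le \tfrac{b}{2}-1$ supplied by \eqref{eqDivAssume} is precisely what prevents a forbidden pattern from starving a set of rows of column capacity. (A tempting ``cyclically rotate the column labels by $\lfloor b/2\rfloor$'' construction genuinely fails for some margins obeying \eqref{eqDivAssume}---a straddling column block can be hit by both a row-window and its half-shift---which is why I would route the proof through flow feasibility rather than an explicit shift.)

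For the necessity (``only if'') direction I would argue by counting on the heaviest row and column. If $N,N'$ have disjoint supports, then in the heaviest row $h^*$ the columns occupied by $N$ and those occupied by $N'$ are disjoint, and each of these two column-sets carries total margin at least $r_{h^*}$; being disjoint subsets of the grand total $\sum_s c_s = b$, this forces $2 r_{h^*} \le b$, and symmetrically $2 c_{s^*} \le b$ for the heaviest column. This recovers \eqref{eqDivAssume} up to the slack between $\tfrac{b}{2}$ and $\tfrac{b}{2}-1$; I would note that the constant appearing in \eqref{eqDivAssume} is the conservative threshold demanded by the sufficiency construction, and since only that direction is used in the sequel, the mild looseness in the converse is harmless.
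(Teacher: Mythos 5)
Your reduction is sound and takes a genuinely different route from the paper's: you correctly observe that $\dHamSr(\xd_{m_0}, g(\xd_{m_0})) = b - \sum_{h,s}\min(n_{hs}, n'_{hs})$, so the lemma is equivalent to the existence of two contingency tables with the prescribed margins and disjoint supports, together with the routine facts that some record-permutation carries one table to the other and that disjointness forces it to be a derangement. The paper instead argues by explicit construction: it orders the records of the stratum $m_0$ so that each odd-indexed record differs from its successor in both the holding and the swapping variable, and takes $g$ to be the product of the adjacent transpositions (with a $3$-cycle patch on the last three records when $b$ is odd). Your formulation is arguably cleaner and more symmetric, but it trades an explicit construction for a feasibility question.

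The genuine gap is exactly where you say you ``expect the crux'' to be. The sufficiency direction --- the only one invoked in Proposition~\ref{propTightSwapBudgetLnDb1} --- is reduced to the feasibility of a transportation problem whose forbidden cells are the support of a northwest-corner table, and the max-flow/defect-Hall cut condition that would establish feasibility is never verified. Under your reformulation that verification \emph{is} the lemma; without it you have a restatement, not a proof. It is also not clear that the northwest-corner choice of $N$ makes the cut condition checkable: for a single row $h$ whose support interval happens to land on heavy columns, the column capacity available to $h$ is $b$ minus the total margin of that interval, and \eqref{eqDivAssume} bounds individual margins, not the mass of an interval of the staircase, so even the choice of $N$ may need revisiting. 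Separately, your necessity argument only yields $\max_h n_{m_0 h} \le b/2$ and $\max_s n_{m_0 s} \le b/2$, which leaves open the boundary cases ($\max = b/2$ for even $b$, $(b-1)/2$ for odd $b$) permitted by the negation of \eqref{eqDivAssume}; since the lemma asserts an ``if and only if,'' this slack means the converse as stated is not established either, although, as you note, only sufficiency is used downstream.
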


\begin{proof}
We suppress the subscript $m_0$ in $\xd_{m_0}$ throughout the proof.

We begin by consider the cases $b = 1$ and $b = 0$ individually. Equation~\ref{eqDivAssume} implies that $b \ne 1$. Similarly, no derangement of $\xd$ exists when $b = 1$. In the case of $b = 0$, the result is also trivial. Hence we may assume throughout that $b \ge 2$.

``$\Rightarrow$'': Suppose that $\D_0$ does not satisfy Equation~\ref{eqDivAssume}. Then $n_{m_0} < b$ or there exists (WLOG) a swapping category $s_0$ such that $n_{m_0s_0} \ge b/2$. In the first case, any permutation $g$ of $\xd$ deranges at most $n_{m_0}$ records and hence $\dHamSr(\xd, g(\xd)) < b$ by Lemma~\ref{lemmaDiverseVacuous}. By the pigeonhole principle, the second case implies every derangement $g$ of $\xd_{m_0}$ must send a record with swapping value $s_0$ to a record that also has value $s_0$. Yet the counts $n_{mhs}$ are unaffected by permutations of records within the same swapping category $s$. Hence
\[\sum_{h,s} \abs{n_{m_0hs}^{\xd} - n_{m_0hs}^{g(\xd)}} \le 2(k_g-1) < 2b,\]
(where the first inequality follows by the reasoning in the proof of Lemma~\ref{lemmaSwapDist}). The desired result then follows by Lemma~\ref{lemmaEll1HamDistEqual}.

``$\Leftarrow$'': Assume for now that $b$ is even. By Equation~\ref{eqDivAssume}, there exists $\xd \in \D_0$ whose records are ordered so that every odd record has a different $\Vswap$ and $\Vhold$ compared to the subsequent record. That is, $H_i \ne H_{i+1}$ and $S_i \ne S_{i+1}$ for all odd $i$. (One can construct $\xd$ by picking any $\xd' \in \D_0$, ordering the records of $\xd' \in \D_0$ so that the values of $\Vhold$ differ between consecutive records, and then permuting $\Vswap$ so that their values also differ between consecutive records.)

Construct $g$ by swapping odd and even records:
\[g(i) = \begin{cases}
	i+1 & \text{if } i \text{ odd}, \\
	i-1 & \text{if } i \text{ even}.
\end{cases}\]
Then $k_g = b$ and $\dHamSr(g(\xd), \xd) = b$. 

Now suppose that $b$ is odd. Then Equation~\ref{eqDivAssume} implies that there exists $\xd \in \D_0$ such that 
\begin{enumerate}[label=\arabic*)]
	\item $H_i \ne H_{i + 1}$ and $S_i \ne S_{i+1}$ for all odd $i < n_{m_0}$; and
	\item $H_{n_{m_0}} \notin \{H_{n_{m_0}-1}, H_{n_{m_0}-2}\}$ and $S_{n_{m_0}} \notin \{S_{n_{m_0}-1}, S_{n_{m_0}-2}\}$.
\end{enumerate}
Why is this true? We already know that 1) must be true by the proof for even $b$. Suppose that 2) is not true for any $\xd$. Then it must not be true for any $\xd'$ that are just reorderings of the records of $\xd$. Hence, for every adjacent pair $(i,i+1)$ (with $i < n_{m_0}$ odd), we must have $H_{n_{m_0}} \in \{ H_i, H_{i+1}\}$ or $S_{n_{m_0}} \in \{S_{i}, S_{i+1}\}$. Yet this would contradict Equation~\ref{eqDivAssume}.

Construct $g$ by swapping odd and even records, bar the final three records, which are permuted. That is,
\[g(i) = \begin{cases}
	i+1 & \text{if } i < n_{m_0} \text{ odd}, \\
	i-1 & \text{if } i < n_{m_0}-1 \text{ even},\\
	n_{m_0} & \text{if } i = n_{m_0}-1,\\
	n_{m_0}-2 & \text{if } i = n_{m_0}.
\end{cases}\]
As before, $k_g = b$ and $\dHamSr(g(\xd), \xd) = b$.
\end{proof}

\begin{proof}[Proof of Proposition \ref{propTightSwapBudgetLnDb1}]
Fix some $\D_0 \in \DSwap$, which satisfies the Equations~\ref{eqDivAssume} and~\ref{eqDiv2Assume}. Such a universe exists when $\abs{\mathscr H}, \abs{\mathscr S} \ge 4$ because, for example, 
\[\xd = 
[(m_0, h_1, s_1), (m_0, h_2, s_2), (m_0, h_3, s_3), (m_0, h_3, s_3), (m_0, h_4, s_4), (m_0, h_4, s_4)],\]
satisfies these properties.

Let $\epsilon_0 = 0.5 \ln [d(b)/d(b-2)] - \ln (o)$. %
We want to prove that
\begin{equation}\label{eqWTSForTight}
	\frac{\sfPx(C(\sigma (\xd)) = C(\bm z))}{\sfPxdash (C(\sigma(\xd')) = C(\bm z))} = \exp\left[\dHamSr(\xd, \xd')\epsilon_0\right],
\end{equation}
for some $\xd, \xd', \bm z \in \D_0$. %

We will construct $\xd$ and $\xd'$ so that they are identical except within the matching category $m_0$. Then by independence between matching categories,
\[\frac{\sfPx(C(\sigma (\xd)) = C(\bm z))}{\sfPxdash (C(\sigma(\xd')) = C(\bm z))} = \frac{\sfPx(C(\sigma_{m_0} (\xd_{m_0})) = C(\bm z_{m_0}))}{\sfPxdash (C(\sigma_{m_0}(\xd'_{m_0})) = C(\bm z_{m_0}))}.\]
This justifies dropping the subscript $m_0$ from $\xd_{m_0}$ and ignoring records with matching categories not equal to $m_0$ throughout the remainder of the proof.

We construct $\xd$ as follows: The first two records of $\xd$ are $(m_0, h_1, s_1)$ and $(m_0, h_2, s_2)$. The remainder of the records satisfy Equation~\ref{eqDivAssume}. Hence construct the remainder of $\xd$ according to the procedure given in the proof of Lemma~\ref{lemmaDiverseVacuous}. Let $\xd'$ be the same as $\xd$, except interchange the values of the swapping variable of the first two records. That is, $\xd' = [(m_0,h_1,s_2), (m_0,h_2,s_1), \xd_{3:n}]$.

Lemma \ref{lemmaDiverseVacuous} implies there exists a permutation $g_0$, which fixes the first two records and deranges the remaining records such that
\[\dHamSr(\xd, g_0(\xd)) = b-2.\]
Moreover, for $g'_0 = g_0 \circ (12)$, we have 
\[\dHamSr(\xd', g'_0(\xd')) = b.\]
Set $\bm z = g_0(\xd) = g'_0(\xd')$. 

Now we will prove Equation~\ref{eqWTSForTight} holds for these choices of $\xd, \xd'$ and $\bm z$. We have
\[\frac{\sfPx(C(\sigma (\xd)) = C(\bm z))}{\sfPxdash (C(\sigma(\xd')) = C(\bm z))} = \frac{\sum\limits_{\bm z' \text{ re-ordering of } \bm z} \sfPx(\sigma (\xd) = \bm z')}{\sum\limits_{\bm z' \text{ re-ordering of } \bm z} \sfPxdash (\sigma(\xd') = \bm z')}. \]%

Fix some $\bm z'$, which is a reordering of $\bm z$---i.e., some $\bm z'$ with $C(\bm z') = C(\bm z)$. We will show that $\frac{\sfPx(\sigma (\xd) = \bm z')}{\sfPxdash (\sigma(\xd') = \bm z')} = \exp(2\epsilon_0)$, when assuming that one of the numerator or the denominator is nonzero (which implies the other is also nonzero, since $\xd$ and $\xd'$ differ by a single swap). Since both the numerator and denominator are nonzero when $\bm z' = \bm z$, this result will prove Equation~\ref{eqWTSForTight}.

We know that $\dHamSr(\bm z, \bm z') = 0$ and $\dHamSr(\xd', \bm z) = b$. Then using the triangle inequality (twice, once for $\le$ and once for $\ge$), $\dHamSr(\xd', \bm z') = b$. Lemma~\ref{lemmaSwapDist} implies that $k_{g} = b$ for all $g \in G_{\xd' \to \bm z'}$. 

By the same reasoning, $\dHamSr(\xd, \bm z') = b-2$. This implies $k_g \ge b-2$ for all $g \in G_{\xd \to \bm z'}$ by Lemma~\ref{lemmaSwapDist}. We now show that, in fact, $k_g = b-2$. By construction,
\[n_{m_0 h_1 s_1}^{\xd} = n_{m_0 h_2 s_2}^{\xd} = 1 \text{ and } n_{m_0 h_1 s}^{\xd} = n_{m_0 h_2 s}^{\xd} = n_{m_0 h s_1}^{\xd} = n_{m_0 h s_2}^{\xd} = 0,\]
for all $h \notin \{h_1, h_2\}$ and $s \notin \{s_1, s_2\}$. These equations also hold for $\bm z$ and hence also for $\bm z'$. Thus, all $g \in G_{\xd \to \bm z'}$ must fix the first two records and hence $k_g \le b-2$.

In the proof of Theorem~\ref{thmSwapDP}, we showed that $\sfPx(\sigma = g)$ only depends on $k_g$ and, furthermore, that
\[\frac{\sfPx (\sigma = g)}{\sfPxdash (\sigma = g')} = \frac{(1-p)^2 d(b)}{p^2 d(b-2)},\]
when $k_{g} = b-2$ and $k_{g'} = b$. Thus,
\[\frac{\sfPx(\sigma (\xd) = \bm z')}{\sfPxdash (\sigma(\xd') = \bm z')} = \frac{\sum_{g \in G_{\xd \to \bm z'}} \sfPx (\sigma = g)}{\sum_{g' \in G_{\xd' \to \bm Z'}} \sfPxdash (\sigma = g')} = \frac{(1-p)^2 d(b)}{p^2 d(b-2)} = \exp(2 \epsilon_0),\]
since $k_g = b-2$ for all $g \in G_{\xd \to \bm z'}$ and $k_{g'} = b$ for all $g' \in G_{\xd' \to \bm z'}$.
\end{proof}

\begin{proposition}\label{propTightSwapBudgetLnDb1Point5}
Suppose that $0 < p \le 0.5$ and $\abs{\mathscr H}, \abs{\mathscr S} \ge 2$. Then there exists $\D_0 \in \DSwap$ such that $b = 2$ and 
\begin{equation}\label{eqDiv2AssumeRepeat}
	n_{m_0 h_1} = n_{m_0 h_2} = n_{m_0 s_1} = n_{m_0 s_2} = 1,
\end{equation}
for some $m_0 \in \mathscr M$ with $n_{m_0} = b$ and some $h_1 \ne h_2$ and $s_1 \ne s_2$.

A necessary condition for the PSA to satisfy \SpecSwapProdGen\ is that 
\[\epsilon_{\D_0} \ge 0.5 \ln [d(b)/d(b-2)] - \ln (o),\]
for any such $\D_0$.
\end{proposition}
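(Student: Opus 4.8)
The plan is to treat the degenerate case $b = 2$ directly, since it escapes the machinery of Lemma~\ref{lemmaDiverseVacuous}: there the hypothesis~\eqref{eqDivAssume} would demand $\max_h n_{m_0 h} \le b/2 - 1 = 0$, which is impossible for a non-trivial stratum. So instead of invoking that derangement construction, I would build the extremal universe and dataset pair by hand. For existence, since $\abs{\mathscr H}, \abs{\mathscr S} \ge 2$ I pick $h_1 \ne h_2 \in \mathscr H$, $s_1 \ne s_2 \in \mathscr S$ and any $m_0 \in \mathscr M$, and set $\xd = [(m_0,h_1,s_1),(m_0,h_2,s_2)]$. Taking $\D_0 = \D_{\cswap}(\xd)$, the unique non-trivial stratum $m_0$ holds two records with distinct values, so $b = 2$, and the margin conditions~\eqref{eqDiv2AssumeRepeat} are immediate by inspection.

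For the lower bound I would first simplify the target. Because $d(2) = d(0) = 1$, the claimed quantity $0.5\ln[d(b)/d(b-2)] - \ln o$ collapses to $-\ln o$. Fixing any $\D_0$ with the stated properties, I define $\xd'$ to agree with $\xd$ outside $m_0$ and, within $m_0$, to interchange the swapping values of the two records, i.e. $\xd'_{m_0} = [(m_0,h_1,s_2),(m_0,h_2,s_1)]$. Since $h_1 \ne h_2$ and $s_1 \ne s_2$, these records share nothing with those of $\xd_{m_0}$, so $\ell_1^r(\xd,\xd') = 4$ and hence $\dHamSr(\xd,\xd') = 2$ by Lemma~\ref{lemmaEll1HamDistEqual}; moreover $\xd$ and $\xd'$ carry identical margins $n_{m_0 h}$ and $n_{m_0 s}$ and therefore lie in the common universe $\D_0$.

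The core computation reduces, by the independence of the PSA across matching strata, to the stratum $m_0$ alone, the contributions of all other strata being identical for $\xd$ and $\xd'$ and cancelling. A two-record stratum admits only the identity and the single transposition; after the re-selection step that forbids selecting exactly one record, the normalizing constant $(1-p)^2 + p^2$ cancels and $\sfPx(\sigma_{m_0} = \id) \propto (1-p)^2$, $\sfPx(\sigma_{m_0} = \text{swap}) \propto p^2$. Since $C(\sigma(\xd)) = C(\xd)$ occurs exactly when $\sigma_{m_0} = \id$, while $C(\sigma(\xd')) = C(\xd)$ occurs exactly when $\sigma_{m_0}$ is the transposition, I obtain
\[
\frac{\sfPx\big(C(\sigma(\xd)) = C(\xd)\big)}{\sfPxdash\big(C(\sigma(\xd')) = C(\xd)\big)} = \frac{(1-p)^2}{p^2} = o^{-2},
\]
which is exactly the ratio $(1-p)^2 d(b) / [p^2 d(b-2)]$ computed in the proof of Theorem~\ref{thmSwapDP}, specialized to $k_g = b-2 = 0$ and $k_{g'} = b = 2$. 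Consequently $\Mult[\sfPx,\sfPxdash] \ge \abs{\ln o^{-2}} = -2\ln o$, and the DP requirement $\Mult[\sfPx,\sfPxdash] \le \dHamSr(\xd,\xd')\,\epsilon_{\D_0} = 2\epsilon_{\D_0}$ then forces $\epsilon_{\D_0} \ge -\ln o = 0.5\ln[d(b)/d(b-2)] - \ln o$, as required.

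The obstacle here is bookkeeping rather than depth. I must be careful that the re-selection loop is handled correctly, so that the outcome ``no swap'' really carries weight $(1-p)^2$ normalized against $p^2$; and I must confirm that no other PSA output can dilute the ratio, which holds because the support of $C(\bm Z)$ under either input is exactly $\{C(\xd), C(\xd')\}$, so the supremum defining $\Mult$ is genuinely attained at the event used above. Finally, the restriction $0 < p \le 0.5$ plays no role in the ratio computation itself; it enters only to guarantee $o \le 1$, making $\abs{\ln o^{-2}} = -2\ln o$ the non-trivial (positive) branch, so that $-\ln o$ is a meaningful bound. For $p > 0.5$ the governing lower bound is instead $\ln o$, already furnished by Proposition~\ref{propTightSwapBudgetLnO}.
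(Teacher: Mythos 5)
Your proposal is correct and follows essentially the same route as the paper, which proves this case by applying the argument of Proposition~\ref{propTightSwapBudgetLnDb1} to exactly the pair $\xd = [(m_0,h_1,s_1),(m_0,h_2,s_2)]$ and $\xd' = [(m_0,h_1,s_2),(m_0,h_2,s_1)]$; your explicit handling of the $b=2$ degeneracy (where $g_0=\id$, $\bm z = \xd$, and $d(2)/d(0)=1$ collapses the bound to $-\ln o$) is just that specialization carried out in full. The direct computation of the ratio $(1-p)^2/p^2 = o^{-2}$ and the reduction to the stratum $m_0$ by independence match the paper's reasoning exactly.
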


\begin{proof}
Because $\abs{\mathscr H}, \abs{\mathscr S} \ge 2$, any data set of the form $[(m_0, h_1, s_1), (m_0, h_2, s_2)]$ satisfies Equation~\ref{eqDiv2AssumeRepeat}. Moreover, $\xd$ is in some universe $\D_0$, thereby proving the first half of the proposition. The second half of the proposition follows by the same reasoning as the proof of Proposition~\ref{propTightSwapBudgetLnDb1} applied to $\xd = [(m_0, h_1, s_1), (m_0, h_2, s_2)]$ and $\xd = [(m_0, h_1, s_2), (m_0, h_2, s_1)]$.  
\end{proof}

\section{Zero-Concentrated Differential Privacy}\label{appZCDP}

The normalized R\'enyi metric $\Dnor$ is defined as:
\[\Dnor (\sfP, \sfQ) = \sup_{\alpha > 1} \frac{1}{\sqrt\alpha} \max \left\{ \sqrt{D_\alpha (\sfP||\sfQ)}, \sqrt{D_\alpha (\sfQ||\sfP)} \right\},\]
where $D_\alpha$ is the R\'enyi divergence of order $\alpha$:
\[D_{\alpha} (\sfP||\sfQ) = \begin{cases}
    \frac{1}{\alpha-1} \ln \int \left[\frac{d\sfP}{d\sfQ} \right]^\alpha d\sfQ, &\mathrm{if\ } \sfP \mathrm{\ is\ absolutely\ continuous\ wrt.\ } \sfQ, \\
    \infty &\mathrm{otherwise.}
\end{cases}\]
Here $\frac{d\sfP}{d\sfQ}$ is the Radon-Nikodym derivative of $\sfP$ with respect to $\sfQ$. %

The term `zCDP' refers to the class of DP flavors whose output premetric is the normalized R\'enyi metric, much as `pure DP' refers to the class of flavors whose output premetric is the multiplicative distance (Equation~\ref{eqMultDistDefnPart2}). Note that we reparameterize $\rho$ so that $\Dnor$ is a metric \citep{bailieRefreshmentStirredNot2026}. This is similar to the parameterization of zCDP given in \citet{canonneDiscreteGaussianDifferential2022} and \citet{kairouzDistributedDiscreteGaussian2021}. The standard formulation of zCDP, as originally given in \citet{bunConcentratedDifferentialPrivacy2016a}, uses the square of the normalized R\'enyi metric as its output premetric. Consequently, the standard parameterization of zCDP is equal to $\rho^2$ under our formulation of zCDP.

\section{Proof and Discussion of Theorem~\ref{thmTDASatisfiesDP}}\label{appProofTDASatisfiesDP}

\begin{proof}[Proof of Theorem~\ref{thmTDASatisfiesDP}]
We first analyze the TopDown Algorithm (TDA) for producing the P.L. 94-171 Redistricting Summary File (PL). \citet{abowd2022topdown} prove that the mechanism $\Th$ that produces the household Noisy
Measurement File (NMF) satisfies $\rho$-DP$(\Xcef,\allowbreak \{\Xcef\},\allowbreak d_{r_{bs}}^{hh},\allowbreak \Dnor)$, where $\rho^2 = 0.07$ and $d_{r_{bs}}^{hh}$ is the input premetric corresponding to bounded differential privacy (DP) on household-records. %
But $(\Xcef,\allowbreak \{\Xcef\},\allowbreak d_{r_{bs}}^{hh},\allowbreak \Dnor)$ and $(\Xcef,\allowbreak \{\Xcef\},\allowbreak \dHamSh,\allowbreak \Dnor)$ are equivalent DP flavors (\cite[see][]{bailieRefreshmentStirredNot2026}). %
Hence $\Th$ satisfies \SpecTDA\ by the second half of Proposition~\ref{propNestedscD} %
with $\rho^2 = 0.07$. We can similarly conclude that $\Tp$ satisfies \SpecTDA\ with $\rho^2 = 2.56$. Then by composition, the mechanism $\bm T_{ph} = [\Tp, \Th]$ has protection loss budget (PLB) $\rho^2 = 0.07 + 2.56 = 2.63$. Proposition~\ref{propReleaseF} implies the invariants $\cTDA(\Xp, \Xh)$---considered as a data release mechanism---satisfies \SpecTDA\ with $\rho^2 = 0$. Therefore, the composed mechanism $\bm T = [\bm T_{ph}, \cTDA]$ has budget $\rho^2 = 2.63$. The second step of the TDA is postprocessing on $\bm T$ and hence has the same budget.

The argument for producing the Demographic and Housing Characteristics File is almost analogous. The composed mechanism $\bm T_{ph} = [\bm T_{p}, \Th]$ has budget $\rho^2 = 7.70+4.96=12.66$. Now the second step of the TDA also uses the PL $\bm P$. Hence, this second step is postprocessing on the composed mechanism $[\bm T_{ph}, \bm P, \cTDA]$. This composed mechanism has budget $\rho^2 = 12.66 + 2.63 + 0 = 15.29$. 

The second half of the theorem follows from Proposition~\ref{propReleaseF}. (Hence it can be generalized from $(\Xcef, \scD_{\bm c'}, \dHamS^p, \Dnor)$ to any DP flavor $(\Xcef, \scD, \dX, \dT)$ satisfying the assumptions of this proposition.)
\end{proof}

The second step of the TDA requires access to both the NMFs $[\Tp(\Xp), \Th(\Xh)]$ and the invariant statistics $\cTDA(\Xp, \Xh)$ computed on the Census Edited File (CEF). Under the DP flavor $(\Xcef,\allowbreak \{\Xcef\},\allowbreak \dX,\allowbreak \dT)$, the invariant statistics $\cTDA(\Xp,\Xh)$ cannot be released with finite budget. So the second step of the TDA is not actually postprocessing under this flavor---it is only postprocessing when conditioning on the invariants. %
In fact, the second half of Theorem~\ref{thmTDASatisfiesDP} shows that any argument that relies on the TDA's second step being postprocessing must necessarily use a DP flavor that conditions on the invariants $\cTDA$.

To avoid inflating the PLB by a factor of 99,999, it is necessary to use person-records as the resolution of the Hamming distance in the TDA's DP specification. While the household mechanism $\Th$ satisfies \FlavhTDA, the sensitivity of the person-level query $\Qp$ due to a single change in a household-record is 99,999. This is because the maximum possible household size in the CEF is 99,999 \citep{populationreferencebureauDisclosureAvoidance20202023}. This means $\Tp$ can satisfy \FlavhTDA\ only if the PLB is amplified by 99,999. %

\end{document}